\newcommand{\mi}[1]{\ensuremath{\mathit{#1}}}
\newcommand{\mtt}[1]{\ensuremath{\mathtt{#1}}}
\newcommand{\mc}[1]{\ensuremath{\mathcal{#1}}}
\newcommand{\ms}[1]{\ensuremath{\mathsf{#1}}}
\newcommand{\mb}[1]{\ensuremath{\mathbb{#1}}}
\newcommand{\OB}[1]{\ensuremath{\overline{#1}}}
\newcommand{\divrjem}[0]{\ensuremath{\src{\Uparrow}}\xspace}
\newcommand{\divraim}[0]{\ensuremath{\trg{\Uparrow}}\xspace}
\newcommand{\termjem}[0]{\ensuremath{\src{\Downarrow}}\xspace}
\newcommand*{\QEDA}{\hfill\ensuremath{\blacksquare}}%
\Crefname{lstlisting}{Listing}{Listings}
\Crefname{problem}{Problem}{Problems}
\newenvironment{proofsketch}{\trivlist\item[]\emph{Proof Sketch}.\xspace}{\unskip\nobreak\hskip 1em plus 1fil\nobreak$\Box$\parfillskip=0pt\endtrivlist}
\newcommand{\compskel}[3]{\ensuremath{\bl{\llbracket} #1 \bl{\rrbracket}^{\bl{#2}}_{\bl{#3}}}}
\newcommand{\comp}[1]{\ensuremath{ \bl{(\!|}#1\bl{|\!)} }}
\newcommand{\compaim}[1]{\compskel{#1}{\jem}{\aim}}
\newcommand{\funname}[1]{\mtt{#1}}
\newcommand{\fun}[2]{\ensuremath{{\bl{\funname{#1}(}#2{\bl{)}}}}\xspace}
\newcommand{\dom}[1]{\fun{dom}{#1}}
\newcommand{\protname}[0]{protect}
\newcommand{\prot}[1]{\fun{\protname}{#1}}
\newcommand{\jem}[0]{\ms{J}{\scriptsize\ms{EM}}\xspace}
\newcommand{\aim}[0]{\ms{A}{\scriptsize\ms{IL}}\xspace}
\newcommand{\sys}[0]{\compaim{\src{Sys}}\xspace}
\newcommand{\PMA}[0]{\mi{PMA}\xspace}
\newcommand{\ASLR}[0]{ASLR\xspace}
\newcommand{\stlccol}[0]{ForestGreen}
\newcommand{\ulccol}[0]{WildStrawberry}
\newcommand{\neutcol}[0]{black}
\newcommand{\col}[2]{\ensuremath{{\color{#1}{#2}}}}
\newcommand{\src}[1]{\col{\stlccol}{{#1}}}		
\newcommand{\trg}[1]{\col{\ulccol }{{#1}}}		
\newcommand{\bl}[1]{\ensuremath{{\col{\neutcol}{#1}}}}
\newcommand{\fail}[0]{\mi{fail}}
\newcounter{typerule}
\crefname{typerule}{rule}{rules}
\newcommand{\typeruleInt}[5]{
	\def\thetyperule{#1}%
	\refstepcounter{typerule}%
	\label{tr:#4}%
  \ensuremath{\begin{array}{c}#5 \inference{#2}{#3}\end{array}} 
}
\newcommand{\typerule}[4]{
  \typeruleInt{#1}{#2}{#3}{#4}{\textsf{\scriptsize ({#1})} \\      }
}
\newcommand{\myfig}[3]{\begin{figure} [!h]
#1
\vspace{-2em}
\caption{\label{fig:#2}#3}
\end{figure}}
\newcommand{\mytab}[3]{\begin{table} [!h]
\centering \small
#1

\caption{\label{tab:#2}#3}
\end{table}}
\newcommand{\mytabnocap}[1]{\begin{table}[!h]
\centering \small
#1
\end{table}}
\newcommand{\etal}[0]{\textit{et al.}\xspace}
\newcommand{\comm}[1]{}
\newcommand{\longcol}[0]{0.3}
\newcommand{\shortcol}[0]{0.11}
\newcommand{\colleng}[0]{0.49}
\newcommand{\fcol}[0]{0.2}
\newcommand{\algo}[1]{\ensuremath{\langle\!\langle#1\rangle\!\rangle}}
\newcommand{\skel}[1]{\fun{skeleton}{#1}}
\newcommand{\emul}[1]{\fun{emulate}{#1}}
\newcommand{\diff}[1]{\fun{diff}{#1}}
\newcommand{\traces}[2]{\ensuremath{\ms{Traces}_{#2}(#1)}}
\newcommand{\taim}[1]{\traces{#1}{\aim}}
\newcounter{line}
\newcommand{\asm}[1]{\trg{\mtt{#1}}}
\newcommand{\xto}[1]{\ensuremath{\xrightarrow{~#1~}}}
\newcommand{\xtol}[1]{\ensuremath{\xrightarrow{~#1~}\low}}
\newcommand{\Xtol}[1]{\ensuremath{\xRightarrow{~#1~}\Low}}
\newcommand{\tol}[0]{\ensuremath{\to\low}}
\newcommand{\low}[0]{\ensuremath{\!\!\!\!\to} }
\newcommand{\Low}[0]{\ensuremath{\!\!\!\!\Rightarrow} }
\newcommand{\acp}[1]{\ms{#1}}
\newcommand{\blk}[0]{\ensuremath{(\unk,\trg{m},\trg{\OB{s}},\trg{h},\trg{t})}\xspace}
\newcommand{\unk}[0]{\ensuremath{\ms{unk}}}
\newcommand{\id}[0]{\mi{id}}
\newcommand{\toid}[0]{\ensuremath{\xto{\trg{\id}}}}
\newcommand{\tolid}[0]{\ensuremath{\xtol{\trg{\id}}}}
\definecolor{mygreen}{rgb}{0,0.6,0}
\definecolor{mygray}{rgb}{0.5,0.5,0.5}
\definecolor{mymauve}{rgb}{0.58,0,0.82}
\newcommand{\lstb}[1]{\ensuremath{\text{\texttt{\textbf{#1}}}}}
\newcommand{\lst}[1]{\ensuremath{\text{\texttt{{#1}}}}}
\lstdefinelanguage{Java} 
{morekeywords={abstract, all, and, as, assert, but, check, disj, else, exactly, extends, fact, for, fun, iden, if, iff, implies, in, Int, int, let, lone, module, no, none, not, one, open, or, part, pred, run, seq, set, sig, some, sum, then, univ, package, class, public, private, null, return, new, interface, extern, object, implements, System, static, super, try , catch, throw, throws, Unit, var, val, of, principal, trust},
sensitive=true,
keywordstyle=\bfseries\color{green!40!black},
commentstyle=\itshape\color{purple!40!black},
morecomment=[l][\small\itshape\color{purple!40!black}]{//},
identifierstyle=\color{blue},
stringstyle=\color{orange},
basicstyle=\small,
basicstyle={\small\ttfamily},
numbers=left,
numberstyle=\tiny\color{mygray},
tabsize=2,
numbersep=3pt,
breaklines=true,
lineskip=-2pt,
stepnumber=1,
captionpos=b,
breaklines=true,
breakatwhitespace=false,
showspaces=false,
showtabs=false,
float=!h,
columns=fullflexible,escapeinside={(*@}{@*)},
moredelim=**[is][\color{red!60}]{@}{@},
literate={->}{{$\to$}}1 {^}{{$\mspace{-3mu}\widehat{\quad}\mspace{-3mu}$}}1
{<}{$<$ }2 {>}{$>$ }2 {>=}{$\geq$ }2 {=<}{$\leq$ }2
{<:}{{$<\mspace{-3mu}:$}}2 {:>}{{$:\mspace{-3mu}>$}}2
{=>}{{$\Rightarrow$ }}2 {+}{$+$ }2 {++}{{$+\mspace{-8mu}+$ }}2
{<=>}{{$\Leftrightarrow$ }}2 {+}{$+$ }2 {++}{{$+\mspace{-8mu}+$ }}2
{\~}{{$\mspace{-3mu}\widetilde{\quad}\mspace{-3mu}$}}1
{!=}{$\neq$ }2 {*}{${}^{\ast}$}1 
{\#}{$\#$}1
}
\DeclareMathOperator\niff{\ensuremath{\mathrel{\iff\!\!\!\!\!\!\!\!\!\!/\ \ }}}
\DeclareMathOperator\nsimeq{\ensuremath{\mathrel{\not\simeq}}}
\DeclareMathOperator\nequiv{\ensuremath{\mathrel{\not\equiv}}}
\DeclareMathOperator\ceq{\ensuremath{\mathrel{\simeq_{\mi{ctx}}}}}
\DeclareMathOperator\nceq{\mathrel{\nsimeq_{\mi{ctx}}}}
\DeclareMathOperator\ceqjem{\src{\ceq}}                                      
\DeclareMathOperator\nceqjem{\src{\nceq}}                                    
\DeclareMathOperator\ceqaim{\trg{\ceq}}
\def\teqaux#1{\vcenter{\hbox{\ooalign{\hfil
       \raise6pt \hbox{\scriptsize{T}}\hfil\cr\hfil
       $=$}}}}
\def\teqa{\mathrel{\mathpalette\teqaux{}}}
\def\nteqa{\mathrel{\mathpalette\teqaux{}\!\!\!/\ }}
\DeclareMathOperator\teqaim{\ensuremath{\trg{\teqa}}}       
\DeclareMathOperator\nteqaim{\ensuremath{\trg{\nteqa}}}     
\def\ceqwaux#1{\vcenter{\hbox{\ooalign{\hfil
       \raise6pt \hbox{\scriptsize{w-b}}\hfil\cr\hfil
       $\ceq$}}}}
\def\ceqwf{\mathrel{\mathpalette\ceqwaux{}}}
\DeclareMathOperator\ceqw{\ceqwf}
\DeclareMathOperator\ceqwt{\trg{\ceqw}^{\aim}_{\jem}}
\def\praux#1{\vcenter{\hbox{\ooalign{\hfil
       \raise6pt \hbox{$\sqsubset$}\hfil\cr\hfil
       $\sim$}}}}
\def\pr{\mathrel{\mathpalette\praux{}}}
\DeclareMathOperator\preqaim{\trg{\pr}}
\newcommand{\labelfont}[1]{\ensuremath{\asm{#1}}}
\newcommand{\cl}[1]{\ensuremath{\labelfont{call}\ #1\trg{?}}}
\newcommand{\cb}[1]{\ensuremath{\labelfont{call}\ #1\trg{!}}}
\newcommand{\rt}[1]{\ensuremath{\labelfont{ret}\ #1\trg{!}}}
\newcommand{\rb}[1]{\ensuremath{\labelfont{ret}\ #1\trg{?}}}
\newcommand{\clgen}[0]{\ensuremath{\labelfont{call}}}
\newcommand{\rtgen}[0]{\ensuremath{\labelfont{ret}}}
\newcommand{\myparagraph}[1]{ \smallskip \noindent\noindent\textit{#1}~}
\DeclareMathOperator\compat{\ensuremath{\raisebox{1mm}{$\frown$}}}
\theoremstyle{plain}
\newtheorem{theorem}{Theorem}
\newtheorem{lemma}{Lemma}
\newtheorem{corollary}{Corollary}
\newtheorem{example}{Example}
\newtheorem{assumption}{Assumption}
\newtheorem{problem}{Problem}
\Crefname{assumption}{Assumption}{Assumption}
\crefname{assumption}{Assumption}{Assumption}
\theoremstyle{definition}
\newtheorem{definition}{Definition}
\newcommand{\mylink}[1]{\ensuremath{\fun{link}{#1}}}
\renewcommand{\longcol}[0]{0.7}
\renewcommand{\shortcol}[0]{0.15}
\renewcommand{\colleng}[0]{0.9}
\renewcommand{\fcol}[0]{0.4}
\author{ 
		 	Marco Patrignani$^1$
	\and	Dominique Devriese$^2$
	\and 	Frank Piessens$^2$
	\\
			$^1$ Max Planck Institute for Software Systems, Germany \\
	\and 	$^2$ iMinds-Distrinet, Dept.\ Computer Science, KU Leuven, Belgium \\ 
	\texttt{\{ first.last \}@\{ mpi-sws.org | cs.kuleuven.be \}}
}
\title{On Modular and Fully-Abstract Compilation -- Technical Appendix}
\begin{document}
\maketitle


\begin{abstract}
Secure compilation studies compilers that generate target-level components that are as secure as their source-level counterparts.
\emph{Full abstraction} is the most widely-proven property when defining a secure compiler.

A compiler is \emph{modular} if it allows different components to be compiled independently and then to be linked together to form a whole program.

Unfortunately, many existing fully-abstract compilers to untyped machine code are not modular.
So, while fully-abstractly compiled components are secure from malicious attackers, if they are linked against each other the resulting component may become vulnerable to attacks.

This paper studies how to devise modular, fully-abstract compilers.
It first analyses the attacks arising when compiled programs are linked together, identifying security threats that are due to linking.
Then, it defines a compiler from an object-based language with method calls and dynamic memory allocation to untyped assembly language extended with a memory isolation mechanism.
The paper provides a proof sketch that the defined compiler is fully-abstract and modular, so its output can be linked together without introducing security violations.
\end{abstract}

\begin{center}\small
This paper uses colours to distinguish elements of different languages; please print this in colour.
\end{center}

\section{Introduction}\label{sec:intro} 
A compiler is a tool that, among other things, translates programs written in a source language into programs written in a target language.
A compiler is \emph{secure} when it preserves all security properties of the components (i.e., partial programs) it inputs in the components it outputs.
Secure compilation studies compilers that generate target-level components that are as secure as their source-level counterparts.
Full abstraction is the most widely-adopted property to prove when defining a secure compiler~\cite{abadiFa,abadiLayout,Jagadeesan,scoo-j,AhmedFa,ahmedCPS,fstar2js,scoojoin,bugliesi,popldomi,nonintfree}.
A fully-abstract compiler preserves and reflects observational equivalence between the source components it inputs and the target components it produces.
Observational equivalence captures security properties such confidentiality, integrity, etc, and since a fully-abstract compiler preserves all of them, it is a secure compiler.

A compiler is \emph{modular} when it operates on components and its output can be \emph{linked} together into larger components (and possibly into whole programs).

With the advent of machine-code level security architectures (e.g., protected modules architectures (\PMA)~\cite{fidesArch,sancus,intel}, capability machines~\cite{cheri,capsicum}, micro-policies enforcing architectures~\cite{crashsafemachine}, address space layout randomisation~\cite{abadiLayout}, etc.) researchers have investigated secure (fully-abstract) compilation to such architectures~\cite{scoo-j,jannis,abadiLayout,Jagadeesan}.

For secure compilation to \PMA (informally, a low-level memory isolation mechanism), only compilation of a single protected module has been considered. 
Generalizing this to multiple modules, supporting modular compilation and linking of protected modules at machine code level, would be useful for a number of reasons: (i) code is easier to develop and distribute in standalone components and (ii) merging modules in a single one is not always possible nor desirable.
Concerning (i), standard arguments for separate compilation apply.
Particularly, it is more efficient to compile only the modules that changed and link it against the previously compiled other code as opposed to recompiling the entire code base. 
Moreover, it is common in practice and useful to use libraries that are built from safe source code but only available in compiled form. 
When targeting \PMA with attestation capabilities, a benefit of  having each module in a different protected module is that this makes it possible to attest them independently.
Concerning (ii), there can be several reasons why merging more modules in a single one is not desirable; even with a secure compiler, it can still be useful to protect securely compiled components from each other.
Consider a source language that allows programmers to write both safe and unsafe code at the same time (e.g., Rust, C\# unsafe blocks or the language considered by Juglaret \etal~\cite{catalin}).
A desirable property for a compiler for such a language is that unsafe code does not affect the safe one.
In this case, one can foresee splitting the safe and the unsafe program parts in two different modules and using the secure compiler presented in this paper for the safe part.
In this way, both the safe and the unsafe code cannot be directly tampered with by an attacker and the unsafe code can not affect the safe one.
Another setting where merging multiple modules is undesirable is when a programmer provides a hand-optimised securely-compiled component that performs better than the ones obtained by a secure compiler, but with the same security guarantees.
In this case, merging such a component with other ones in a single module seems undesirable, as the hand-made assembly can be error-prone and maliciously-crafted.
However, by using a modular secure compiler, other components can still interact with the hand-optimised one and no one's security can be tampered.



Extending the compiler to support modular compilation is, however, surprisingly complicated for a number of reasons:
\begin{enumerate}
\item\label{pt:probone} with a single protected module, all run-time meta information about the execution that needs to be protected (e.g.,  the call stack, or dynamic type information) can be stored within that single protected module. 
For a modular compiler, that state must be divided over the various protected modules, or -- possibly -- stored in a centralized trusted protected module.

\item\label{pt:probtwo} with a single protected module, object references are either private to the module or public. 
With support for multiple protected modules, object references can also be shared between some modules and still be unknown to other ones. 
This requires a mechanism for protecting object references that is more elaborate than what the existing secure compiler to \PMA uses.
\end{enumerate}

There are several interesting approaches to address these challenges.
This paper investigates how these issues can be addressed constructing a modular fully-abstract compiler for \PMA in the style of Intel SGX~\cite{intel} \emph{without} additional hardware support.
Even if this turns out to be intricate, we believe there is value in this approach, as this kind of hardware support is available on commercially available systems today, whereas more advanced hardware support is only available in research prototypes.

An alternative approach to modular, secure compilation would be to investigate what kind of novel hardware security architectures can help in addressing these challenges. 
This is the approach taken by Juglaret \etal~\cite{jannis} and it can likely lead to a secure compilation scheme. 
An important downside is that it may take a while before such hardware extensions are available in mainstream systems.

The main contribution of this paper is \compaim{\cdot}, a modular and fully-abstract compiler to SGX-like \PMA.
The source language of \compaim{\cdot} is \jem, an object-based imperative language with method calls and dynamic memory allocation.
The target language of \compaim{\cdot} is \aim, untyped assembly language with an explicit linking mechanism extended with \PMA.
\jem, \aim and \PMA are formalised in \Cref{sec:langform}.
Explicitly considering linking between assembly components generates a number of previously-unobserved problems that are presented in \Cref{sec:probs}.
The main contribution of this paper is the formalisation of \compaim{\cdot}, presented in \Cref{sec:seccomp}.
Supporting additional language features is discussed in \Cref{sec:disc}.
The proof sketch of \compaim{\cdot} being fully-abstract and modular is presented in \Cref{sec:proof}. 
Finally, \Cref{sec:rw} discusses related work and \Cref{sec:conc} concludes.

\section{Languages Formalisation}\label{sec:langform}
This section describes \jem (\Cref{sec:informje}) and \aim (\Cref{sec:informaim}), respectively the source and the target language of the secure compiler \compaim{\cdot}.

\subsection{The Source Language \jem}\label{sec:informje}
\jem is a strongly-typed, single-threaded, object-based imperative language that has \lstb{private} fields and \lstb{public} methods; it does not allow any undefined behaviour to arise.
\jem is presented in a \src{green} font (\Cref{fig:jemsyn}). 

\myfig{
\begin{align*}
&\mi{component} 		
	& \src{\mc{C}} ::=&\ \src{\OB{C}}
	 	\\
&\mi{classes}
	&\src{C} ::=&\ \lstb{import}\ \src{\OB{I}};\src{\OB{X}};\ \lstb{class}\ \src{c} \{\src{K}\ \src{\OB{F_t}}\ \src{\OB{M}}\}; \src{\OB{O}}
		\\
&\mi{objects}
	& \src{O} ::=&\ \lstb{object}\ \src{o}:\src{t} \{\src{\OB{F}}\}	
		\\
&\begin{multlined}\mi{class}\\\mi{declarations}\end{multlined}
	& \src{I} ::=&\ \lstb{class-decl}\ \src{c} \{\src{\OB{m : M_t}}\}		
		\\
&\begin{multlined}\mi{object}\\\mi{declarations}\end{multlined}
	& \src{X} ::=&\ \lstb{obj-decl}\ \src{o}:\src{t};
	\\
&\mi{methods}		
	& \src{M} ::=&\ \lstb{public}\ \src{m}(\src{\OB{x}}):\src{M_t}\ \{\lstb{return}\ \src{E};\}		
	\\
&\mi{signatures}	
	& \src{M_t} ::=&\ \src{t}(\src{\OB{t}})\to\src{t} 
		\\
&\mi{fields}			
	& \src{F} ::=&\ \lstb{private}\ \src{f}=\src{v}		
		\\
&\mi{field\ types}
	&\src{F_t} ::=&\ \src{f}:\src{t}
		\\
&\mi{constructors}	
	& \src{K} ::=&\ \src{c}(\src{\OB{f}}:\src{\OB{t}})\ \{\lstb{this}.\src{\OB{f'}}=\src{\OB{f}}\}
	\\
&\mi{types}			
	& \src{t} ::=&\ \src{\lstb{Unit}} \mid \src{\lstb{Bool}} \mid \src{\lstb{Int}} \mid \src{c} \mid \src{\lstb{Obj}} 
		\\
&\mi{values}		
	& \src{v} ::=&\ \src{\lst{unit}} \mid \src{\lst{true}} \mid \src{\lst{false}} \mid \src{n} \mid\src{o} 
		\\
&\mi{operations}		
	& \src{\mtt{op}} ::=&\ \src{+} \mid \src{-}\mid \src{==} \mid \cdots		
		\\
&\mi{expressions}	
	& \src{E} ::=&\ \src{v} \mid \src{x} \mid \src{E}.\src{f} \mid \src{E}.\src{f}=\src{E} \mid \src{E}.\src{m}(\src{\OB{E}}) 
	\\
	&&\mid&\ \src{E\ \mtt{op}\ E}  \mid  \lstb{new}\ \src{t}(\src{\OB{E}}) \mid \src{E};\src{E}  \mid \lstb{this} 
	\\
	&&\mid&\ \lstb{if}\ (\src{E})\ \{\src{E}\}\ \lstb{else}\ \{\src{E}\} \mid \lstb{exit}\ \src{E}
	\\
	&&\mid&\ \lstb{instanceof}(\src{E}:\src{c}) \mid \lstb{var}\ \src{x} : \src{t} = \src{E}
\end{align*}
}{jemsyn}{Syntax of \jem; lists of elements $a_1\cdots a_n$ are denoted as~\OB{a}.}
 A class \src{C} declares (external) classes and objects it requires (these are called \lstb{import} requirements) then it defines its constructor, fields, methods and objects implementing that class.
Objects of a class can only be allocated by methods of that class (so cross-component memory allocation happens via factory methods).
Class declarations \src{I} define class signatures, i.e., the class name and the methods implemented by that class. 
Object declarations \src{X} are references to objects implementing a different class.
A \jem component \src{\mc{C}} is a collection of classes \src{\OB{C}}.
If all \lstb{import} requirements of \src{\OB{C}} are satisfied by some other class in \src{\OB{C}}, then \src{\OB{C}} is a whole program.
Two components $\src{\mc{C}_1}$ and $\src{\mc{C}_2}$ satisfy each other, denoted with $\src{\mc{C}_1}\compat\src{\mc{C}_2}$, if all \lstb{import} requirements of \src{\mc{C}_1} are classes and objects in \src{\mc{C}_2} and vice-versa.

The top of the \jem class hierarchy is \src{\lstb{Obj}}, a class defining no methods.
All classes implicitly extend \src{\lstb{Obj}}; \jem does not provide any other form of inheritance.
Primitive types are \src{\lstb{Unit}}, inhabited by \src{\lst{unit}}, \src{\lstb{Bool}}, inhabited by \src{\lst{true}} and \src{\lst{false}} and \src{\lstb{Int}}, inhabited by natural numbers \src{n}. 
Identifiers for classes \src{c}, objects \src{o}, methods \src{m}, fields \src{f} and variables \src{x} are taken from distinct denumerable sets.

The semantics of \jem is standard and unsurprising; it is omitted for space reasons.

The security mechanism of \jem is given by \lstb{private} fields, which can be used to define security properties such as confidentiality and integrity (as defined in \Cref{sec:threatmodel}).

\subsubsection{Contextual Equivalence for \jem}\label{sec:ceqjem}
To reason about the behaviour of \jem components, contextual equivalence is used~\cite{lcfConsidered}.
Contextual equivalence is the coarsest relation that tells when two components are behaviourally equivalent; its definition (\Cref{def:ceqjem}) is rather standard~\cite{AhmedFa,ahmedCPS,scoo-j,fstar2js,abadiLayout,Jagadeesan,nonintfree,faEHM,gcFA}.

Contexts \src{\mb{C}} are partial programs with a hole ($[\cdot]$), formally $\src{\mb{C}} ::= \src{\OB{C}} [\cdot]$.
The plugging of a component \src{\mc{C}} in a context \src{\mb{C}}, denoted as $\src{\mb{C}}[\src{\mc{C}}]$, returns a whole program $\src{\mb{C}};\src{\mc{C}}$.
There are two (common) assumptions for the plugging to succeed.
The first one is that $\src{\mc{C}}\compat\src{\mb{C}}$, the second one is that \src{\mc{C}} and \src{\mb{C}} are well-typed.
If any assumption is not upheld, the plugging returns the empty program.
\begin{definition}[Contextual equivalence for \jem]\label{def:ceqjem}
$\src{\mc{C}_1}\ceqjem\src{\mc{C}_2}\triangleq\forall\src{\mb{C}}, \src{\mb{C}}[\src{\mc{C}_1}]\divrjem\iff\src{\mb{C}}[\src{\mc{C}_2}]\divrjem$, where $\divrjem$ means divergence, i.e., the execution of an unbounded number of reduction steps.
\end{definition}

\subsection{\PMA and the Target Language \aim}\label{sec:informaim}
\aim (acronym of \ms{A}ssembly plus \ms{I}solation and \ms{L}inking) is a low-level language that models a von Neumann machine enhanced with \PMA (\Cref{sec:pma}), with an idealised form of cryptographic nonces (\Cref{sec:coreaim}) and with an explicit linking mechanism (\Cref{sec:aimmodules}).
\aim is the adaptation of an analogous language that had no linking mechanism, no idealised cryptographic nonces and a single \PMA module~\cite{scoo-j,llfatr-j}.
After presenting \PMA and \aim, this section describes the semantics (\Cref{sec:dynsem-aim}) and contextual equivalence for \aim (\Cref{sec:ceqaim}).

\subsubsection{\PMA}\label{sec:pma}
\PMA is an assembly-level isolation mechanism based on program counter-based memory access control.
\PMA can be implemented in software (e.g., via a hypervisor)~\cite{trustVisor,nizza,fidesArch} or in hardware~\cite{intelattestation,intel,sancus}; Intel is bringing it to mainstream processors with the Intel SGX instruction set~\cite{intel}.
\PMA logically divides the memory space into several protected and one unprotected section; a protected section is called \emph{protected module}, each module has a unique \emph{module id}.
All protected sections are further divided into a \emph{code} and a \emph{data} section.
Code sections contain a variable number of \emph{entry points}: the only protected addresses to which instructions in unprotected memory or in other protected sections can jump.
Entry points allow code from within the module to interoperate with external code.
Data sections are only accessible from within the protected code section of the same module.
The table below summarises the \PMA access control model.
\begin{center}
\begin{tabularx}{\colleng\textwidth}{| X | c | c | c | c |}
\hline
\multirow{2}{*}{\raisebox{-4pt}{From}~$\setminus$~\raisebox{4pt}{To} }
	& \multirow{2}{*}{Unprotected }
		& \multicolumn{3}{| c |}{Protected } 
	\\

	& 
		& Entry Point & Code & Data 
		\\
\hline
Unprotected 
	& r w x 
		& x & & 
			\\
\hline
\multirow{4}{*}{Protected}
	& \multirow{4}{*}{r w x} 
		&\multicolumn{3}{| c |}{ Same id}  
			\\

	&
		& r x & r x & r w 
			\\
			\cline{3-5}
	&
		&\multicolumn{3}{| c |}{ Different id}  
			\\

	&
		& x & & 
			\\
\hline
\end{tabularx}
\end{center}

\subsubsection{Core \aim}\label{sec:coreaim}
\aim is presented in a \trg{pink} font (\Cref{fig:aimsym}).

\myfig{
\begin{align*}
&\mi{numbers} &\trg{n} ::=&\ \trg{n}\in\mb{N} 
		\\	
&\mi{words}& \trg{w} ::=&\ \trg{n} \mid \trg{\sigma} \mid \trg{\pi}
		\\
&\mi{addresses} &\trg{a} ::=&\ (\trg{id},\trg{n})
		\\
&\mi{memories} & \trg{m}::=&\ \OB{\trg{a\mapsto \trg{w}}}
	\\
&\mi{flags}& \trg{f}::=&\ \trg{\ms{ZF}}\mapsto\trg{0 \mid 1};\trg{\ms{SF}}\mapsto\trg{0 \mid 1}
		\\
&\mi{register\ files}& \trg{r} ::=&\ \OB{\asm{r}\mapsto\trg{w}}
	\\
&\mi{module\ ids} & \trg{\mi{id}} \in &\ \trg{\mc{ID}} \subset \mb{N}
		\\
&\mi{symbolic\ nonces} &\trg{\pi} \in&\ \trg{\Pi}
	\\
& \mi{module\ descriptors} &\trg{s}::=&\ (\trg{id},\trg{n_c},\trg{n})
		\\
&\mi{nonce\ oracles} &\trg{h} ::=&\ \trg{\OB{\pi}}
	\\
&\mi{symbols} & \trg{\sigma},\trg{\iota}  \in&\ \trg{\mc{S}}
	\\
&\mi{instructions}& \trg{i}\in &\ \trg{\mc{I}} \subset \mb{N}
\end{align*}
}{aimsym}{Formalisation of \aim (part 1: language).}
Words \trg{w} are either natural numbers \trg{n} (including instruction encodings and module ids), symbols \trg{\sigma} (explained in \Cref{sec:aimmodules}) or symbolic nonces \trg{\pi}.
Addresses \trg{a} are pairs of natural numbers and module ids, so a module with id \trg{id} has an infinite memory starting from address $(\trg{\trg{id},\trg{0}})$.
Unprotected memory has id~\trg{0}.
The registers file \trg{r} contains an unbounded number of registers \asm{r}.
The flags register \trg{f} contains a sign (\trg{\ms{SF}}) and a zero flag (\trg{\ms{ZF}}) that are set by arithmetic and testing instructions.
Memories $\trg{m}$ are lists of bindings from addresses to words.
Module descriptors $\trg{s}$ are triples that define a module memory layout: \trg{id} indicates the id of the module, $\trg{n_c}$ is the length (in number of addresses) of the code section and $\trg{n}$ is the number of entry points. 
Between each entry point there are a fixed amount of addresses that are not entry points, indicate this number of addresses with $\trg{\mc{N}_w}$.
The first entry point is located at address \trg{0}, the second one at address \trg{\mc{N}_w} and so on; the last one is located at address $\trg{n}\cdot\trg{\mc{N}_w}$ such that $\trg{n}\cdot\trg{\mc{N}_w}<\trg{n_c}$.
Symbolic nonces \trg{\pi} are non-guessable, unforgeable tokens; they model what a program could create using cryptographic primitives or unguessable randomisation~\cite{dolevyao,Sumii:2004:BDS:964001.964015}. For the sake of simplicity, they are considered to be \trg{0} for arithmetic operators.\footnote{
	Having arithmetic operations affect nonces would only let us model guessing attacks on them, but they are assumed to be resilient to these attacks so we chose this option for simplicity.
}
Nonce oracles \trg{h} are used as suppliers of fresh nonces for the \asm{new} instruction (see \Cref{tr:evalnew} in \Cref{fig:aim-dyn-sem}).  
Instructions $\trg{i}$ are elements of the set \trg{\mc{I}} (\Cref{tab:instructions}) that define the programming language executed on the architecture.
\mytab{
\begin{tabular}{p{\shortcol\textwidth} | p{\longcol\textwidth}}
\hline
\hline
\asm{movl\ r_d\ r_s\ r_i}
 	& Load the word from the address in registers (\asm{r_s},\asm{r_i}) into register \asm{r_d}.
 		\\
\asm{movs\ r_d\ r_s\ r_i}
	& Store the contents of register \asm{r_s} at the address found in registers (\asm{r_d},\asm{r_i}).
	\\
\asm{movi\ r_d\ k}
	& Load the constant value \asm{k} into register \asm{r_d}.
		\\
\asm{add\ r_d\ r_s}
	& Write \asm{r_d + r_s} into register \asm{r_d} and set the \trg{\ms{ZF}} flag accordingly.
		\\  	
\asm{sub\ r_d\ r_s}
	& Write \asm{r_d - r_s} into register \asm{r_d} and set both the \trg{\ms{ZF}} and the \trg{\ms{SF}} flags accordingly.
		\\  	
\asm{cmp\ r_s\ r_d}
	& Compare \asm{r_s} and \asm{r_d} and set the flags accordingly.
		\\  	
\asm{jmp\ r_d\ r_i}
	& Jump to the address located in register \asm{r_d} in the module with id \asm{r_i}.
		\\
\asm{je\ f_i\ r_i}
	& If flag \asm{f_i} is set, jump to the address in register \asm{r_i} in the current module.
		\\
\asm{zero}
	& Set all registers to \trg{0}.
		\\
\asm{new\ r_d}
	& Initialise register \asm{r_d} with a fresh symbolic nonce.
		\\
\asm{halt}
	& Stop the execution.
\end{tabular}
}{instructions}{Instruction set \trg{\mc{I}}.}


\subsubsection{Modules, Programs and Linking (\Cref{fig:aimsym2})}\label{sec:aimmodules}

To deal with linking, \aim has symbols \trg{\sigma}, \trg{\iota}. 
Symbols in a module are placeholders for words that will be filled when the module is linked against another one; symbols can be found in memory and in the symbol table.
A symbol table \trg{t} contains exported method and object bindings as well as required method and object bindings.
An exported method binding \trg{EM} (resp. object binding \trg{EO}) maps a method name \src{m} and its type \src{M_t} (resp. an object name \src{o} and its class \src{c}) to the address \trg{a} where the method (resp. class) is located.
A required method binding \trg{RM} (resp. object binding \trg{RO}) maps a method name \src{m} and its type \src{M_t} (resp. an object name \src{o} and its class \src{c}) to the id symbol \trg{\iota} and the number symbol \trg{\sigma} used for it.
A method (resp. object) export binding fulfils a method (resp. object) requirement if both bind the same method name and type (resp. object name and class).
\myfig{
\begin{align*}
&\mi{modules} &\trg{M} ::=& (\trg{m};\trg{s};\trg{t})
	&
&\mi{programs} &\trg{P} ::=& (\trg{m};\trg{\OB{s}};\trg{t})
\end{align*}
\vspace{-2.1em}
\begin{align*}
&\mi{symbol\ tables} &\trg{t} ::=&\ \trg{\OB{EM}};\trg{\OB{EO}};\trg{\OB{RM}};\trg{\OB{RO}}
	\\
&\mi{exported\ methods} & \trg{EM} ::=&\ \src{m}:\src{M_t}\mapsto\trg{a}
	\\
&\mi{exported\ objects} & \trg{EO} ::=&\ \src{o}:\src{c}\mapsto\trg{n}
	\\
&\mi{required\ methods} & \trg{RM} ::=&\ \src{m}:\src{M_t}\mapsto\trg{\iota};\trg{\sigma}
	\\
&\mi{required\ objects} & \trg{RO} ::=&\ \src{o}:\src{c}\mapsto\trg{\sigma}
\end{align*}
}{aimsym2}{Formalisation of \aim (part 2: linking).}

A single \aim module \trg{M} is a triple listing a memory \trg{m}, its module descriptor \trg{s} and a symbol table.
An \aim program \trg{P} is a collection of modules whose module descriptors are listed in \trg{\OB{s}}.
If the required bindings of a program are empty lists, then the program is whole.
When a program is not whole it is partial, i.e., it is a collection of modules with a symbol table with unfulfilled requirements. 
Informally, two \aim programs $\trg{P_1}$ and $\trg{P_2}$ satisfy each other, denoted with $\trg{P_1}\compat\trg{P_2}$, if all the required bindings of \trg{P_1} are fulfilled by a binding in \trg{P_2} and vice-versa.
Modules and programs are well-formed if the only symbols in their memory are those captured in the symbol table.
In the following, we only consider well-formed programs and modules.

Two memories (resp. two module descriptors) agree if their domains are disjoint (resp. if their ids are distinct).
Two \aim modules $\trg{P_1}$ and $\trg{P_2}$ can be joined together, denoted with $\trg{P_1}\trg{+}\trg{P_2}$, only if the two memories and if the memory descriptors agree.
Joining of modules results in a program that is the concatenation of the memories, memory descriptors and symbol tables of $\trg{P_1}$ and $\trg{P_2}$.
Joining of symbol tables results in a new symbol table.
If a requirement (method or object) is fulfilled by a requirement in another table, that requirement is removed from the resulting table.
Moreover, in the resulting memory, all symbols of fulfilled requirements are replaced with what is bound by the fulfiller requirement.
Consider $\trg{P_1}$ to require method $\src{m}:\src{M_t}\mapsto\trg{\iota};\trg{\sigma}$ and \trg{P_2} to export method $\src{m}:\src{M_t}\mapsto(\trg{id,n})$.
When merging $\trg{P_1}$ and $\trg{P_2}$, all occurrences of \trg{\iota} and \trg{\sigma} in the memory of \trg{P_1} will be replaced with \trg{id} and \trg{n}, respectively.
The resulting symbol table will no longer have the requirement for \src{m: {M_t}} as that has been fulfilled.

\subsubsection{Dynamic Semantics of \aim}\label{sec:dynsem-aim}
The dynamic semantics of whole programs ($\tol$) relates program states $(\trg{p},\trg{r},\trg{f},\trg{m},\trg{{s}},\trg{h})$ where \trg{p} is the program counter (i.e., an address).
This semantics relies on another one for modules ($\tolid$); the latter tells when a module with id \trg{id} performs a reduction (\Cref{tr:evalutop}).
\Cref{fig:aim-dyn-sem} shows an excerpt of the reduction rules.
Memory access is indicated as: $\trg{m}(\trg{a}) = \trg{w}$; memory update is indicated as $\trg{m}[\trg{a}\mapsto\trg{w}]$.
The same notation is adopted for register files and flags register access and update.
\begin{figure}
\centering
\small
	\typerule{Eval-new}{
		\trg{p}=(\trg{id},\trg{n})
		&
		\trg{m}(\trg{p}) = \asm{new\ r_d}
		&
		\trg{r'} = \trg{r}[\asm{r_d}\mapsto\trg{\pi}]
		&
		\trg{\pi}\notin\trg{h}
	}{
		(\trg{p},\trg{r},\trg{f},\trg{m},\trg{{s}},\trg{\pi};\trg{h})\tolid((\trg{id},\trg{n+1}),\trg{r'},\trg{f},\trg{m},\trg{{s}},\trg{h})
	}{evalnew}\quad
	\typerule{Eval-module}{
		(\trg{p},\trg{r},\trg{f},\trg{m},\trg{s},\trg{h}) \tolid (\trg{p'},\trg{r'},\trg{f'},\trg{m'},\trg{s'},\trg{h'})
		\\
		\trg{p}\vdash\acp{currentModule}(\trg{\OB{s}},\trg{s})
		&
		\trg{p}=(\trg{id},\trg{n})
	}{
		(\trg{p},\trg{r},\trg{f},\trg{m},\trg{\OB{s}},\trg{h}) \tol (\trg{p'},\trg{r'},\trg{f'},\trg{m'},\trg{\OB{s}},\trg{h'})
	}{evalutop}\quad
	\typerule{Eval-movs}{
		\trg{p}=(\trg{id},\trg{n})
		&
		\trg{m}(\trg{p})=  (\asm{movs\ r_d\ r_s\ r_i})
		\\
		\trg{\OB{s}}\vdash\acp{writeAllowed}(\trg{n}, (\trg{r}(\asm{r_d}),\trg{r}(\asm{r_i})))
		\\
		\trg{m'}=\trg{m}[(\trg{r}(\asm{r_d}),\trg{r}(\asm{r_i}))\mapsto\trg{r}(\asm{r_s})]
	}{
		(\trg{p},\trg{r},\trg{f},\trg{m},\trg{\OB{s}},\trg{h}) \tol ((\trg{id},\trg{n+1}),\trg{r},\trg{f},\trg{m'},\trg{\OB{s}},\trg{h})
	}{evalmovs}\quad
	\typerule{Eval-jmp}{
		\trg{p}=(\trg{id},\trg{n})
		&
		\trg{m}(\trg{p})= (\asm{jmp\ r_d\ r_i})
		&
		\trg{n'}= \trg{r}(\asm{r_d})
		\\
		\trg{id'}= \trg{r}(\asm{r_i})
		&
		\trg{r'}=\trg{r'}[\asm{r_0} \mapsto \trg{id}]
		&
		\trg{\OB{s}}\vdash\acp{validJump}(\trg{p},(\trg{id'},\trg{n'}))
	}{
		(\trg{p},\trg{r},\trg{f},\trg{m},\trg{s},\trg{h})\tol((\trg{id'},\trg{n'}),\trg{r'},\trg{f},\trg{m},\trg{s},\trg{h})
	}{evaljmp}\quad
\caption{An excerpt of the dynamic semantics of \aim{}}
\label{fig:aim-dyn-sem}
\end{figure}
Assumptions of the form $\trg{\OB{s}}\vdash\acp{name}(\trg{a},\trg{a'})$ model the enforcement of the \PMA access control policy; their names are self-explanatory.
For example, $\trg{s}\vdash\acp{writeAllowed}(\trg{n},\trg{n'})$ tells if an address \trg{n'} can be written from another address \trg{n} and $\trg{\OB{s}}\vdash\acp{validJump}(\trg{a},\trg{a'})$ tells if address $\trg{a'}$ is executable ($x$) from address \trg{a}.
$\trg{p}\vdash\acp{currentModule}(\trg{\OB{s}},\trg{s})$ tells whether the program counter \trg{p} is in a specific module whose descriptor is \trg{s} where \trg{s} is within \trg{\OB{s}}.
Instruction \asm{jmp} is the only instruction that can perform cross-module jumps (\Cref{tr:evaljmp}).
When a cross-module jump is performed, the \PMA architecture inserts the \trg{id} of the caller in register \asm{r_0}.
This functionality is called caller-callee authentication; some \PMA implementations provide it~\cite{fidesArch,sancus} and it is achievable on Intel SGX by means of the EREPORT instruction~\cite{intelsgxisa}.
Whenever the execution of an instruction violates the \PMA access control policy (e.g., jumping to a module address that is not an entry point), the execution is terminated. 

\subsubsection{Contextual Equivalence for \aim}\label{sec:ceqaim}
Analogously to \jem, to reason about \aim components we define when they are contextually equivalent.
Contexts \trg{\mb{P}} are partial programs with a hole ($[\cdot]$) that can be filled with a component in order to create a whole program, formally $\trg{\mb{P}} ::= \trg{P}[\cdot]$. 
Plugging a component \trg{P} in a context \trg{\mb{P}}, denoted as $\trg{\mb{P}}[\trg{P}]$, returns a whole program $\trg{\mb{P}}\trg{+}\trg{P}$, given that $\trg{P}\compat\trg{\mb{P}}$.

Contextual equivalence for languages with oracles relies on a notion of contextual preorder~\cite{oracle} which is needed to ensure obvious equivalences are satisfied (\Cref{ex:preorder}).
\begin{example}[Need for preorders]\label{ex:preorder}
Consider a module \trg{P_{n2}} that generates two nonces and returns only one and another one \trg{P_{n1}} that generates and returns one nonce only.
These programs are intuitively equivalent, but a contextual equivalence definition analogous to that for \jem simply does not yield this fact because we cannot quantify over all oracles and use the same oracle for both programs.
Given that \trg{P_{n1}} runs with an oracle \trg{h_1}, \trg{P_{n2}} has the same behaviour if it runs with an oracle \trg{h_2} that has all elements of \trg{h_1} interleaved with fresh elements.
Dually, given that \trg{P_{n2}} runs with an oracle \trg{h_2}, \trg{P_{n2}} has the same behaviour if it runs with an oracle \trg{h_1} that has only the even-numbered elements of \trg{h_2}.
\end{example}

\begin{definition}[Contextual preorder for \aim]\label{def:preaim}
$\trg{P_1}\preqaim\trg{P_2} \triangleq$ $\forall \trg{\mb{P}}, \trg{h_1}, \exists \trg{h_2}.\trg{\mb{P}}[\trg{P_1}],\trg{h_1}\divraim\Rightarrow\trg{\mb{P}}[\trg{P_2}],\trg{h_2}\divraim$, where $\trg{\mb{P}}[\trg{P}],\trg{h}$ indicates the initial state of $\trg{\mb{P}}\trg{+}\trg{P}$ with oracle \trg{h}.
\end{definition}
\begin{definition}[Contextual equivalence for \aim]\label{def:ceqaim}
$\trg{P_1}\ceqaim\trg{P_2} \triangleq \trg{P_1}\preqaim\trg{P_2} $ and $ \trg{P_2}\preqaim\trg{P_1}$.
\end{definition}

\section{Security Pitfalls}\label{sec:probs}
This section describes the threat model considered in this paper (\Cref{sec:threatmodel}) and the security pitfalls that arise when linking is explicitly considered (\Cref{sec:generalproblems}).

\subsection{The Threat Model}\label{sec:threatmodel}
The goal of this section is to familiarise the reader with the security aspects of secure compilation that are relevant for this paper.
We do not provide an in-depth analysis of these aspects as they are not the focus of this paper, though we believe these insights can be helpful for many readers.

Secure compilation papers generally consider a threat model for an attacker with target-level code injection capabilities that operates, for example, by exploiting a bug in the system and injecting or loading arbitrary target programs.
His goal is to violate the security properties of compiled programs found in the system.
Specifically, in this paper we consider an attacker that can load arbitrary malicious code.

In this paper, the system is the von Neumann machine formalised by the \aim language, where multiple protected modules are found.
The compiled programs that the attacker wants to violate span \emph{some} of the protected modules, as explained in \Cref{sec:intro}; their security properties must not be violated.
The attacker is modelled as code (and data) that spans unprotected memory as well as \emph{other} protected modules; thus the attacker is modelled as an \aim context.
This attacker can interact with any of the compiled programs whose security is of interest in any way that the assembly language (and the \PMA access control policy) allow him.
Thus, the attacker must respect the processor-enforced \PMA access control policy and, most importantly, he cannot tamper with it.

We are interested in program security properties that can be expressed by means of contextual equivalence; some examples include confidentiality and integrity.
A value is confidential if it cannot be discerned by other components than the one declaring it.
In other words, a value $\src{v}$ in a component $\src{\mc{C}}$ is confidential if $\src{\mc{C}}$ is contextually-equivalent to $\src{\mc{C}'}$ which is $\src{\mc{C}}$ with a different value for $\src{v}$.
Integrity of a value means that it cannot be modified by other components than the one declaring it.
In other words, a value $\src{v}$ in a component $\src{\mc{C}}$ has integrity if $\src{\mc{C}}$ is contextually-equivalent to $\src{\mc{C}'}$ which is $\src{\mc{C}}$ where every interaction with other component is followed by a check that the value of $\src{v}$ is the same as before the interaction.

The goal of the paper is to provide a compiler \compaim{\cdot} that produces secure \aim modules.
Thus, \compaim{\cdot} takes \jem components that possibly has some security properties and outputs \aim modules that has the same security properties.
If this holds, then the attacker's goal is nullified.

As compiler full-abstraction is preservation and reflection of contextual equivalence, proving \compaim{\cdot} to be fully-abstract implies that \compaim{\cdot} is secure.


\subsection{Security Problems Related to Linking}\label{sec:generalproblems}
\Cref{prob:guessing,prob:callstack,prob:types,prob:exist} are full-abstraction violations that arise in the presence of linking.
Any full-abstraction violation can be lead back to confidentiality or integrity violations.

\begin{problem}[Object id guessing]\label{prob:guessing}
Consider an object allocated at address \trg{a} in module \trg{M_1} and that is only shared between two modules \trg{M_1} and \trg{M_2}.
As object ids are just addresses in memory, nothing prevents another module \trg{M_3} to guess address \trg{a} and call methods on it.
Note however, that this violates the intended confidentiality of \trg{a}, that is supposed to be visible only to \trg{M_1} and \trg{M_2}.

A na\"ive solution to this problem is tracking which module will receive a certain object id in order to detect and stop guesses from modules that have not received that object id.
However, this does not scale to the case where \trg{M_2} forwards \trg{a} to \trg{M_3}.
Since the forwarding is done outside of \trg{M_1}, \trg{M_1} has no way of adding \trg{M_3} to the modules that are allowed to access \trg{a}.
As this may not be known statically, a different solution is needed.

To address this concern, target-level object ids must be unguessable.
\end{problem}

\begin{problem}[Call stack shortcutting]\label{prob:callstack} 
As all assembly languages, \aim calls and returns are jumps to addresses in memory.
Consider the following sequence of cross-module function calls: 
$\trg{M_2} \to \trg{M} \to \trg{M_1} \to \trg{M'}$, where primes are used to denote subsequent calls to the same module.

$\trg{M'}$ can return to \trg{M_2}, as it learnt the address to return there when \trg{M_2} called it.
However, that instruction bypasses the rest of the call stack and particularly $\trg{M_1}$ in a way that is not possible in the source language, where control flow follows a well-bracketed sequence of calls/returns.

To address this concern, securely compiled code must ensure a well-bracketed traversal of the stack.
\end{problem}

\begin{problem}[Types of objects in other modules]\label{prob:types}
Consider an object of type \src{t'} allocated at address \trg{a} in module \trg{M_1} and a method \src{m} compiled in a different module \trg{M_2} that takes a parameter of type \src{t}.
\src{t} and \src{t'} are different.
A third module \trg{M_3} could pass \trg{a} as an argument to the compilation of \src{m}, violating the \jem assumption that only well-typed programs are executed.
A dynamic check on the type of \trg{a} should be made, but the code of \src{m} resides in \trg{M_2} and it has no way to access \trg{a}, which resides in module \trg{M_1}.

To address this concern, dynamic typechecks must be made on all objects that are passed and received via methods and the encoding of the class of an object must be accessible outside of the module containing the object.
\end{problem}

\begin{problem}[Existence of objects in other modules]\label{prob:exist}
Analogously, to \Cref{prob:types}, in place of an object of the wrong type, \trg{M_3} code can call the compilation of \src{m} by passing a non-existent object id.
A non-existent object id is a non-\src{\lst{null}} object id that does not point to an object, so that calling methods on it will fail.

To address this concern, when securely compiled code receives an object from another module, it must assess whether that object exists or not.
\end{problem}

%

%

\section{A Secure Compiler from \jem to \aim}\label{sec:seccomp}
\compaim{\cdot} is a modular, two-step compiler that inputs a \jem class \src{C} and returns two \aim modules.
Formally:
\begin{align*}
	\compaim{\src{C}} =&\ \prot{\comp{\src{C}}}\trg{+} \sys.
\end{align*}
\compaim{\cdot} consists of a call to the \comp{\cdot} function (\Cref{sec:comp}), followed by a call to the \prot{\cdot} one (\Cref{sec:prot}). 
The result is joined with an additional module \sys (\Cref{sec:sys}).

Intuitively, \comp{\cdot} is a compiler responsible for correctly translating classes, methods and objects to assembly.
\prot{\cdot} is a wrapper around the code generated by \comp{\cdot} that prevents direct access to that code and ensures that any interaction with that code is regulated to behave as valid \jem code.
This wrapping ensures that the output of \compaim{\cdot} is secure.
\sys is a central, trusted module that operates as a monitor, regulating the structure of function calls and keeping information of allocated objects.

\compaim{\cdot} is modular, i.e., it can be applied to \jem components
\src{\mc{C}}. In this case \compaim{\cdot} recursively calls itself on
all classes composing the \jem component. The secure linker
\mylink{\cdot} combines all compiled components to a low-level
component (\Cref{sec:mylink}). Formally:
\begin{align*}
	\compaim{\src{C_1};\cdots;\src{C_n}} =&\ \mylink{\compaim{\src{C_1}}, \cdots, \compaim{\src{C_n}}}
\end{align*}

The overall structure of the compiler proposed in this paper is similar to the one for a single-module version of \PMA~\cite{scoo-j}, but two important extensions are needed to handle the issues (\Cref{pt:probone,pt:probtwo}) mentioned in \Cref{sec:intro}:
\begin{enumerate}
\item we introduce a small centralized trusted module \sys that keeps information about the global control flow of the program, and about types of objects shared between multiple modules. 
A key challenge is to design this module such that the information it exposes does not break full abstraction (\Cref{sec:sys}).

\item we introduce unforgeable object references. 
Since we do not want to assume hardware support for this (as is done for instance in capability machines), we solve this by assuming the existence of a secure random number generator that can create unguessable random numbers (nonces) that are sufficiently large to make brute-force attacks infeasible (as achievable via the \mtt{rdrand} instruction of intel processors~\cite{rdrand}). 
We model these nonces symbolically (i.e., \trg{\pi}) and use them to represent references to objects outside of the module where the object is defined (\Cref{sec:dynmem}).
\end{enumerate}

\subsection{The First Step: \comp{\cdot} }\label{sec:comp}
\comp{\cdot} is a compiler that translates a single \jem class to \aim code, data and a symbol table.
Instead of giving a specific instance of \comp{\cdot}, we define what assumptions (\Cref{ass:comp-out,ass:comp-cc,ass:comp-corr}) such a compiler must uphold for the full-abstraction result to hold.
The full-abstraction result is achieved parametrically for any \comp{\cdot} that upholds those assumptions.

\begin{assumption}[Output of \comp{\cdot}]\label{ass:comp-out}
The compilation of a \jem class returns a memory \trg{m_c} providing code implementing methods \src{\OB{M}}, a memory \trg{m_d} providing data implementing objects \src{\OB{O}} and a symbol table \trg{t}.
\trg{t} contains exported and required methods and objects bindings sorted lexicographically by class and then method or object name~\cite{scoo-j}.

The produced code and data implement a class-local stack that is used to perform calls and returns between methods.

\comp{\cdot} is multi-entry i.e., it exports an address for each method that can be called.
\comp{\cdot} is single-exit i.e., all jumps outside the code are performed by a common piece of code located in a known part of the produced code: \trg{m_{exit}}.
\comp{\cdot} expects the address where \lstb{instanceof} is implemented to be supplied later, so a call to \lstb{instanceof} is compiled to symbols $(\trg{\iota_{inst}},\trg{\sigma_{inst}})$ which capture both parts of the address where \lstb{instanceof} is.
This is captured by the required bindings with a binding of the form $\src{instanceof}:\src{\lstb{Obj}}(\src{\lstb{Obj}},\src{\lstb{Obj}})\src{\lstb{Bool}}\mapsto (\trg{\iota_{inst}},\trg{\sigma_{inst}})$ which must be in \trg{t}.

\jem object ids \src{o} are compiled to numbers \trg{n} that point to a memory region with type information and fields.

Formally: $\comp{\lstb{import}\ \src{\OB{I}};\src{\OB{X}};\ \lstb{class}\ \src{c} \{\src{K}\ \src{\OB{F_t}}\ \src{\OB{M}}\}; \src{\OB{O}}} = (\trg{m_c}+\trg{m_{exit}}+\trg{m_{inst}}; \trg{m_d}; \trg{t})$.
\end{assumption}
We overload the \comp{\cdot} notation and use it to indicate values as are compiled by \comp{\cdot}.
Indicate the compilation of \src{\lst{unit}} as \comp{\src{\lst{unit}}} and the compilation of \src{\lst{true}} as \comp{\src{\lst{true}}}.
Analogously, indicate the encoding of \jem types \src{t} in \aim as \comp{\src{t}}.
For example, indicate the encoding of \src{\lstb{Unit}} as \comp{\src{\lstb{Unit}}} and the encoding of a class type \src{c} as a natural number $\comp{\src{c}}=\trg{n}$.

\begin{assumption}[Calling convention of \comp{\cdot}]\label{ass:comp-cc}
Registers are used according to the following calling convention.
\asm{r_0} is used by caller-callee authentication to store the caller module id.
\asm{r_1} to \asm{r_4} are used as general working registers, so they do not contain relevant information.
\asm{r_5} identifies the return address in a \asm{jmp} that models a method call; if \asm{r_5} is \trg{1}, then that \asm{jmp} is interpreted as a return. 
\asm{r_6} identifies the current object (\lst{this}) in a target-level method call or the returned value in a target-level return.
\asm{r_7} onwards are used to communicate method parameters.
\end{assumption}

\begin{assumption}[Restrictions of \comp{\cdot}]\label{ass:comp-res}
Compiled components do not use the $\lstb{exit}\ \src{E}$ expression.
Addiitonally, they do not read nor write to unprotected memory.
\end{assumption}

\begin{assumption}[Correctness of \comp{\cdot}]\label{ass:comp-corr}
\comp{\cdot} must be correct and adequate, i.e., it translates any \jem expressions and values into \aim code that behaves in the same way~\cite{KRLMLa,bistcc,leroy}.
\end{assumption}

\subsection{The System Module: \sys}\label{sec:sys}
The system module \sys is a central, trusted module that provides functionality used by all securely-compiled components.
\sys contains: a global store \mc{G} (\Cref{sec:globstore}), a global call stack \mc{S} (\Cref{sec:sys-callstack}) and functionality to interact with \mc{G} and \mc{S} (\Cref{sec:sysfuncs}); its definition is provided last (\Cref{sec:sysdef}).

In the following, when code \emph{aborts} we mean that all registers and flags are reset, then \asm{halt} is executed.

\subsubsection{The Global Store \mc{G}} \label{sec:globstore} 
\mc{G} tracks globally-known object ids (i.e., ids that are not just local to a module), their type and the module id where the object resides.
Formally $\mc{G}=\OB{\trg{w}\mapsto\trg{w'},\trg{id}}$, where \trg{w} is an object id and \trg{w'} is a class encoding.
Retrieval from \mc{G} is denoted with $\mc{G}(\trg{w})$ and addition to it is denoted with $\mc{G}+(\trg{w}\mapsto\trg{w'},\trg{id})$.

\sys provides two entry points where the following procedures are implemented: \fun{testObj}{\trg{w},\trg{w'}} and \fun{registerObj}{\trg{w},\trg{w'}}.
The former tells if an object \trg{w} exists and implements a certain class \trg{w'}.
The latter adds a new binding to \mc{G}, the new binding is added for the module whose id is in \trg{r}(\asm{r_0}), i.e., for the module that calls \fun{registerObj}{\cdot}.
Parameters \trg{w} and \trg{w'} for these procedures are expected respectively in registers \asm{r_7} and \asm{r_8}.
\mytabnocap{
\begin{tabular}{l}
	\fun{testObj}{\trg{w},\trg{w'}} 
	\\
	\hline
	if $\trg{w}\notin\dom{\mc{G}}$ then abort
	\\
	if $\mc{G}(\trg{w}) \equiv (\trg{w'},\trg{\_})$ then 
	\\
	\quad return \trg{0}
	\\
	else return \trg{1}
\end{tabular}
\begin{tabular}{l}
	\fun{registerObj}{\trg{w},\trg{w'}}
	\\
	\hline
	if $\trg{w}\in\dom{\mc{G}}$ then abort
	\\
	$\mc{G}+\trg{w}\mapsto\trg{w'},\trg{r}(\asm{r_0})$
	\\
	return \trg{0}
	\\
	\phantom{a}
\end{tabular}
}

\subsubsection{The Global Call Stack \mc{S}}\label{sec:sys-callstack}
\mc{S} tracks all \aim-level function calls.
Formally, $\mc{S}=\OB{(\trg{a})}$.

\sys provides the following procedures implemented at entry points: \fun{forwardCall}{} and \fun{forwardReturn}{}.
Both procedures use the following helper functions: \fun{resetFlags}{\cdot} sets flags to \trg{0}, \fun{resetRegisters}{\cdot} inputs the registers that need to be reset to \trg{0} and \fun{resetRegistersExcept}{\cdot}inputs the registers that need \emph{not} be reset to \trg{0}.

\fun{forwardCall}{} reads an address from \asm{r_3} and \asm{r_4} and forwards the call there.
Before forwarding, it stores in \mc{S} the addess where to return as passed via registers \asm{r_0} and \asm{r_5}.
In order to ensure a correct return, the procedure stores the address of the entry point for \fun{forwardReturn}{} (i.e., $3*\trg{\mc{N}_w}$) in \asm{r_5}.
Since the caller id is placed in \asm{r_0} by the caller-callee authentication mechanism of \PMA, it cannot be tampered with.
This procedure aborts if the module jumping here is the one that jumped here last (if any) or if the address where the call is forwarded is inside \sys.
\mytabnocap{
\begin{tabular}{l}
	\fun{forwardCall}{} 
	\\
	\hline
	If $\mc{S}\neq\emptyset$
	\\
	\quad Let $\mc{S} = (\trg{id},\trg{\_});\mc{S'}$
	\\
	\quad if $\trg{r}(\asm{r_0})==\trg{id}$ then abort
	\\
	if $\trg{r}(\asm{r_3})==\trg{1}$ then abort
	\\
	Push $(\trg{r}(\asm{r_0}),\trg{r}(\asm{r_5}))$ on \mc{S}
	\\
	Set \asm{r_5} to $3*\trg{\mc{N}_w}$
	\\
	\fun{resetFlags}{}
	\\
	\fun{resetRegisters}{\asm{r_0},\asm{r_1},\asm{r_2}}
	\\
	\asm{jmp\ r_3\ r_4}
\end{tabular}
}

\fun{forwardReturn}{} pops the head of the stack and returns there, setting \asm{r_5} to \trg{1} as expected in the case for returns.
If the call stack is empty, it aborts to prevent returning when no code was called.
If the id of the caller module stored on \mc{S} is different from the id of the module jumping to this procedure, it aborts, as it detects a non-well-bracketed execution flow.
\mytabnocap{
\begin{tabular}{l}
	\fun{forwardReturn}{} 
	\\
	\hline
	If $\mc{S} = \emptyset$ then abort
	\\
	Pop (\trg{id},\trg{n}) from \mc{S} into \asm{r_2} and \asm{r_1}
	\\
	if $\trg{id}\neq\trg{r}(\asm{r_0})$ then abort
	\\
	Set \asm{r_5} to 1;
	\\
	\fun{resetFlags}{}
	\\
	\fun{resetRegistersExcept}{\asm{r_1},\asm{r_2},\asm{r_5},\asm{r_6}} 
	\\
	\asm{jmp\ r_1\ r_2} \qquad\emph{// i.e., jump to address} (\trg{id},\trg{n})
\end{tabular}
}

The calling convention is updated as follows: calls to different modules must go via \sys, with registers \asm{r_3} and \asm{r_4} set to the method that needed to be called.
Both these procedures do not use registers \asm{r_6} onwards, so they do not alter the calling convention regarding parameters and returned values.

\subsubsection{Functionality of \sys}\label{sec:sysfuncs}
The addition of \sys may seem to violate full abstraction as it provides target-level functionality that are not available in \jem~\cite{abadiFa}.
Let us now see why this is not the case.

Procedure \fun{testObj}{\cdot} is analogous to the \lstb{instanceof} expression in \jem.
This procedure aborts when the object id parameter is not registered in \mc{G} i.e., an ill-formed execution of \lstb{instanceof}.
Aborting in this case lets \sys prevent object id guessing (\Cref{prob:guessing}).

Procedure \fun{registerObj}{\cdot} is analogous to object creation.
This procedure aborts when the object id is already registered; aborting in this case ensures that object ids are globally unique. 

Procedure \fun{forwardCall}{\cdot} is analogous to performing a function call.
This procedure aborts when the same module performs two calls in a row without there being a return in between or if it performs a call to \sys as those behaviours are not available in \jem.

Procedure \fun{forwardReturn}{\cdot} is analogous to returning from a function call.
This procedure aborts when a return is not made by the module who was called last.
This enforces well-bracketed control flow, preventing \Cref{prob:callstack} as well as external code returning when no method was called. 

All the procedures implemented in \sys add functionality in \aim that is already available in \jem, so adding \sys when linking \aim components does not violate full-abstraction.

\subsubsection{\sys Definition}\label{sec:sysdef}
Assume the functions provided by \sys span memory \trg{m}, whose size is \trg{n} addresses.
Let the data needed by these functions plus the initialisation of \mc{G} and \mc{S} span memory \trg{m'}.
\sys is always compiled to a module with \trg{id} \trg{1} (since \trg{0} is the id of unprotected code) and it is defined as follows:
	$$(\trg{m+m'};(\trg{1,n,4});\trg{EM_{to}},\trg{EM_{ro}},\trg{EM_{rc}},\trg{EM_{tr}};\trg{\emptyset};\trg{\emptyset};\trg{\emptyset})$$
\sys exports a method binding for each procedure it defines; it has no exported object bindings and no required bindings.
\begin{itemize}\small
\item $\trg{EM_{to}} = \src{testObj}:\src{\lstb{Obj}}(\src{\lstb{Obj}},\src{\lstb{Obj}})\src{\lstb{Bool}}\mapsto (\trg{1},\trg{0})$ 
\item $\trg{EM_{ro}} = \src{registerObj}:\src{\lstb{Obj}}(\src{\lstb{Obj}},\src{\lstb{Obj}})\src{\lstb{Unit}}\mapsto(\trg{1},\trg{\mc{N}_w})$
\item $\trg{EM_{rc}} = \src{forwardCall}:\src{\lstb{Obj}}(\cdot)\src{\lstb{Unit}}\mapsto(\trg{1},2*\trg{\mc{N}_w})$
\item $\trg{EM_{tr}} = \src{forwardReturn}:\src{\lstb{Obj}}(\cdot)\src{\lstb{Unit}}\mapsto(\trg{1},3*\trg{\mc{N}_w})$
\end{itemize}

\compaim{\cdot} knows the offset of all procedures defined in \sys, so it can perform calls to them.
For example a call to \fun{registerObj}{\cdot} is a \asm{jmp\ r_d\ r_i} where $\trg{r}(\asm{r_d})\mapsto\trg{\mc{N}_w}$ and $\trg{r}(\asm{r_i})\mapsto\trg{1}$.

\subsection{The Second Step: \prot{\cdot}}\label{sec:prot}
The \prot{\cdot} function is a wrapper that takes the memory generated by \comp{\cdot} and adds checks to it, making it secure. 
This section first presents the required helper functions to compile \jem features: dynamic memory allocation (\Cref{sec:dynmem}), function calls (\Cref{sec:funcalls}) and outcalls (\Cref{sec:outcalls}).
The definition of \prot{\cdot} is provided last (\Cref{sec:protdef}). 

\subsubsection{Dynamic memory allocation}\label{sec:dynmem}
Dynamic memory allocation is the creation of objects at runtime via the \lstb{new} expression in \jem.
The representation of an object id within a compiled component does not change.
To ensure security of object ids when communicated between \aim modules, object ids are no longer just an address in memory, they are symbolic nonces.
The latter are called \emph{cross-module object ids} and they are denoted with \compaim{\src{o}}.
Formally: $\compaim{\src{o}} = \trg{\pi}$. 
Symbolic nonces cannot be forged nor guessed, so the format of cross-module object ids addresses \Cref{prob:guessing}.
Cross-module object ids being a symbolic nonce instead of an address does not disrupt the functionality of compiled \jem code, which functions in the same way as before.

To relate cross-module and internal object ids within a module, masking tables are used~\cite{scoo-j}.
A masking table, indicated with \mc{T}, is a bidirectional hash map between internal object id representations and cross-module ones; each module has its own masking table.
Formally $\mc{T}::=\OB{\trg{w}\mapsto\trg{\pi}}$.
Before an (internal) object id \trg{w} is first passed to external code, it is placed in the table.
This process is called \emph{masking}, thus the passed id \trg{\pi} is called the \emph{mask}.
Denote the retrieval of a mask \trg{\pi} with $\mc{T}(\trg{\pi})$ and the retrieval of an internal object id \trg{w} with $\mc{T}(\trg{w})$.
Any retrieval causes abortion if the element to be retrieved is not in \mc{T}.
Adding an internal object id $\trg{w}$ to a table \mc{T} is denoted with $\mc{T}+\trg{w}$; the binding $\trg{w}\mapsto\trg{\pi}$, where \trg{\pi} is fresh, is added to \mc{T}.

Next are the functions used to manage masking tables.

A compiled component must store the encoding of the class it contains for helper functions to rely on.
Function \fun{classOf}{\trg{n}} returns the encoding of the class implemented by the object compiled at address \trg{n} in the current module.
Function \fun{isInternal}{\trg{w}} takes a class type encoding and returns true if the current module implements that class.

Function \fun{updateMaskingTable}{\trg{w},\trg{n}} inputs a pair of an internal object id and a class type encoding.
This function is invoked before releasing an object to external code to add the id of freshly-allocated object to the masking table.
This function retrieves the class of the object and then checks if that object id is already in the masking table; if not, it adds the id to the table and registers its cross-module id globally.
Retrieving the class object may be necessary, as some arguments may have formal type \src{\lstb{Obj}}.
A crucial part of this function is the call to \fun{registerObj}{\mc{T}(\trg{w}),\trg{n'}}, which makes the information that object \trg{w}, with cross-module id $\mc{T}(\trg{w})$, implements class \trg{n'} globally available via \sys.
\mytabnocap{
\begin{tabular}{c l}
	&
	\fun{updateMaskingTable}{\trg{w},\trg{n}}
	\\
	\hline
	&
	if $\trg{n}$ \emph{is a class type encoding} then
	\\
	&
	\quad let \trg{n'} = \fun{classOf}{\trg{w}}
	\\
	&
	\quad if \fun{isInternal}{\trg{n'}} and if $\trg{w}\notin\dom{\mc{T}}$ then
	\\
	&
	\quad\quad \mc{T}+\trg{w}
	;
	\fun{registerObj}{\mc{T}(\trg{w}),\trg{n'}}
\end{tabular}
}

Function \fun{loadObjects}{\cdot} loads the internal object ids of the masking indexes that are passed as input into the related register.
\mytabnocap{
\begin{tabular}{c l}
	&
	\fun{loadObjects}{\trg{\pi_1},\cdots,\trg{\pi_n}}
	\\
	\hline
	&
	$\forall i \in 1..k$
	\quad 
	$\trg{r}(\asm{r_i}) \mapsto \mc{T}(\trg{\pi_i})$
\end{tabular}
}

Function \fun{maskingTable}{\trg{EO}} creates the masking table for all exported objects bindings listed in \trg{EO}.
\mytabnocap{
\begin{tabular}{c l}
	&
	\fun{maskingTable}{\src{o_1}:\src{c_1}\mapsto\trg{n_1},\cdots,\src{o_k}:\src{c_k}\mapsto\trg{n_k}} = \mc{T}
	\\
	\hline
	&
	$\forall i \in 1 .. k$ \quad let \trg{w_i} = \asm{new\ r_0} 
	\quad
	\mc{T} = $\trg{n_1}\mapsto\trg{w_1},\cdots,\trg{n_k}\mapsto\trg{w_k}$
\end{tabular}
}

\subsubsection{Function calls}\label{sec:funcalls}
Function calls are calls from outside to within a compiled module.
The \PMA access control policy ensures that these calls can only be calls to entry points.
Therefore, an entry point is created for all methods~\cite{scoo-j}.
Function \fun{methodEP}{\src{{M_t}},\trg{n}} creates the code to be placed at a method entry point for a method with signature \src{M_t} whose implementation is located at address $(\trg{id},\trg{n})$ in the current module with id \trg{id}.
Following is the pseudo-code of method entry points; notation $\trg{n}\mapsto\mtt{code}$ means that \trg{n} is the address where \mtt{code} is located.
\mytabnocap{
\begin{tabular}{c l}
	&
	\fun{methodEP}{\src{t}(\src{t_1},\cdots,\src{t_k})\to\src{t'},\trg{n}} =
	\\
	\hline
	&
	If $\trg{r}(\asm{r_0}) \neq \trg{1}$ or $\trg{r}(\asm{r_5})\neq 3*\trg{\mc{N}_w}$ then abort
	\\
	&
	\fun{loadObjects}{\trg{r}(\asm{r_6}),\trg{r}(\asm{r_7}),\cdots,\trg{r}(\asm{r_{6+k}})}
	\\
	&
	\fun{dynamicTypechecks}{\trg{r}(\asm{r_6}),\comp{\src{t}}}
	\\
	&
	for i = 1 to k
	\\
	&
	\quad \fun{dynamicTypechecks}{\trg{r}(\asm{r_{6+i}}),\comp{\src{t_i}}}
	\\
	$\trg{n'}\mapsto$
	&
	Set \asm{r_5} to \trg{n'+1} and jump to module-local address \trg{n}
	\\
	&
	\fun{updateMaskingTable}{\trg{r}(\asm{r_6}),\comp{\src{t'}}}
	\\
	&
	\fun{maskObjectId}{ 6 }
	\\
	&
	Set \asm{r_5} to \trg{1} and \asm{jmp} to the address (\trg{1},3*\trg{\mc{N}_w})
\end{tabular}
}

Upon jumping to an entry point, a check is made that the jump comes from the \sys module and that the return address is the \fun{forwardReturn}{\cdot} address (i.e., $3*\trg{\mc{N}_w}$ in a module whose id is \trg{1}).
If this is not the case, execution is aborted as some code is trying to bypass \sys.
All data needed for this check is known statically 
Then, masked object ids are loaded via function \fun{loadObjects}{\cdot}.
Only parameters of object type are loaded; if a parameter could not be loaded, the execution is aborted.
Once the objects are loaded, dynamic typechecks are made via function \fun{dynamicTypechecks}{\cdot} (explained below).
These checks are located inside the module where class \src{t} is compiled, so checking that the current object (\asm{r_6}) is of type \comp{\src{t}} is equivalent to checking that the current object implements the current method.
Then the code jumps to the method body located at address \trg{n} setting \asm{r_5} to the address where that code must return.
There, current arguments are placed on the module-local stack alongside other information required by function activation records.
When the method body returns (to address \trg{n'+1}), the masking table is (possibly) updated with the value to be returned (in \asm{r_6}) via function \fun{updateMaskingTable}{\cdot} and internal object ids in registers are masked via function \fun{maskObjectId}{\cdot}.

Let us now provide details about the auxiliary functions.

\fun{dynamicTypechecks}{\cdot} ensures that each parameter inhabits its type.
\src{\lstb{Unit}}-typed values are checked to be \comp{\src{\lst{unit}}} and \src{\lstb{Bool}}-typed ones are checked to be either \comp{\src{\lst{true}}} or \comp{\src{\lst{false}}}~\cite{scoo-j,fstar2js}.
Objects are dynamically typechecked by means of the \fun{testObj}{\cdot} function (as discussed in \Cref{sec:sys}).
If any check fails, the execution is aborted. 
By checking the existence and types of all parameters, securely compiled code is resilient to \Cref{prob:exist,prob:types}.
\mytabnocap{
\begin{tabular}{l}
	\fun{dynamicTypechecks}{\trg{w},\trg{n}}
	\\
	\hline
	if $\trg{n}\equiv\comp{\src{\lstb{Unit}}}$ then 
	\ if $\trg{w} \neq \comp{\src{\lst{unit}}}$ then abort
	\\
	if $\trg{n}\equiv\comp{\src{\lstb{Bool}}}$ then 
	\ if $\trg{w} \neq \comp{\src{\lst{true}}}$ and $\trg{w} \neq \comp{\src{\lst{false}}}$ then abort
	\\
	if \trg{n} \emph{is a class type} and $\trg{w} \neq \comp{\src{\lst{null}}}$ then 
	\\
	\quad if \fun{testObj}{\trg{w},\trg{n}} == \trg{1} then abort
\end{tabular}
}


Function \fun{maskObjectId}{\cdot} inputs the index of the register to mask and loads a cross-module object id there.
\mytabnocap{
\begin{tabular}{c l}
	&
	\fun{maskObjectId}{n} 
	\\
	\hline
	&
	$\trg{r}(\asm{r_n})\mapsto\mc{T}(\trg{r}(\asm{r_n}))$
\end{tabular}
}

\subsubsection{Outcalls}\label{sec:outcalls}
Outcalls are the dual of function calls, i.e., they are calls from within to outside a module.
To allow returning from outcalls, a specific entry point must be created: the \emph{return entry point}~\cite{scoo-j}.

Function \fun{preamble}{\cdot} returns what all code must execute before making an outcall,
function \fun{returnEP}{\cdot} returns the code \trg{m} to be placed at the return entry point.

\mytabnocap{
\begin{tabular}{l}
	\fun{preamble}{} = 
	\\
	\hline
	Let \trg{a} be the address where the external method is; $\trg{a}\equiv (\trg{id},\trg{w})$
	\\
	$\fun{signatureOf}{\trg{a}} = \comp{\src{t}}(\comp{\src{t_1}},\cdots,\comp{\src{t_n}})\to\comp{\src{t'}}$
	\\
	\fun{storeData}{\trg{r}(\asm{r_6}), \comp{\src{t'}}, \trg{r}(\asm{r_5})}
	\\
	for i = 1 to n
	\\
	\quad \fun{updateMaskingTable}{\trg{r}(\asm{r_{6+i}}),\comp{\src{t_i}}}
	\\
	\quad \fun{maskObjectId}{6+i}
	\\
	\fun{resetRegistersExcept}{\asm{r_0},\cdots,\asm{r_{6+n}}}
	\\
	Set \trg{r}(\asm{r_3}) to \trg{id} and \trg{r}(\asm{r_4}) to \trg{w}
	\\
	Call \fun{forwardCall}{\cdot}
\end{tabular}
}
The \emph{preamble} code is executed before jumping to an external method located at address $\trg{a}$, assume this address is communicated via registers \asm{r_0} and \asm{r_1}.
Compiled code will execute the \fun{preamble}{\cdot} before jumping outside to ensure the right checks are made.
\fun{signatureOf}{\cdot} is used to determine the signature of the compiled method related to \trg{a}.
All method signatures are known statically so their signature encoding can be stored in a table mapping addresses to signature encodings; the table is placed in the data section of the module.
This data is communicated to \prot{\cdot} via the required method bindings \trg{RM} returned by \comp{\cdot}.
\fun{preamble}{} then calls to \fun{storeData}{\cdot}, which stores the current object \asm{r_6}, the expected return type \comp{\src{t'}}, and where to resume the execution after the outcall, an address that is assumed to be passed in \asm{r_5}.
For any parameter, the masking table is updated with any possible newly created object via function \fun{updateMaskingTable}{\cdot} and their ids are masked with function \fun{maskObjectId}{\cdot}.
Then, registers that are not used to convey parameters nor the address where to jump are reset to 0 by function \fun{resetRegistersExcept}{\cdot} in order not to leak information (\sys will erase unused registers with index less than 6).
This ensures that \asm{r_5} contains \trg{0}, so \sys will return to the return entry point located at address \trg{0} when forwarding the return after this call.
Finally the code sets \asm{r_3} and \asm{r_4} as expected by \sys, then it jumps to the proxy function for method calls: \fun{forwardCall}{}.

\mytabnocap{
\begin{tabular}{c l}
	&
	\fun{returnEP}{} =
	\\
	\hline
	$\trg{0} \mapsto$
	&
	If $\trg{r}(\asm{r_0}) \neq \trg{1}$ then abort
	\\
	&
	\fun{loadData}{} \qquad\qquad \emph{// access} \trg{w_o}, \comp{\src{t'}}, \trg{n'}
	\\
	&
	\fun{loadObjects}{\trg{r}(\asm{r_6})}
	\\
	&
	\fun{dynamicTypechecks}{\trg{r}(\asm{r_6}),\comp{\src{t'}}}
	\\
	&
	Resume execution from address \trg{n'} with current object \trg{w_o}
\end{tabular}
}
When a return is made, if the module returning is not \sys (i.e., a module with id \trg{1}), then the code aborts, as some code is trying to bypass \sys.
Then, the current object \trg{w_o}, the expected return type \comp{\src{t}} and the address where to resume execution \trg{n'} are loaded from the module-local stack via function \fun{loadData}{\cdot} into registers \asm{r_7} onwards (since these registers are unused).
Finally, the returned value \asm{r_6} is checked to be of the expected type.
Execution aborts if any check fails, otherwise it resumes within the module with the loaded current object.

\subsubsection{The \prot{\cdot} function}\label{sec:protdef}
\prot{\cdot} is formalised in \Cref{tr:prot-def}.

To ensure that the compiler is correct, \prot{\cdot} needs to provide an implementation of \lstb{instanceof}, addressing \Cref{prob:types2} below.
\begin{problem}[Using \lstb{instanceof}]\label{prob:types2}
Consider a module \trg{M} that contains object \src{o} implementing class \src{c}.
Another module \trg{M_1} executes the following code: $\lstb{instanceof}(\src{o}:\src{c})$.
In order to tell if the test succeeds or not, the code of \trg{M_1} must know the class of \src{o}.
However, with the strong encapsulation provided by \PMA, that information resides in the memory of \trg{M}, which is not accessible by \trg{M_1}.
\end{problem}
To correctly implement cross-module \lstb{instanceof}, the class of an object needs to be publicly known, which is a functionality provided by \sys.
When calling \lstb{instanceof} on an object whose type is implemented in another module, it suffices to call \fun{testObj}{\cdot} to know if the test succeeds or not.
To ensure this happens, the required method binding for \src{instanceof} is replaced with a method binding for the same symbols to \fun{testObj}{\cdot}.
Linking to \sys will ensure that those symbols are replaced with the address of \fun{testObj}{\cdot}.

\mytabnocap{
\typerule{\prot{\cdot} definition}{
	\comp{\src{C}} =  (\trg{m_c}+\trg{m_{exit}}; \trg{m_d}; \trg{EM};\trg{EO};\trg{RM}+\trg{RM_i};\trg{RO})
	\\
	\trg{EM}=\src{m_1}:\src{{M_t}_1}\mapsto(\trg{id},\trg{n_1}),\cdots,\src{m_k}:\src{{M_t}_k}\mapsto(\trg{id},\trg{n_k})
	\\
	\trg{EM'} =\ \src{m_1}:\src{{M_t}_1}\mapsto\trg{a_1},\cdots,\src{m_k}:\src{{M_t}_k}\mapsto\trg{a_k}
	\\
	\trg{EO} = \src{o_1}:\src{c_1}\mapsto\trg{n_1''},\cdots,\src{o_j}:\src{c_j}\mapsto\trg{n_j''}
	\\
	\trg{EO'} = \src{o_1}:\src{c_1}\mapsto\mc{T}(\trg{n_1''}),\cdots,\src{o_j}:\src{c_j}\mapsto\mc{T}(\trg{n_j''})
	\\
	\trg{RM_i}= \src{instanceof}:\src{\lstb{Obj}}(\src{\lstb{Obj}},\src{\lstb{Obj}})\src{\lstb{Bool}}\mapsto (\trg{\iota_{inst}},\trg{\sigma_{inst}})
	\\
	\trg{RM_i'}= \src{testObject}:\src{\lstb{Obj}}(\src{\lstb{Obj}},\src{\lstb{Obj}})\src{\lstb{Bool}}\mapsto (\trg{\iota_{inst}},\trg{\sigma_{inst}})
	\\
	\trg{s}=(\trg{id},\trg{n'},\trg{k+1})
	&
	\fun{maskingTable}{\trg{EO}} = \mc{T}
	\\
	\trg{m_m} = \fun{implementationOf}{\mc{T}}
	&
	\trg{m_r} =\ \fun{returnEP}{}
	\\
	\trg{m_{ce}} =\ \fun{extraCode}{}
	&
	\trg{m_{de}} =\ \fun{extraData}{}
	\\
	\forall i\in 1..k
	&
	\trg{a_i} = (\trg{id},i\cdot\trg{\mc{N}_w})
	&
	\trg{{m}_i} =\ \fun{methodEP}{\src{{M_t}_i},\trg{n_i}}
	\\
	\trg{m_{exit}'} = \fun{append}{\trg{m_{exit}}, \fun{preamble}{\cdot}}
	\\
	\trg{m_{code}} = \trg{{m}_1} + \cdots + \trg{{m}_k} + \trg{m_r} + \trg{m_c} + \trg{m_{exit}'} + \trg{m_{ce}}
	\\
	\trg{n'} = |\trg{m_{code}}|
	&
	\trg{m} = \trg{m_{code}} + \trg{{m_d}} + \trg{m_{de}} + \trg{m_{m}}
}{
	\prot{\comp{\src{C}}} =\  \trg{m};\trg{{s}};\trg{EM'};\trg{EO'};\trg{RM}+\trg{RM_i'};\trg{RO}
}{prot-def}
}

\prot{\cdot} calculates the memory layout \trg{s} based on the functionality listed in \trg{EM}.
The memory created by \prot{\cdot} contains the following functionality.
Each method exported to external code is given an entry point with code created with the \fun{methodEP}{\cdot} function.
Function \fun{returnEP}{\cdot} yields memory \trg{m_r} for the return entry point.
The code section of \trg{m} is completed with the \fun{extraCode}{\cdot}, containing the implementation of all the helper functions described above (e.g., \fun{dynamicTypechecks}{\cdot}, \fun{resetFlagsAndRegs}{\cdot} etc.) and with \trg{m_c}, the code generated by \comp{\cdot}.
Additionally, the code section appends the \fun{preamble}{\cdot} procedure at the exit-point code provided by \comp{\cdot} to ensure that it is always executed before exiting the module.
After the code section, \trg{m} contains \fun{extraData}{\cdot}, i.e., all the data needed by the helper functions (e.g., the encodings of method signatures and of types) and by \trg{m_d}, the data generated by \comp{\cdot}.
Finally the data section contains the masking table \trg{m_m}, which is obtained with function \fun{maskingTable}{\cdot}; as memories are infinite, the data section of a module has no boundaries.

What \prot{\cdot} returns exports the same methods and objects as \comp{\cdot}.
The former are bound to entry points and the latter have their ids masked.

\subsection{The Secure Linker: \mylink{\cdot}}\label{sec:mylink}
This section presents function \mylink{\cdot}, which inputs and returns \aim modules; it is formalised as follows:
\begin{align*}
	\mylink{\trg{P},\trg{P'}} =&\ \trg{M_1}\mathbin{\trg{\uplus}}\cdots\mathbin{\trg{\uplus}}\trg{M_n}\mathbin{\trg{\uplus}}\trg{M_1'}\mathbin{\trg{\uplus}}\cdots\mathbin{\trg{\uplus}}\trg{M_m'}
	\\
	\text{where }&\ \trg{P}\equiv\trg{M_1}\mathbin{\trg{\uplus}}\cdots\mathbin{\trg{\uplus}}\trg{M_n}
	\\
	\text{and }&\ \trg{P'}\equiv\trg{M_1'}\mathbin{\trg{\uplus}}\cdots\mathbin{\trg{\uplus}}\trg{M_m'}
\end{align*}
Modules \trg{M_i} are obtained through calls to \compaim{\cdot}, they all are a pair of a compiled \jem class and an instance of \sys.
Informally, the $\mathbin{\trg{\uplus}}$ operator does the following:
\begin{itemize}
\item it performs \aim-level joining of modules (\trg{+}, presented in \Cref{sec:coreaim}) ensuring that only one occurrence of \sys is present in the resulting component;
\item it initialises the resulting \sys table \mc{G} with all exported object bindings for all \trg{M_i}, so that static objects are registered in \mc{G}.
\end{itemize}

\section{Discussion}\label{sec:disc}
This section presents how to extend \compaim{\cdot} to a source language supporting object-orientation (\Cref{sec:objor})  and how to support multi-register data (\Cref{sec:crypt}).

\subsection{Supporting Object-Orientation}\label{sec:objor}
\jem can be made object-oriented by adding
support for interfaces, inheritance and dynamic dispatch. We suggest doing
this by adding interfaces to \jem, making class types private to a
component and only using interface types in the types of
cross-component method calls and returns~\cite{javaJr,scoo-j}. Since
different \jem components can implement the same interface, a number
of concerns arise. Nevertheless, we believe the concerns can be
adequately addressed and this section also describes a possible change
to our compiler which we believe is a way to solve the concerns.

\begin{problem}[Module id at the target level]\label{prob:oidformat}
Consider two modules \trg{M_1} and \trg{M_2} containing the compilation of two classes that implements the same interface \src{i}.
Code in another module \trg{M} could input objects of that interface.
However, \trg{M} cannot be sure of where the object id is located unless it indicates its module id.
In fact, all that \trg{M} knowns when receiving an argument is the type \src{i}, but both \trg{M_1} and \trg{M_2} implement it. 
\end{problem}

To address \Cref{prob:oidformat}, cross-module object ids need to state the module id where they reside; formally $\compaim{\src{o}}=\trg{\pi},\trg{id}$.
Unless \aim is extended to merge this information in nonces, object ids span multiple words (\Cref{sec:crypt} describes how to securely compile them).

Often, object-oriented languages implements dynamic dispatch as vtables that are located at the address where an object is compiled.
With \PMA, \trg{M_1} has no access to the vtable of \src{o} if \src{o} is allocated in \trg{M_2}.
The single-module version of \aim~\cite{scoo-j} used entry points as a vtable but this does not hold when multiple modules are considered.
\begin{problem}[Dynamic dispatch]\label{prob:dyndis} 
Consider two modules \trg{M_1} and \trg{M_2} containing the compilation of two classes implementing the same interface \src{Bank}.
Module \trg{M_1} also implements interface \src{Account} while module \trg{M_2} also implements interface \src{Currency}.
Since methods need to be sorted alphabetically based on their namespaces (which include interface names), methods for \src{Bank} in \trg{M_1} will have different entry points than the same methods in \trg{M_2}.

A module \trg{M} interacting with \trg{M_1} and \trg{M_2} however should not need to know the full specification of which component contains which interfaces, as this would defeat the purpose of object orientation.
All that \trg{M} knows is that somewhere outside its memory, interface \src{Bank} is implemented.
When receiving an object from \trg{M_1} or from \trg{M_2} that implements the \src{Bank} interface, \trg{M} needs to be able to calculate where to jump in order to call methods on it.
\end{problem}

To address \Cref{prob:dyndis}, a single method entry point is created at address \trg{\mc{N}_w} instead of an entry point per method implemented in a component.
That entry point serves as a dynamic dispatch entry point.
All cross-module method calls update the calling convention to use \asm{r_6} as a container for the encoding of the method to be called.
Formally, indicate a method encoding as follows $\comp{\src{m}}=\trg{n}$.

When a module receives an object id of the form $\trg{\pi},\trg{id}$, it can call method \src{m} on it by jumping to address $(\trg{id},\trg{0})$ and setting \asm{r_6} to $\comp{\src{m}}$.
Parameters and the current object are then passed via registers \asm{r_7} onwards.

The dynamic dispatch entry point must perform a new check: \asm{r_6} must be a valid method encoding. 
All methods implemented in interfaces are known statically, so a module can save this information in its memory to encode this check.
If this check succeeds, the execution continues by dispatching to the checks for method entry points of \Cref{sec:prot}.

\subsection{Supporting Multi-Register Data}\label{sec:crypt}
In some cases, securely-compiled code needs to communicate data that spans multiple registers.
For example, the source language could be extended to support computation on complex numbers, or the format of object ids could vary (as discussed in \Cref{sec:objor}).
Communicating data that spans multiple registers can be done by extending \aim with cryptographic functions (either in the form of instructions, as presented here, or as module-internal procedures).


First, all modules must have a public and a private key, all modules know each other's public keys but the private one is confidential to the module~\cite{sancus}.
Second, cryptograhic functions for signing and verifying signatures are needed.
\begin{center}
\begin{tabular}{p{\fcol\textwidth} | p{\fcol\textwidth}}
\hline
\hline
\asm{sign\ r_s\ r_e\ r_k\ r_d}
 	& Sign all registers from \asm{r_s} to \asm{r_e} with the (private) key found in register \asm{r_k} and place the result in register \asm{r_d}.
 		\\
\asm{verify\ r_s\ r_e\ r_k\ r_d}
 	& Set the \trg{\ms{ZF}} according to whether the signature found in register \asm{r_d} has been created for all registers from \asm{r_s} to \asm{r_e} with the dual key of that found in register \asm{r_k}. 
\end{tabular}
\end{center}

Consider the case of multi-register object ids of \Cref{sec:objor}; cross-module object ids need to be changed into a triplet, so they span three registers.
The triplet consists of: a masking index (as in \Cref{sec:prot}), a type encoding and a signature of the two; the signature prevents \Cref{prob:oidshuff} below; Formally: $\compaim{\src{o}}=\trg{\pi},\trg{n},\trg{w}$.
\begin{problem}[Object-id shuffling]\label{prob:oidshuff}
Consider two objects \src{o_1} and \src{o_2} residing in two modules \trg{M_1} and \trg{M_2} implementing two different classes.
Without signatures, those objects would have the following cross-module object ids: $\compaim{\src{o_1}}=\trg{\pi_1},\trg{n_1}$ and $\compaim{\src{o_2}}=\trg{\pi_2},\trg{n_2}$.
An attacker can forge new object ids as follows: $\trg{\pi_1},\trg{n_2}$ and $\trg{\pi_2},\trg{n_1}$.
Another module \trg{M_3} receiving the forged objects from the attacker has no way to tell whether they are forged by inspecting them. 
\end{problem}
With the signature in place, the object ids of \src{o_1} and \src{o_2} change as follows: $\compaim{\src{o_1}}=\trg{\pi_1},\trg{n_1},\trg{w_1}$ and $\compaim{\src{o_2}}=\trg{\pi_2},\trg{n_2},\trg{w_2}$.
An attacker can still forge object ids by creating $\trg{\pi_1},\trg{n_2},\trg{w_2}$, but \trg{M_3} can verify the signature part of the triplet, therefore finding out when forgery has taken place.

With this approach, some auxiliary functions need to be changed.
Function \fun{loadObjects}{\cdot} needs to verify that the object ids of all externally-located modules have been signed with the proper key and that the key corresponds to the module that implements the class \trg{n} mentioned in the object id.
As a module implements a single class, the type encoding tells a module which public key to use for the verification.
Function \fun{updateMaskingTable}{\cdot} needs to ensure that communicated cross-module object id are signed.

Concerning \sys, \fun{registerObj}{\cdot} is not needed and the pseudo code inserted by \prot{\cdot} does not call it.
Procedure \fun{testObj}{\cdot} can be implemented locally in a module, since due to the new cross-module object ids the type of an object is part of its id.
\section{Full-Abstraction and Modularity of \compaim{\cdot}} \label{sec:proof}
As stated in \Cref{sec:intro}, a compiler is fully-abstract if it preserves and reflects contextual equivalence of source and target components.
This section briefly discusses how contextual equivalence is reflected (\Cref{sec:compcorrdir}) and preserved (\Cref{sec:compsecdir}) for \compaim{\cdot}.
Then it concludes by presenting the proof sketch of full-abstraction and modular full-abstraction of \compaim{\cdot} (\Cref{sec:allproofs}).

\subsection{\compaim{\cdot} Reflects Contextual Equivalence}\label{sec:compcorrdir}
Proving that \compaim{\cdot} reflects contextual equivalence (\Cref{thm:compaimcorr}) is analogous to proving that it is correct and adequate.
\begin{theorem}[\compaim{\cdot} preserves behaviour]\label{thm:compaimcorr}
$\forall\src{\mc{C}_1},\src{\mc{C}_2}.\compaim{\src{\mc{C}_1}}\ceqaim\compaim{\src{\mc{C}_2}}\Rightarrow\src{\mc{C}_1}\ceqjem\src{\mc{C}_2}$.
\end{theorem}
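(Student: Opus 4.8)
The plan is to prove the contrapositive: assuming $\src{\mc{C}_1}\nceqjem\src{\mc{C}_2}$, I construct a target context distinguishing $\compaim{\src{\mc{C}_1}}$ and $\compaim{\src{\mc{C}_2}}$, yielding $\compaim{\src{\mc{C}_1}}\nceqaim\compaim{\src{\mc{C}_2}}$. By \Cref{def:ceqjem} the hypothesis supplies a source context $\src{\mb{C}}$ with $\src{\mc{C}_i}\compat\src{\mb{C}}$ and, without loss of generality, $\src{\mb{C}}[\src{\mc{C}_1}]\divrjem$ while $\src{\mb{C}}[\src{\mc{C}_2}]\termjem$ (i.e., it does not diverge). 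The natural candidate target context is the compilation of the source context, $\trg{\mb{P}}\triangleq\compaim{\src{\mb{C}}}$, obtained by compiling the classes of $\src{\mb{C}}$ and leaving the hole to be filled by linking.

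First I would establish a compositionality property relating plugging and linking: compiling a plugged whole program agrees with linking the compiled context and the compiled component, i.e., $\compaim{\src{\mb{C}}[\src{\mc{C}}]} = \trg{\mb{P}}[\compaim{\src{\mc{C}}}] = \mylink{\compaim{\src{\mb{C}}},\compaim{\src{\mc{C}}}}$. This follows from the modular definition of $\compaim{\cdot}$ (its recursion over classes) together with $\mylink{\cdot}$ retaining a single shared $\sys$ instance. I also need that the preserved import/export structure of the symbol tables (\Cref{ass:comp-out}) makes $\compaim{\src{\mc{C}_i}}\compat\trg{\mb{P}}$, so that the target plugging is defined whenever the source plugging is.

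Next I would reduce the behavioural correspondence to compiler correctness. The key intermediate lemma is that on a \emph{whole} program all of whose modules are compiled from well-typed \jem source, none of the dynamic checks inserted by $\prot{\cdot}$ and $\sys$ ever abort: source typing guarantees the existence and type checks of \fun{dynamicTypechecks}{} succeed, object registration through \fun{registerObj}{} stays consistent, and the source call/return discipline guarantees the well-bracketing enforced by \fun{forwardCall}{} and \fun{forwardReturn}{}. Hence the wrappers are behaviourally transparent, and the termination behaviour of $\compaim{\src{\mb{C}}[\src{\mc{C}_i}]}$ coincides with that produced by $\comp{\cdot}$ alone, which by the correctness and adequacy assumption (\Cref{ass:comp-corr}) matches the source. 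Care is needed with the oracle quantifier of \Cref{def:preaim,def:ceqaim}: source divergence yields target divergence under \emph{some} oracle, whereas source convergence yields target convergence under \emph{every} oracle, since the oracle only supplies fresh nonces and cannot alter the control flow of a terminating run.

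Combining these, $\compaim{\src{\mb{C}}}[\compaim{\src{\mc{C}_1}}],\trg{h_1}\divraim$ for some oracle $\trg{h_1}$, while $\compaim{\src{\mb{C}}}[\compaim{\src{\mc{C}_2}}],\trg{h_2}$ does not diverge for every $\trg{h_2}$; this precisely negates $\compaim{\src{\mc{C}_1}}\preqaim\compaim{\src{\mc{C}_2}}$ and hence refutes $\compaim{\src{\mc{C}_1}}\ceqaim\compaim{\src{\mc{C}_2}}$ by \Cref{def:ceqaim}. The main obstacle is the transparency lemma: carefully checking that every abort condition introduced in $\sys$ and $\prot{\cdot}$ is unreachable along runs of a program whose every module is compiled from source, and lifting the single-class correctness of \Cref{ass:comp-corr} through the modular linker $\mylink{\cdot}$ to the whole compiled program, including the centralized $\sys$ module; the oracle bookkeeping is a secondary but genuine complication.
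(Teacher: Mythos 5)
Your proposal is correct, and it rests on the same two pillars as the paper's own sketch: the checks installed by \prot{\cdot} and \sys never fire when every participant in the execution is compiled from well-typed \jem code, and the correctness assumption on \comp{\cdot} (\Cref{ass:comp-corr}) transfers behaviour between source and target. The decomposition, however, is genuinely different. The paper never constructs a distinguishing context: it introduces the intermediate notion of well-behaved contextual equivalence $\ceqwt$, recasts correctness as $\src{\mc{C}_1}\ceqjem\src{\mc{C}_2}\iff\comp{\src{\mc{C}_1}}\ceqwt\comp{\src{\mc{C}_2}}$ (\Cref{ass:compcorr}), argues via the transparency of the wrappers that \compaim{\cdot} satisfies the same equivalence, and concludes by observing that $\ceqaim$ implies $\ceqwt$ because well-behaved contexts are a subset of all \aim contexts; since well-behaved contexts obey the \comp{\cdot} calling convention, the paper must additionally show how to retrofit them to the \sys-mediated convention of \compaim{\cdot}. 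You instead argue in contrapositive form and take the distinguishing target context to be the compiled source context itself, which obliges you to state and prove a compositionality property, $\compaim{\src{\mb{C}}[\src{\mc{C}}]}=\mylink{\compaim{\src{\mb{C}}},\compaim{\src{\mc{C}}}}$, that the paper never needs explicitly; in exchange, the calling-convention adaptation disappears (a context compiled by \compaim{\cdot} follows the convention by construction), and you make explicit the oracle asymmetry of \Cref{def:preaim}---divergence under \emph{some} oracle against convergence under \emph{every} oracle---which the paper's sketch silently glosses over. In short, yours is the more constructive and self-contained unfolding of the argument, essentially inlining a proof of the implication that the paper delegates to \Cref{ass:compcorr}; the paper's version is more modular, isolating all reasoning about contexts inside a reusable assumption, at the price of the extra adaptation step and of leaving the oracle bookkeeping implicit.
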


Intuitively, given \Cref{ass:comp-corr}, this holds because neither \prot{\cdot} nor the addition of \sys change the semantics of compiled programs.

\subsection{\compaim{\cdot} Preserves Contextual Equivalence}\label{sec:compsecdir}
Proving that \compaim{\cdot} preserves contextual equivalence (\Cref{thm:compaimsecure}) is analogous to proving that \compaim{\cdot} is secure.
\Cref{thm:compaimsecure} intuitively states that \jem abstractions are preserved in the \aim output \compaim{\cdot} produces.
\begin{theorem}[\compaim{\cdot} is secure]\label{thm:compaimsecure} 
$\forall\src{\mc{C}_1},\src{\mc{C}_2}.\src{\mc{C}_1}\ceqjem\src{\mc{C}_2}\Rightarrow\compaim{\src{\mc{C}_1}}\ceqaim\compaim{\src{\mc{C}_2}}$.
\end{theorem}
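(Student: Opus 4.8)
The plan is to prove the contrapositive: assuming $\compaim{\src{\mc{C}_1}}\nceqaim\compaim{\src{\mc{C}_2}}$, I would construct a \jem context that separates $\src{\mc{C}_1}$ and $\src{\mc{C}_2}$, thereby establishing $\src{\mc{C}_1}\nceqjem\src{\mc{C}_2}$. First I would unfold \Cref{def:ceqaim} and \Cref{def:preaim}: target inequivalence means that (without loss of generality) the preorder $\compaim{\src{\mc{C}_1}}\preqaim\compaim{\src{\mc{C}_2}}$ does not hold, so there exist a target context $\trg{\mb{P}}$ and an oracle $\trg{h_1}$ with $\trg{\mb{P}}[\compaim{\src{\mc{C}_1}}],\trg{h_1}\divraim$ while $\trg{\mb{P}}[\compaim{\src{\mc{C}_2}}],\trg{h_2}$ fails to diverge for every oracle $\trg{h_2}$. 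The goal then reduces to turning this single distinguishing \aim context into a distinguishing \jem context.

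To bridge the two levels I would route the argument through a fully-abstract trace semantics for \aim, writing $\taim{\trg{P}}$ for the set of traces a component produces against arbitrary contexts. The key lemma to establish (or invoke) is that trace equivalence coincides with $\ceqaim$, so that the distinguishing context above yields a \emph{differentiating trace} $\OB{\alpha}$ that $\compaim{\src{\mc{C}_1}}$ exhibits but $\compaim{\src{\mc{C}_2}}$ cannot. Since a trace records only the observable call/return and object-passing interactions across module boundaries, and since the \prot{\cdot} wrapper together with \sys filter every such interaction, each action in $\OB{\alpha}$ is guaranteed to be well-bracketed (via \fun{forwardCall}{}/\fun{forwardReturn}{}, addressing \Cref{prob:callstack}), to carry only registered, existing, correctly-typed object ids (via the dynamic typechecks and \fun{testObj}{\cdot}, addressing \Cref{prob:types,prob:exist}), and to use only object ids that the context legitimately received rather than guessed (via the unforgeable symbolic nonces, addressing \Cref{prob:guessing}).

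The heart of the proof is the \emph{back-translation}: from $\OB{\alpha}$ I would synthesise a \jem context $\src{\mb{C}}$ that drives the compiled components along exactly the interaction $\OB{\alpha}$. Each label maps to the corresponding source construct---an incoming call becomes a method invocation in $\src{\mb{C}}$, a passed object id becomes a \jem variable bound to an object received earlier, a type tag becomes a typed parameter or an \lstb{instanceof} test, and so on---so that $\src{\mb{C}}[\src{\mc{C}_1}]$ and $\src{\mb{C}}[\src{\mc{C}_2}]$ reproduce the divergence discrepancy at the source level. Correctness and adequacy of \comp{\cdot} (\Cref{ass:comp-corr}) together with \Cref{thm:compaimcorr} let me transport behaviour across the compiler, confirming that $\src{\mb{C}}$ separates $\src{\mc{C}_1}$ from $\src{\mc{C}_2}$ and hence $\src{\mc{C}_1}\nceqjem\src{\mc{C}_2}$.

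I expect the main obstacle to be the \emph{definability} half of the back-translation: proving that every differentiating target trace is realisable by some \jem context, i.e. that no target-level attacker has observational power exceeding a source context. This is exactly where the protective mechanisms must be shown to be tight, and the delicate point is the handling of symbolic nonces: because object ids are fresh nonces drawn from the oracle, the back-translation must reconstruct the context's knowledge set of received ids and show that every id appearing in $\OB{\alpha}$ was previously communicated rather than forged. The $\exists\trg{h_2}$ quantification in \Cref{def:preaim} is precisely what lets the reconstructed source context be matched against a suitably interleaved oracle, and making this nonce bookkeeping line up---so that freshness and non-guessability are faithfully mirrored at the source---is the crux of the argument. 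Finally, modular full abstraction follows by applying the same reasoning compositionally through \mylink{\cdot}, using that $\mathbin{\trg{\uplus}}$ retains a single trusted \sys whose exposed functionality (as argued in \Cref{sec:sysfuncs}) adds nothing beyond what \jem already provides.
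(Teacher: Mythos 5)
Your proposal is correct and takes essentially the same route as the paper: pass to the contrapositive, bridge target contextual equivalence to a fully-abstract trace semantics (the paper's \Cref{ass:fatracesaim}), and back-translate the differentiating trace into a \jem context, which is precisely the role of the paper's algorithm \algo{\cdot} and its correctness result (\Cref{thm:algocorr}). The obstacle you anticipate---definability and the bookkeeping of nonce knowledge---is exactly what the paper isolates in its emulation-failure lemma (\Cref{thm:maintheorem}).
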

For \Cref{thm:compaimsecure} we proceed as follows.
First, we devise a notion of trace equivalence $\teqaim$ for securely-compiled components; this is a slight adaptation of a similar trace semantics for the single-module version of \aim~\cite{llfatr-j}. 
Intuitively the trace semantics describes the behaviour of a set of modules with sets of traces, i.e., concatenation of actions such as call and return. 
Most importantly, trace semantics lets us disregard contexts when reasoning about modules behaviour.
We assume that $\teqaim$ is equivalent to $\ceqaim$, so the two notions can be exchanged. 
\begin{assumption}[Trace semantics coincides with contextual equivalence for securely compiled \jem components]\label{ass:fatracesaim}\label{thm:fatracesaim}
$\forall \src{\mc{C}_1}, \src{\mc{C}_2}. \compaim{\src{\mc{C}_1}}\ceqaim \compaim{\src{\mc{C}_2}} \iff \compaim{\src{\mc{C}_1}}\teqaim \compaim{\src{\mc{C}_2}}$.
\end{assumption}
Then we re-state \Cref{thm:compaimsecure} in contrapositive form: 
	$$\forall\src{\mc{C}_1},\src{\mc{C}_2}.\compaim{\src{\mc{C}_1}}\nteqaim\compaim{\src{\mc{C}_2}}\Rightarrow\src{\mc{C}_1}\nceqjem\src{\mc{C}_2}$$
To achieve $\src{\mc{C}_1}\nceqjem\src{\mc{C}_2}$ we need to show (by negating \Cref{def:ceqjem}) that there exist a context that, wlog, terminates with \src{\mc{C}_1} and diverges with \src{\mc{C}_2}.
This context is said to \emph{differentiate} between \src{\mc{C}_1} and \src{\mc{C}_2}.
For this, we devise an algorithm \algo{\cdot} that can always generate such a differentiating context given two \jem components whose compiled counterparts are trace-inequivalent~\cite{faSemUML,scoo-j,llfatr-j}. 
Since \algo{\cdot} is sketched to be correct (\Cref{thm:algocorr}), we can use it to witness that two \jem components are contextually-inequivalent if their compiled counterparts are trace-inequivalent.
\begin{theorem}[Algorithm correctness]\label{thm:algocorr}
$\forall\src{\mc{C}_1},\src{\mc{C}_2},$ $\trg{\OB{\alpha}\alpha_1}\in\taim{\compaim{\src{\mc{C}_1}}},$ $\trg{\OB{\alpha}\alpha_2}\in\taim{\compaim{\src{\mc{C}_2}}},$ $\trg{\alpha_1}\neq\trg{\alpha_2},$ 
$\algo{\src{\mc{C}_1},\src{\mc{C}_2},\trg{\OB{\alpha}\alpha_1},\trg{\OB{\alpha}\alpha_2}}=\src{\mb{C}}$ such that $\src{\mb{C}}[\src{\mc{C}_1}]\divrjem\niff\src{\mb{C}}[\src{\mc{C}_2}]\divrjem$.
\end{theorem}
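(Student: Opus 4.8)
The plan is to prove \Cref{thm:algocorr} by realising $\algo{\cdot}$ as a \emph{backtranslation} that turns the distinguishing pair of target traces into a differentiating \jem context, proceeding by induction on the length of the shared prefix $\trg{\OB{\alpha}}$. Each action in such a trace records one step of the interaction at a module boundary: an incoming or outgoing call or return carrying concrete argument and return words. Throughout the construction the algorithm maintains a partial map $\phi$ from the target-level entities that occur in the trace (the cross-module object ids of \Cref{sec:dynmem}, i.e.\ the nonces, and the current \lst{this}) to \jem source expressions and fresh context variables, so that the generated context can name and re-use exactly the values the target context would manipulate.

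First I would have $\algo{\cdot}$ emit, from the shared prefix $\trg{\OB{\alpha}}$, a collection of \jem classes and methods whose execution reproduces, action by action, the context's side of the recorded interaction. For every outgoing action of the compiled component (a \trg{!}-labelled call or return) the context \emph{receives} a value and binds it through $\phi$ to a fresh variable; for every incoming action (a \trg{?}-labelled call or return) the context \emph{performs} the matching \jem operation --- a method call on the object selected by $\phi$ with the recorded arguments, or a \lstb{return} of the recorded value. Because \compaim{\cdot} enforces well-bracketed control flow via \sys, runs dynamic type checks at every entry point, and uses unguessable nonces for object ids, every action appearing in $\taim{\compaim{\src{\mc{C}_i}}}$ is one that a legal \jem context could itself have produced; this is the structural property that makes the backtranslation total, and it is the fact I would lean on most.

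At the point where the traces diverge I would branch on the shapes of $\trg{\alpha_1}$ and $\trg{\alpha_2}$. If both are incoming actions, the divergence is already due to the context, so $\algo{\cdot}$ emits whichever operation produces $\trg{\alpha_1}$ and follows it with a \emph{divergence gadget} (a non-terminating recursive method call): executed against $\src{\mc{C}_1}$ this path reaches the gadget and diverges, whereas against $\src{\mc{C}_2}$ it cannot be followed and the program terminates. If instead $\trg{\alpha_1}$ and $\trg{\alpha_2}$ are outgoing actions that disagree --- a call or return carrying different words, or carrying object ids of different dynamic class --- then $\algo{\cdot}$ inserts a \jem test on the just-received value (an equality comparison for primitive results, or \lstb{instanceof} for object-typed results) whose two outcomes are wired to the divergence gadget and to immediate termination. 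In every case the resulting $\src{\mb{C}}$ satisfies $\src{\mb{C}}\compat\src{\mc{C}_i}$ and is well-typed, so the plugging is non-trivial.

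The correctness argument then closes the loop: appealing to \Cref{ass:comp-corr} and to the soundness of the trace semantics (\Cref{ass:fatracesaim}), I would show that executing $\src{\mb{C}}[\src{\mc{C}_i}]$ realises exactly the trace $\trg{\OB{\alpha}\alpha_i}$ up to the branch point, so the gadget is reached with precisely one of $\src{\mc{C}_1}$ and $\src{\mc{C}_2}$, giving $\src{\mb{C}}[\src{\mc{C}_1}]\divrjem\niff\src{\mb{C}}[\src{\mc{C}_2}]\divrjem$. The main obstacle will be the backtranslation of object identity across modules: the context observes a cross-module object only as an opaque nonce, yet $\phi$ must consistently reconstruct at the source level \emph{which} object each nonce denotes and re-present it wherever a later action demands it --- including objects forwarded between modules (\Cref{prob:guessing}) and objects whose static type is coarser than their dynamic class (\Cref{prob:types}). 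Keeping $\phi$ coherent under this sharing and subtyping, and proving the emitted \jem context can always re-supply the right bound object at the right action, is the technically delicate heart of the proof.
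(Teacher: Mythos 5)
Your overall route coincides with the paper's: \algo{\cdot} is realised as a trace-directed backtranslation (the paper's \skel{\cdot} plus \emul{\cdot}), your map $\phi$ plays the role of the paper's knowledge sets $M$ and $V$ and of the \src{Helper} tables (\Cref{lem:methemulcorr,lem:valemulcorr}), and your terminate-versus-diverge gadget is exactly the paper's $\lst{exit(1)}$ / $\lst{oc.diverge()}$ code emitted by \diff{\cdot}. The genuine gap is in the claim you say you would lean on most: it is \emph{false} that every action occurring in $\taim{\compaim{\src{\mc{C}_i}}}$ is one a legal \jem context could have produced. The trace semantics quantifies over arbitrary \aim contexts, so the trace sets contain prefixes in which the context performs ill-formed \trg{?}-actions --- passing $\comp{\src{\lst{true}}}$ where \src{\lstb{Unit}} is expected, supplying a non-existent or ill-typed object id, returning when nothing was called --- and the compiled component answers such an action by aborting, i.e.\ the action \emph{does} appear in the trace, followed by \trg{\surd}. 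The protection machinery of \compaim{\cdot} ensures ill-formed actions lead to \trg{\surd}; it does not keep them out of the trace set. Hence the backtranslation you describe is not total on trace-set members and your induction has no case for these actions. The paper confronts this head-on: \emul{\cdot} is allowed to \fail, and the pivotal result is \Cref{thm:maintheorem}, which proves $\trg{\OB{\alpha}\surd}\in\taim{\compaim{\src{\mc{C}}}} \iff \emul{\trg{\OB{\alpha}},\trg{t}}=\fail$ --- precisely the statement your totality assertion would need, and the one the paper singles out as its central technical result.

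A second, consequent defect is that your case analysis at the divergence point is incomplete. Since \trg{?}-actions are driven by the context and both compiled components export the same entry points (otherwise they are trivially differentiable), any \trg{?}-action enabled after the common prefix is enabled for \emph{both} components; a maximal common prefix therefore can never be separated by two distinct \trg{?}-actions, so your ``both incoming'' case is vacuous, and its justification --- that one component ``cannot follow'' a context-driven action --- is backwards. The cases that do occur, and that the paper's \diff{\cdot} enumerates, are the ones you omit: besides two differing \trg{!}-actions (your second case), one component may emit a call or return while the other aborts with \trg{\surd} (\Cref{tr:diff-ac2,tr:diff-ac3}), and one may emit an action while the other's trace simply stops because it diverges internally (\Cref{tr:diff-ct,tr:diff-rtt,tr:diff-lt}). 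Without these cases, and without \Cref{thm:maintheorem} to guarantee that whenever $\trg{\alpha_1}\neq\trg{\alpha_2}$ the common prefix itself is emulatable by a well-typed \jem context, the argument does not close.
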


\subsection{Full-Abstraction and Modularity}\label{sec:allproofs}
By the theorems of \Cref{sec:compcorrdir,sec:compsecdir}, \compaim{\cdot} is fully-abstract (\Cref{thm:compaimisfa}).
\begin{theorem}[\compaim{\cdot} is fully-abstract]\label{thm:compaimisfa}
$\forall\src{\mc{C}_1},\src{\mc{C}_2}.$ $\src{\mc{C}_1}\ceqjem\src{\mc{C}_2}\iff\compaim{\src{\mc{C}_1}}\ceqaim\compaim{\src{\mc{C}_2}}$.
\end{theorem}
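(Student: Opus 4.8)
The plan is to obtain the biconditional of \Cref{thm:compaimisfa} by simply assembling the two implications that have been established separately in the preceding subsections. A biconditional $P\iff Q$ is the conjunction of $P\Rightarrow Q$ and $Q\Rightarrow P$, so the only work is to match each direction to the theorem that discharges it; no fresh argument is required at this level.

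For the forward direction I would invoke \Cref{thm:compaimsecure}: fixing arbitrary \jem components $\src{\mc{C}_1},\src{\mc{C}_2}$ with $\src{\mc{C}_1}\ceqjem\src{\mc{C}_2}$, that theorem yields $\compaim{\src{\mc{C}_1}}\ceqaim\compaim{\src{\mc{C}_2}}$ directly. For the backward direction I would invoke \Cref{thm:compaimcorr}: assuming $\compaim{\src{\mc{C}_1}}\ceqaim\compaim{\src{\mc{C}_2}}$, that theorem yields $\src{\mc{C}_1}\ceqjem\src{\mc{C}_2}$. Since $\src{\mc{C}_1}$ and $\src{\mc{C}_2}$ are arbitrary and both theorems carry exactly the same universal quantification over components, the two implications compose into the required $\iff$ with no additional hypotheses. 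Moreover, because $\compaim{\cdot}$ on a multi-class component is by definition the secure linking $\mylink{\cdot}$ of the separately compiled classes, this full-abstraction statement \emph{for components} is precisely the modular result: separate compilation followed by linking both preserves and reflects contextual equivalence.

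At this granularity there is no genuine obstacle, since the difficulty has been pushed entirely into the two component theorems. The real burden sits in \Cref{thm:compaimsecure} (the preservation/security direction), which I would not expect to prove directly: it relies on \Cref{thm:fatracesaim} to exchange contextual equivalence $\ceqaim$ for the trace equivalence $\teqaim$, and then on the back-translation \Cref{thm:algocorr}, where the algorithm $\algo{\cdot}$ synthesises a differentiating \jem context from a pair of diverging traces. By contrast, \Cref{thm:compaimcorr} (reflection) is comparatively light, following from correctness and adequacy of \comp{\cdot} (\Cref{ass:comp-corr}) together with the observation that neither \prot{\cdot} nor the \sys module alters the semantics of compiled programs. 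Consequently, for \Cref{thm:compaimisfa} itself I would simply cite both directions and conclude.
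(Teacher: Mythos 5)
Your proposal is correct and matches the paper's own proof: the paper likewise splits the biconditional, discharging the $\Leftarrow$ direction with \Cref{thm:compaimcorr} and the $\Rightarrow$ direction with \Cref{thm:compaimsecure} (the paper merely makes explicit the restatement via \Cref{thm:fatracesaim} and the contrapositive, machinery that, as you note, properly belongs inside the proof of \Cref{thm:compaimsecure}). Your closing remark on modularity is a side observation rather than part of this proof — the paper handles it separately as a corollary, needing in addition that linking is congruent with $\ceqaim$ — but it does not affect the correctness of your argument here.
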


The novel result we are after for \compaim{\cdot} is \emph{modular full-abstraction}, i.e. components can be compiled separately and linked together afterwards without compromising security.
Indicate the linking of two \aim components \trg{P_1} and \trg{P_2} as $\mylink{\trg{P_1},\trg{P_2}}$.
This definition is derived from Ahmed's definition of horizontal compiler compositionality~\cite{amalverifcomp}; it states that we can create source-level components by joining an arbitrary number of them and by compiling them individually and then linking the result.
Additionally, we prove this result for arbitrary target-level components \trg{P} and \trg{P'} that are equivalent to securely-compiled ones.
\trg{P} and \trg{P'} can be seen as hand-optimised versions of \src{\mc{C}_2} and \src{\mc{C}_4} that respect the behaviour imposed by \compaim{\cdot}.
Formally, we have that \Cref{thm:compaimhorizcomp} is a corollary of \Cref{thm:compaimisfa}.
\begin{corollary}[Modular full-abstraction]\label{thm:compaimhorizcomp}
$\forall\src{\mc{C}_1},\src{\mc{C}_2},\src{\mc{C}_3},\src{\mc{C}_4}.$ 
$\forall\trg{P}.\compaim{\src{\mc{C}_2}}\ceqaim\trg{P},$ 
$\forall\trg{P'}.\compaim{\src{\mc{C}_4}}\ceqaim\trg{P'},$ 
$\src{\mc{C}_1};\src{\mc{C}_2}\ceqjem\src{\mc{C}_3};\src{\mc{C}_4}\iff$ $\mylink{\compaim{\src{\mc{C}_1}},\trg{P}}\ceqaim\mylink{\compaim{\src{\mc{C}_3}},\trg{P'}}$.
\end{corollary}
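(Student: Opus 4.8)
The plan is to derive the corollary as a consequence of compiler full abstraction (\Cref{thm:compaimisfa}), the defining equation $\compaim{\src{C_1};\cdots;\src{C_n}} = \mylink{\compaim{\src{C_1}},\cdots,\compaim{\src{C_n}}}$, and one new ingredient: that $\ceqaim$ is a \emph{congruence} for the secure linker \mylink{\cdot}. The whole argument then reduces to equational reasoning with $\ceqaim$.

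First I would eliminate the \jem side. By the defining equation of \compaim{\cdot} on components, $\compaim{\src{\mc{C}_1};\src{\mc{C}_2}} = \mylink{\compaim{\src{\mc{C}_1}},\compaim{\src{\mc{C}_2}}}$ and $\compaim{\src{\mc{C}_3};\src{\mc{C}_4}} = \mylink{\compaim{\src{\mc{C}_3}},\compaim{\src{\mc{C}_4}}}$. Applying \Cref{thm:compaimisfa} to the components $\src{\mc{C}_1};\src{\mc{C}_2}$ and $\src{\mc{C}_3};\src{\mc{C}_4}$ therefore yields
\[
\src{\mc{C}_1};\src{\mc{C}_2}\ceqjem\src{\mc{C}_3};\src{\mc{C}_4}\iff \mylink{\compaim{\src{\mc{C}_1}},\compaim{\src{\mc{C}_2}}}\ceqaim\mylink{\compaim{\src{\mc{C}_3}},\compaim{\src{\mc{C}_4}}}.
\]
This is exactly the corollary in the degenerate case $\trg{P}=\compaim{\src{\mc{C}_2}}$ and $\trg{P'}=\compaim{\src{\mc{C}_4}}$, so all that is left is to swap in arbitrary equivalent $\trg{P}$ and $\trg{P'}$.

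Second I would prove the congruence lemma: for every fixed, compatible $\trg{P_1}$ and all $\trg{Q}\ceqaim\trg{Q'}$, $\mylink{\trg{P_1},\trg{Q}}\ceqaim\mylink{\trg{P_1},\trg{Q'}}$. (Well-definedness of the linkings is free: $\trg{Q}\ceqaim\trg{Q'}$ forces equal import/export interfaces, hence equal compatibility with $\trg{P_1}$.) Granting this, $\compaim{\src{\mc{C}_2}}\ceqaim\trg{P}$ gives $\mylink{\compaim{\src{\mc{C}_1}},\compaim{\src{\mc{C}_2}}}\ceqaim\mylink{\compaim{\src{\mc{C}_1}},\trg{P}}$ and $\compaim{\src{\mc{C}_4}}\ceqaim\trg{P'}$ gives $\mylink{\compaim{\src{\mc{C}_3}},\compaim{\src{\mc{C}_4}}}\ceqaim\mylink{\compaim{\src{\mc{C}_3}},\trg{P'}}$. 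Chaining these two equivalences onto the displayed one with symmetry and transitivity of $\ceqaim$ collapses the four programs and produces
\[
\src{\mc{C}_1};\src{\mc{C}_2}\ceqjem\src{\mc{C}_3};\src{\mc{C}_4}\iff \mylink{\compaim{\src{\mc{C}_1}},\trg{P}}\ceqaim\mylink{\compaim{\src{\mc{C}_3}},\trg{P'}},
\]
which is the claim.

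The main obstacle is the congruence lemma, and within it the mismatch between the two ways components are combined. Contextual equivalence (\Cref{def:ceqaim,def:preaim}) plugs a component into a context through the raw join $\trg{+}$, whereas the statement glues components through \mylink{\cdot}, which additionally keeps a single copy of the trusted module \sys (fixed at id~\trg{1}) and initialises the global store \mc{G} with the exported-object bindings of all linked modules. To prove congruence I must re-associate: for fixed $\trg{P_1}$, the whole program $\trg{\mb{P}}[\mylink{\trg{P_1},\trg{Q}}]$ has to coincide with $\trg{\mb{P}'}[\trg{Q}]$ for a context $\trg{\mb{P}'}$ obtained by absorbing the non-\sys part of $\trg{P_1}$, together with its contribution to \mc{G}, into $\trg{\mb{P}}$, the single shared \sys being carried by $\trg{Q}$ itself. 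Showing that this \sys-deduplication and \mc{G}-initialisation are identical on the $\trg{Q}$ and $\trg{Q'}$ sides and commute with plugging --- so the context gains no distinguishing power from them --- is the technical heart; the oracle existential of \Cref{def:preaim} then threads through unchanged, since the re-association is a literal identity of whole programs run under one and the same oracle. A natural way to discharge the lemma is via the trace semantics of \Cref{thm:fatracesaim}: if the trace set of $\mylink{\trg{P_1},\trg{Q}}$ is computed compositionally from the traces of its parts, then $\trg{Q}\teqaim\trg{Q'}$ immediately forces $\mylink{\trg{P_1},\trg{Q}}\teqaim\mylink{\trg{P_1},\trg{Q'}}$, and \Cref{thm:fatracesaim} converts this back to $\ceqaim$. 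Once the lemma is in hand, the corollary is pure equational bookkeeping.
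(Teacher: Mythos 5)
Your proposal is correct and takes essentially the same route as the paper: the paper likewise first derives the special case $\trg{P}=\compaim{\src{\mc{C}_2}}$, $\trg{P'}=\compaim{\src{\mc{C}_4}}$ from \Cref{thm:compaimisfa} plus the defining equation of \compaim{\cdot} on components (stated there as \Cref{thm:compaimhorizcomp-compiled}), then swaps in the arbitrary equivalent \trg{P} and \trg{P'} via a congruence-of-linking lemma (its \Cref{thm:link-pres-ceq}) and concludes by transitivity of $\ceqaim$. If anything, your analysis of the congruence lemma --- the mismatch between the raw join $\trg{+}$ and \mylink{\cdot}, the \sys-deduplication and \mc{G}-initialisation, and the trace-semantics route --- is more detailed than the paper's own proof sketch, which merely asserts that linking non-overlapping modules trivially preserves $\ceqaim$.
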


\section{Related Work}\label{sec:rw}
Secure compilation through full-abstraction was pioneered by Abadi~\cite{abadiFa} and successfully applied to many different settings~\cite{scoo-j,fstar2js,nonintfree,ahmedCPS,AhmedFa,abadiLayout,Jagadeesan,popldomi}.
Parrow proved which conditions must hold in source and target languages to provide a fully-abstract compiler between the two~\cite{gcFA}.
Gorla and Nestmann concluded that full-abstraction is meaningful when it entails properties like security~\cite{faEHM}.

A large body of research provided secure compilers for a variety of languages with different language features. 
Three works are closely related to the present one.
The first one is the fully-abstract compilation scheme of single-module code to single-module \PMA of Patrignani \etal~\cite{scoo-j}, where linking is not explicitly considered.
The second one is the secure compiler targeting an extension of the PUMP machine (a tag-based assembly-level architecture that enforces micro-policies with each instruction)~\cite{crashsafemachine} by Juglaret and Hritcu~\cite{jannis}.
This work considers a very similar source language, so it incurs in very similar problems to what discussed here.
Intuitively, Juglaret and Hritcu use tags to create \PMA-like modules (with their local stacks) where jumps can only be done at specific addresses.
By relying on a very different architecture, their solution differs significantly from ours.
For example, to address \Cref{prob:callstack}, they rely on linear tags for return addresses.
Their tags also capture type information, so that dynamic typechecks (as needed for \Cref{prob:types}) are made based on tags.
As stated in \Cref{sec:intro}, the main difference between the two works is in the architecture they target: while SGX-like \PMA is readily available, it may be a while before PUMP-like machine hit the market.
The third closest secure compilation result is the (probabilistic) fully-abstract compiler to \ASLR-enhanced target languages, which also does not explicitly consider linking~\cite{abadiLayout,Jagadeesan}.
\ASLR prevents some linking problems (e.g., object guessing) but not all of them (e.g., call stack shortcutting).
We expect that \ASLR-based secure compilation can be made resilient to all linking related attacks by adopting analogous countermeasures to those described in this paper.

Most secure compilation works adopt the fully-abstract compilation notion of Abadi~\cite{abadiFa}.
These works achieve security by relying on type systems for the target language~\cite{ahmedCPS,AhmedFa,fstar2js}, cryptographical primitives~\cite{bugliesi,corin,scoojoin} and the already mentioned \ASLR~\cite{abadiLayout,Jagadeesan} and \PMA~\cite{scoo-j}.
Additionally, certain works provide secure compilers by means of type-preserving compilers~\cite{barthe,balto,embed}, though they require the target language to be well-typed.
Of all these works, only Abadi \etal~\cite{scoojoin} consider multiple modules, but in a distributed setting rather than on a single machine, so that presents different vulnerabilities than the ones considered in this paper.

\section{Conclusion and Future Work}\label{sec:conc}
This paper presented a secure compilation scheme from \jem, an object-based imperative language to \aim, an untyped assembly language enhanced with \PMA.
Because \aim explicitly deals with linking, the secure compiler developed in this paper faces a number of threats that no previous work considers.
This paper formalised the compiler \compaim{\cdot} and explaind how it withstands these new threats.
Finally, it presented how \compaim{\cdot} is fully-abstract and modular, so that the additional threats arising from linking cannot be exploited by a malicious attacker.

\smallskip

The authors foresee a number of future research trajectories for this work: integrating a garbage collector with securely compiled programs, supporting secure compilation for concurrent programs and developing secure compilation schemes for emerging security architectures such as capability machines~\cite{cheri,capsicum}.


\smallskip

{\small
{\bf Acknowledgements.}
		The authors would like to thank Amal Ahmed, C\u{a}t\u{a}lin Hri\c{t}cu, Yannis Juglaret, Raoul Strackx and the anonymous reviewers for useful comments on an earlier version of this paper.
		This work has been supported in part by the Intel Labs University Research Office. 
		This research is also partially funded by the Research Fund KU Leuven, and by the EU FP7 project NESSoS. 
		With the financial support from the Prevention of and Fight against Crime Programme of the European Union (B-CCENTRE).}

\newpage

\appendix
\section*{Appendix Introduction}
The appendix contains formalisation of the \jem (\Cref{sec:jemform}) and \aim (\Cref{sec:aimform}) languages, the formalisation of the algorithm \algo{\cdot} (\Cref{sec:algoaim}) and proofs and proof sketches of the paper (\Cref{sec:actualproofs}).

\section{\jem}\label{sec:jemform}
The formalisation of \jem borrows extensively from that of Java Jr.~\cite{javaJr} and from the single-component version of it~\cite{scoo-j,tome-secure-compilation}.

\subsection{Static Semantics}\label{sec:semjem}
A rule of the form ${\mc{P}}\vdash{\OB{E}}$ is intended to be ${\mc{P}}\vdash{{E}_1},\cdots,{\mc{P}}\vdash{{E}_n}$ where ${\OB{E}} = {E_1},\cdots,{E_n}$.
Denote a stack of variable bindings with $\src{\OB{S}} ::=\src{\OB{x:t}}$
\begin{center}
\typerule{Type-programs}{
	\src{\mc{P}} = \src{C_1,\cdots,C_n},
	&
	\forall i \in 1 .. n 
	&
	\src{\mc{P}}\vdash\src{C_i}
	\\
	\src{C_i}=\lstb{import}\ \src{\OB{I}};\src{\OB{X}};\ \lstb{class}\ \src{c} \{\src{K}\ \src{\OB{F_t}}\ \src{\OB{M}}\}; \src{\OB{O}}
	&
	\src{\mc{P}}\vdash\src{\OB{I}}
	&
	\src{\mc{P}}\vdash\src{\OB{X}}
}{
	\vdash\src{\mc{P}}:\ms{prg}
}{type-programs}\quad
\typerule{Interface-checking}{
	\src{\mc{P}} = \src{C_1,\cdots,C_n},
	&
	\exists i \in 1..n . 
	\\
	\src{C_i}=\lstb{import}\ \src{\OB{I}};\src{\OB{X}};\ \lstb{class}\ \src{c} \{\src{K}\ \src{\OB{F_t}}\ \src{\OB{M}}\src{\OB{M'}}\}; \src{\OB{O}}
	\\
	\src{\OB{M_t}} = \src{{M_t}_1},\cdots,\src{{M_t}_m}
	&
	\src{\OB{M}} = \src{M_1},\cdots,\src{M_m}
	\\
	\forall j \in 1..m.
	&
	\src{M_j}\vdash\src{{M_t}_j}
	&
	\src{c} \text{ is unique in }\src{\mc{P}}
}{
	\src{\mc{P}}\vdash\lstb{class-decl}\ \src{c} \{\src{\OB{M_t}}\}
}{aux-intf}\quad
\typerule{Method-checking}{
	\src{M} = \lstb{public}\ \src{m}(\src{\OB{x}}):\src{M_t}\ \{\lstb{return}\ \src{E};\}
	&
	\src{M_t} = \src{m}:\src{t}(\src{\OB{t}})\to\src{t'}
}{
	\src{M}\vdash\src{M_t}
}{aux-meth}\quad
\typerule{Extern-checking}{
	\src{\mc{P}}\vdash\src{o}:\src{c}	
}{
	\src{\mc{P}}\vdash\lstb{obj-decl}\ \src{o}:\src{c}
}{aux-extern}\quad
\typerule{Type-class}{
	\src{\mc{P}}\vdash\src{\OB{I}} 
	&
	\src{\mc{P}}\vdash\src{\OB{X}} 
	&
	\src{\mc{P}},\src{c},\src{\OB{F_t}}\vdash\src{K}
	\\
	\src{\mc{P}};\src{c}\vdash\src{\OB{M}} 
	&
	\src{\mc{P}},\src{c},\src{\OB{F_t}},\src{\OB{O}}\vdash\src{\OB{O}} 
	&
	\src{c} \text{ is unique in }\src{\mc{P}}
}{
	\src{\mc{P}}\vdash\lstb{import}\ \src{\OB{I}};\src{\OB{X}};\ \lstb{class}\ \src{c} \{\src{K}\ \src{\OB{F_t}}\ \src{\OB{M}}\}; \src{\OB{O}}
}{type-class}\quad
\typerule{Type-object}{
	\src{o} \text{ is unique in }\src{\mc{P}}
	&
	\src{\OB{F_t}} = \src{f_1}:\src{t_1},\cdots,\src{f_n}:\src{t_n},
	\\
	\src{\OB{F}} = \lstb{private}\ \src{f_1}=\src{v_1},\cdots,\lstb{private}\ \src{f_n}=\src{v_n}
	\\
	\forall i \in 1 ..n . 
	&
	\src{\mc{P}}\vdash\src{v_i}:\src{t_i}
}{
	\src{\mc{P}},\src{c},\src{\OB{F_t}},\src{\OB{O}}\vdash\lstb{object}\ \src{o}:\src{c} \{\src{\OB{F}}\}	
}{type-intf}\quad
\typerule{Type-unit}{}{
	\src{\mc{P}}\vdash\src{\lst{unit}}:\src{\lstb{Unit}}
}{type-unit}\quad
\typerule{Type-bool}{
	\src{v}\equiv\src{\lst{true}} \vee \src{v}\equiv\src{\lst{false}}
}{
	\src{\mc{P}}\vdash\src{v}:\src{\lstb{Bool}}
}{type-bool}\quad
\typerule{Type-int}{
	\src{v}\in\mb{N}
}{
	\src{\mc{P}}\vdash\src{v}:\src{\lstb{Int}}
}{type-int}\quad
\typerule{Type-obj}{
	\src{\mc{P}}\vdash\src{t}
	&
	\src{\mc{P}} = \src{C_1,\cdots,C_n},
	&
	\exists i \in 1..n . 
	\\
	\src{C_i}=\lstb{import}\ \src{\OB{I}};\src{\OB{X}};\ \lstb{class}\ \src{c} \{\src{K}\ \src{\OB{F_t}}\ \src{\OB{M}}\}; \src{\OB{O}}
	\\
	\src{O} = \src{O_1,\cdots,O_m},
	&
	\exists j \in 1..m .
	&
	\src{O_j} = \lstb{object}\ \src{o}:\src{c} \{\src{\OB{F}}\}
}{
	\src{\mc{P}}\vdash\src{v}:\src{t}
}{type-obj}\quad
\typerule{Type-constructor}{
	\src{\OB{F_t}} = \src{f_1}:\src{t_1},\cdots,\src{f_n}:\src{t_n}
	&
	\src{\OB{f}} = \src{f_1},\cdots,\src{f_n}
	&
	\src{\OB{t}} = \src{t_1},\cdots,\src{t_n}
}{
	\src{\mc{P}},\src{c},\src{\OB{F_t}}\vdash\src{c}(\src{\OB{f}}:\src{\OB{t}})\ \{\lstb{this}.\src{\OB{f'}}=\src{\OB{f}}\}
}{type-constr}\quad
\typerule{Type-method}{
	\src{\OB{x}} = \src{x_1},\cdots,\src{x_n}
	&
	\src{M_t} = \src{m} : \src{t}(\src{\OB{t}})\to\src{t'}
	&
	\src{\OB{t}} = \src{t_1},\cdots,\src{t_n}
	\\
	\src{\mc{P}};\src{x_1}:\src{t_1},\cdots,\src{x_n}:\src{t_n};\src{c}\vdash\src{E}:\src{t'};\src{\OB{S'}}
	&
	\src{m}\text{ is unique in }\src{\mc{P}}
	\\
	\forall i \in 1..n
	&
	\src{x_i}\text{ is unique in }\src{\OB{x}}
}{
	\src{\mc{P}};\src{c}\vdash\lstb{public}\ \src{m}(\src{\OB{x}}):\src{M_t}\ \{\lstb{return}\ \src{E};\}
}{type-meth}\quad
\typerule{Type-expr-var}{
	\src{x}:\src{t}\in{\src{\OB{S}}}
}{
	\src{\mc{P}};\src{\OB{S}};\src{c}\vdash\src{x}:\src{t};\src{\emptyset}
}{type-expr-var}\quad
\typerule{Type-expr-field}{
	\src{\mc{P}};\src{\OB{S}};\src{c}\vdash\src{E}:\src{c};\src{\emptyset}
	&
	\src{\mc{P}} = \src{C_1,\cdots,C_n},
	&
	\exists i \in 1..n . 
	\\
	\src{C_i}=\lstb{import}\ \src{\OB{I}};\src{\OB{X}};\ \lstb{class}\ \src{c} \{\src{K}\ \src{\OB{F_t}}\ \src{\OB{M}}\}; \src{\OB{O}}
	&
	\src{f}:\src{t}\in{\src{\OB{F_t}}}
}{
	\src{\mc{P}};\src{\OB{S}};\src{c}\vdash\src{E}.\src{f}:\src{t};\src{\emptyset}
}{type-expr-fld}\quad
\typerule{Type-expr-field-update}{
	\src{\mc{P}};\src{\OB{S}};\src{c}\vdash\src{E}:\src{c};\src{\emptyset}
	&
	\src{\mc{P}};\src{\OB{S}};\src{c}\vdash\src{E'}:\src{t};\src{\emptyset}
	&
	\src{\mc{P}} = \src{C_1,\cdots,C_n},
	\\
	\exists i \in 1..n . 
	&
	\src{C_i}=\lstb{import}\ \src{\OB{I}};\src{\OB{X}};\ \lstb{class}\ \src{c} \{\src{K}\ \src{\OB{F_t}}\ \src{\OB{M}}\}; \src{\OB{O}}
	&
	\src{f}:\src{t}\in\src{\OB{F_t}}
}{
	\src{\mc{P}};\src{\OB{S}};\src{c}\vdash\src{E}.\src{f}=\src{E'}:\src{\lstb{Unit}};\src{\emptyset}
}{type-expr-fldup}\quad
\typerule{Type-expr-method}{
	\src{\mc{P}};\src{\OB{S}};\src{c}\vdash\src{E}:\src{t};\src{\emptyset}
	&
	\src{\mc{P}};\src{\OB{S}};\src{c}\vdash\src{\OB{E}}:\src{\OB{t}};\src{\emptyset}
	&
	\src{\mc{P}} = \src{C_1,\cdots,C_n},
	\\
	\exists i \in 1..n . 
	&
	\src{C_i}=\lstb{import}\ \src{\OB{I}};\src{\OB{X}};\ \lstb{class}\ \src{c'} \{\src{K}\ \src{\OB{F_t}}\ \src{\OB{M}}\}; \src{\OB{O}}
	\\
	\src{\OB{M}} = \src{M_1,\cdots,M_m}
	\\
	\exists i \in 1..m . 
	&
	\src{M_m}=\lstb{public}\ \src{m}(\src{\OB{x}}):\src{M_t}\ \{\lstb{return}\ \src{E''};\}
	\\
	\src{M_t} = \src{m} : \src{t}(\src{\OB{t}})\to\src{t'}
}{
	\src{\mc{P}};\src{\OB{S}};\src{c}\vdash\src{E}.\src{m}(\src{\OB{E}}):\src{t'};\src{\emptyset}
}{type-expr-meth}\quad
\typerule{Type-expr-op}{
	\src{\mc{P}};\src{\OB{S}};\src{c}\vdash\src{E}:\src{t};\src{\emptyset}
	&
	\src{\mc{P}};\src{\OB{S}};\src{c}\vdash\src{E'}:\src{t'};\src{\emptyset}
	&
	\src{\mtt{op}}:\src{t\times t'\to t''}
}{
	\src{\mc{P}};\src{\OB{S}};\src{c}\vdash\src{E\ \mtt{op}\ E'}:\src{t''};\src{\emptyset}
}{type-expr-op}\quad
\typerule{Type-expr-exit}{
	\src{\mc{P}};\src{\OB{S}};\src{c}\vdash\src{E}:\src{t};\src{\emptyset}
}{
	\src{\mc{P}};\src{\OB{S}};\src{c}\vdash\lstb{exit}\ \src{E}:\src{t};\src{\emptyset}
}{type-expr-exit}\quad
\typerule{Type-expr-sequence}{
	\src{\mc{P}};\src{\OB{S}};\src{c}\vdash\src{E}:\src{t};\src{\OB{S'}}
	&
	\src{\mc{P}};\src{\OB{S},\OB{S'}};\src{c}\vdash\src{E};\src{E'}:\src{t'};\src{\OB{S''}}
}{
	\src{\mc{P}};\src{\OB{S}};\src{c}\vdash\src{E};\src{E'}:\src{t'};\src{\OB{S'},\OB{S''}}
}{type-expr-seq}\quad
\typerule{Type-expr-newvar}{
	\src{x}\notin\dom{\src{\OB{S}}}
	&
	\src{\mc{P}};\src{\OB{S}};\src{c},\src{x}:\src{t};\src{c}\vdash\src{E}:\src{t};\src{\emptyset}
}{
	\src{\mc{P}};\src{\OB{S}};\src{c}\vdash\lstb{var}\ \src{x} : \src{t} = \src{E}:\src{\lstb{Unit}};\src{x:t}
}{type-expr-newvar}\quad
\typerule{Type-expr-if}{
	\src{\mc{P}};\src{\OB{S}};\src{c}\vdash\src{E}: \src{\lstb{Bool}};\src{\emptyset}
	&
	\src{\mc{P}};\src{\OB{S}};\src{c}\vdash\src{E'}: \src{t};\src{\emptyset}
	&
	\src{\mc{P}};\src{\OB{S}};\src{c}\vdash\src{E''}: \src{t};\src{\emptyset}
}{
	\src{\mc{P}};\src{\OB{S}};\src{c}\vdash\lstb{if}\ (\src{E})\ \{\src{E'}\}\ \lstb{else}\ \{\src{E''}\} : \src{t};\src{\emptyset}
}{type-expr-if}\quad
\typerule{Type-expr-new}{
	\src{\mc{P}};\src{\OB{S}};\src{c}\vdash\src{E}: \src{\OB{t}};\src{\emptyset}
	&
	\src{\mc{P}} = \src{C_1,\cdots,C_n},
	&
	\exists i \in 1..n . 
	\\
	\src{C_i}=\lstb{import}\ \src{\OB{I}};\src{\OB{X}};\ \lstb{class}\ \src{c} \{\src{K}\ \src{\OB{F_t}}\ \src{\OB{M}}\}; \src{\OB{O}}
	&
	\src{K} = \src{c}(\src{\OB{f}}:\src{\OB{t}})\ \{\lstb{this}.\src{\OB{f'}}=\src{\OB{f}}\}
}{
	\src{\mc{P}};\src{\OB{S}};\src{c}\vdash\lstb{new}\ \src{c}(\src{\OB{X}}) :\src{c};\src{\emptyset}
}{type-expr-new}\quad
\typerule{Type-expr-val}{
	\exists{t}.
	&
	\src{\mc{P}};\src{\OB{S}};\src{c}\vdash\src{v}:\src{t}
}{
	\src{\mc{P}};\src{\OB{S}};\src{c}\vdash\src{v}
}{type-expr-val}\quad
\typerule{Type-expr-this}{
}{
	\src{\mc{P}};\src{\OB{S}};\src{c}\vdash\src{\lst{this}} : \src{c}
}{type-expr-val}\quad
\end{center}

Contexts for the evalutation of expressions are defined as follows: $\src{\mb{E}}::=\src{\mb{E}} ::=\ [\cdot ] \mid \src{\mb{E}}.\src{m}(\src{\OB{X}}) \mid \src{o}.\src{m}(\src{\OB{v}},\src{\mb{E}},\src{\OB{E}}) \mid \src{\mb{E}}.\src{f} \mid \src{\mb{E}}.\src{f}=\src{E}  \mid \src{v}.\src{f}=\src{\mb{E}} \mid \lstb{new}\ \src{c}(\src{\OB{v}},\src{\mb{E}},\src{\OB{X}}) \mid \lstb{if}(\src{\mb{E}})\{ \src{E}\}\lstb{else}\{\src{E}\} \mid \src{\mb{E}};\src{E} \mid \lstb{var}\ \src{x} : \src{t} = \src{\mb{E}} \mid \lstb{return}\ \src{\mb{E}} \mid \src{\mb{E}}\ \src{\mtt{op}\ E}\ \mid \src{v\ \mtt{op}}\ \src{\mb{E}} \mid \lstb{exit}\ \src{\mb{E}}\mid\lstb{instanceof}(\src{\mb{E}},\src{c})$.

Configurations for the dynamic semantics: $\src{k}::= (\src{\mc{P}};\src{\OB{B}}\vdash\src{\mb{E}}[\src{E}])$.

\src{\OB{B}} is a stack of stacks $\src{B}::=\src{\OB{x\mapsto v}}$, lookup and update is always done to the top stack of the stack.

Transitions of the static semantics $\to\subseteq\src{k}\times\src{k}$.

Initial state for a program $\src{\mc{P}} = \src{\mc{P}};\src{\emptyset}\vdash[\src{main}.\src{main}()]$.

Define plugging for contextual equivalence as follows: if $\src{\mb{C}} = \src{\mc{C}}$, then $\src{\mb{C}}[\src{\mc{C'}}] = \src{\mc{C},\mc{C}'}$ if $\src{\mb{C}}\compat\src{\mc{C'}}$.
The initial state of $\src{\mb{C}}[\src{\mc{C'}}]$ is $\src{\mc{C},\mc{C}'};\src{\emptyset}\vdash[\src{main}.\src{main}()]$.
Assume a context always define the \src{main} method and the \src{main} class.

\begin{center}
\typerule{\jem-eval-field-lookup}{
	\src{\mc{P}} = \src{C_1,\cdots,C_n},
	&
	\exists i \in 1..n . 
	&
	\src{C_i}=\lstb{import}\ \src{\OB{I}};\src{\OB{X}};\ \lstb{class}\ \src{c} \{\src{K}\ \src{\OB{F_t}}\ \src{\OB{M}}\}; \src{\OB{O}}
	\\
	\src{\OB{O}} = \src{O_1,\cdots,O_m},
	&
	\exists j \in 1..m .
	&
	\src{O_j} = \lstb{object}\ \src{o}:\src{c} \{\src{\OB{F}}\}
	&
	\lstb{private}\ \src{f} = \src{v}\in{\src{\OB{F}}}
}{
	\src{\mc{P}};\src{\OB{B}}\vdash\src{\mb{E}}[\src{o}.\src{f}] \to \src{\mc{P}};\src{\OB{B}}\vdash\src{\mb{E}}[\src{v}]
}{jemeval-fldlkup}\quad
\typerule{\jem-eval-field-update}{
	\src{\mc{P}} = \src{C_1,\cdots,C_n}
	&
	\exists i \in 1..n . 
	\\
	\src{C_i}=\lstb{import}\ \src{\OB{I}};\src{\OB{X}};\ \lstb{class}\ \src{c} \{\src{K}\ \src{\OB{F_t}}\ \src{\OB{M}}\}; \src{\OB{O}}
	\\
	\src{\OB{O}} = \src{O_1,\cdots,O_m}
	&
	\exists j \in 1..m .
	\\
	\src{O_j} = \lstb{object}\ \src{o}:\src{c} \{\src{\OB{F}}\}
	&
	\src{F} = \src{F_1,\cdots,F_l}
	\\
	\exists h \in 1..l .
	&
	\src{F_h} = \lstb{private}\ \src{f} = \src{v'}
	&
	\src{F_h'} = \lstb{private}\ \src{f} = \src{v}
	\\
	\src{\OB{F'}} = \src{F_1,\cdots,F_{h-1},F_h',F_{h+1},\cdots,F_l}
	&
	\src{O_j'} = \lstb{object}\ \src{o}:\src{c} \{\src{\OB{F'}}\}
	\\
	\src{\OB{O'}} = \src{O_1,\cdots,O_{j-1},O_j',O_{j+1},\cdots,O_m}
	\\
	\src{C_i'}=\lstb{import}\ \src{\OB{I}};\src{\OB{X}};\ \lstb{class}\ \src{c} \{\src{K}\ \src{\OB{F_t}}\ \src{\OB{M}}\}; \src{\OB{O'}}
	\\
	\src{\mc{P'}} = \src{C_1,\cdots,C_{i-1},C_i',C_{i+1},\cdots,C_n}
}{
	\src{\mc{P}};\src{\OB{B}}\vdash\src{\mb{E}}[\src{o}.\src{f} =\src{v}] \to \src{\mc{P'}};\src{\OB{B}}\vdash\src{\mb{E}}[\src{\lst{unit}}]
}{jemeval-fldup}\quad
\typerule{\jem-eval-method-call}{
	\src{\mc{P}} = \src{C_1,\cdots,C_n}
	&
	\exists i \in 1..n . 
	&
	\src{C_i}=\lstb{import}\ \src{\OB{I}};\src{\OB{X}};\ \lstb{class}\ \src{c} \{\src{K}\ \src{\OB{F_t}}\ \src{\OB{M}}\}; \src{\OB{O}}
	\\
	\src{\OB{O}} = \src{O_1,\cdots,O_m}
	&
	\exists j \in 1..m.
	&
	\src{O_j} =  \lstb{object}\ \src{o}:\src{c} \{\src{\OB{F}}\}
	\\
	\src{\OB{M}} = \src{M_1,\cdots, M_k}
	&
	\exists h \in 1 .. k .
	&
	\src{M_k} = \lstb{public}\ \src{m}(\src{\OB{x}}):\src{M_t}\ \{\lstb{return}\ \src{E};\}
}{
	\src{\mc{P}};\src{\OB{B}}\vdash\src{\mb{E}}[\src{o}.\src{m}(\src{\OB{v}})] \to \src{\mc{P}};\src{\emptyset;\OB{B}}\vdash\src{\mb{E}}[\src{E}[\src{o}/\src{\lst{this}}][\src{\OB{v}}/\src{\OB{x}}]]
}{jemeval-mc}\quad
\typerule{\jem-eval-method-return}{}{
	\src{\mc{P}};\src{B;\OB{B}}\vdash\src{\mb{E}}[\lstb{return}\ \src{v}] \to \src{\mc{P}};\src{\OB{B}}\vdash\src{\mb{E}}[\src{v}]
}{jemeval-rt}\quad
\typerule{\jem-eval-new}{
	\src{\mc{P}} = \src{C_1,\cdots,C_n}
	&
	\exists i \in 1..n . 
	&
	\src{C_i}=\lstb{import}\ \src{\OB{I}};\src{\OB{X}};\ \lstb{class}\ \src{c} \{\src{K}\ \src{\OB{F_t}}\ \src{\OB{M}}\}; \src{\OB{O}}
	\\
	\src{\OB{O}} = \src{O_1,\cdots,O_m}
	&
	\src{\OB{F_t}} = \src{\OB{f:t}}
	&
	\src{O} = \lstb{object}\ \src{o}:\src{c} \{\src{\OB{f} = \OB{v}}\}
	&
	\src{o} \text{ is fresh in } \src{\mc{P}}
	\\
	\src{\OB{O}'} = \src{O_1,\cdots,O_m,O}
	&
	\src{C_i'}=\lstb{import}\ \src{\OB{I}};\src{\OB{X}};\ \lstb{class}\ \src{c} \{\src{K}\ \src{\OB{F_t}}\ \src{\OB{M}}\}; \src{\OB{O}'}
	\\
	\src{\mc{P'}} = \src{C_1,\cdots,C_{i-1},C_i',C_{i+1},\cdots,C_n}
}{
	\src{\mc{P}};\src{\OB{B}}\vdash\src{\mb{E}}[\lstb{new}\ \src{c}(\src{\OB{v}})] \to \src{\mc{P'}};\src{\OB{B}}\vdash\src{\mb{E}}[\src{o}]
}{jemeval-new}\quad
\typerule{\jem-eval-if-true}{}{
	\src{\mc{P}};\src{\OB{B}}\vdash\src{\mb{E}}[\lstb{if}\ (\src{\lst{true}})\ \{\src{E_t}\}\ \lstb{else}\ \{\src{E_f}\}] \to \src{\mc{P}};\src{\OB{B}}\vdash\src{\mb{E}}[\src{E_t}]
}{jemeval-ift}\quad
\typerule{\jem-eval-if-false}{}{
	\src{\mc{P}};\src{\OB{B}}\vdash\src{\mb{E}}[\lstb{if}\ (\src{\lst{false}})\ \{\src{E_t}\}\ \lstb{else}\ \{\src{E_f}\}] \to \src{\mc{P}};\src{\OB{B}}\vdash\src{\mb{E}}[\src{E_t}]}{jemeval-iff}\quad
\typerule{\jem-eval-instanceof-true}{
	\exists\src{C} \in \src{\mc{P}}.
	&
	\src{C}=\lstb{import}\ \src{\OB{I}};\src{\OB{X}};\ \lstb{class}\ \src{c} \{\src{K}\ \src{\OB{F_t}}\ \src{\OB{M}}\}; \src{\OB{O}}
	\\
	\exists\src{O}\in\src{\OB{O}}.
	&
	\src{O} = \lstb{object}\ \src{v}:\src{c} \{\src{\OB{F}}\}
}{
	\src{\mc{P}};\src{\OB{B}}\vdash\src{\mb{E}}[\lstb{instanceof}(\src{v}:\src{c})] \to \src{\mc{P}};\src{\OB{B}}\vdash\src{\mb{E}}[\src{\lst{true}}]
}{jemeval-inst}\quad
\typerule{\jem-eval-instanceof-false}{
	\nexists\src{C} \in \src{\mc{P}}.
	&
	\src{C}=\lstb{import}\ \src{\OB{I}};\src{\OB{X}};\ \lstb{class}\ \src{c} \{\src{K}\ \src{\OB{F_t}}\ \src{\OB{M}}\}; \src{\OB{O}}
	\\
	\exists\src{O}\in\src{\OB{O}}.
	&
	\src{O} = \lstb{object}\ \src{v}:\src{c} \{\src{\OB{F}}\}}{
	\src{\mc{P}};\src{\OB{B}}\vdash\src{\mb{E}}[\lstb{instanceof}(\src{v}:\src{c})] \to \src{\mc{P}};\src{\OB{B}}\vdash\src{\mb{E}}[\src{\lst{false}}]
}{jemeval-insf}\quad
\typerule{\jem-eval-local-var}{}{
	\src{\mc{P}};\src{\OB{B}}\vdash\src{\mb{E}}[\lstb{var}\ \src{x:t} = \src{v}] \to \src{\mc{P}};\src{\OB{B}}+(\src{x\mapsto v})\vdash\src{\mb{E}}[\src{v}]
}{jemeval-locv}\quad
\typerule{\jem-eval-lookup-var}{
	\src{\OB{B}}(\src{x}) = \src{v}
}{
	\src{\mc{P}};\src{\OB{B}}\vdash\src{\mb{E}}[\src{x}] \to \src{\mc{P}};\src{\OB{B}}\vdash\src{\mb{E}}[\src{v}]
}{jemeval-lookv}\quad
\typerule{\jem-eval-op}{
	\src{v\ \mtt{op}\ v'}=\src{v''}
}{
	\src{\mc{P}};\src{\OB{B}}\vdash\src{\mb{E}}[\src{v\ \mtt{op}\ v'}] \to \src{\mc{P}};\src{\OB{B}}\vdash\src{\mb{E}}[\src{v''}]
}{jemeval-op}\quad
\typerule{\jem-eval-exit}{}{
	\src{\mc{P}};\src{\OB{B}}\vdash\src{\mb{E}}[\lstb{exit}\ \src{v}] \to \src{\mc{P}};\src{\OB{B}}\vdash\src{v}
}{jemeval-ex}\quad
\typerule{\jem-eval-concatenation}{}{
	\src{\mc{P}};\src{\OB{B}}\vdash\src{\mb{E}}[\src{v};\src{E}] \to \src{\mc{P}};\src{\OB{B}}\vdash\src{\mb{E}}[\src{E}]
}{jemeval-conc}\quad
\end{center}

Define compatibility as follows: 
\begin{center}
\typerule{Component satisfaction}{
	\src{\mc{C}}\Vdash\src{\mc{C'}}
	&
	\src{\mc{C}'}\Vdash\src{\mc{C}}
}{
	\src{\mc{C}}\compat\src{\mc{C'}}
}{}
\end{center}

Define a component to satisfy another as $\src{\mc{C}'}\Vdash\src{\mc{C}}$.
Formally: 
\begin{center}
\typerule{Component satisfaction base}{
}{
	\src{\mc{C}'}\Vdash\src{\emptyset}
}{}\quad
\typerule{Component satisfaction inductive}{
	\src{\mc{C}'}\Vdash\src{C_1}
	&
	\src{\mc{C}'}\Vdash\src{\mc{C}}	
}{
	\src{\mc{C}'}\Vdash\src{C_1};\src{\mc{C}}
}{}\quad
\typerule{Component satisfaction single}{
	\src{\mc{C}} = \src{C_1},\cdots,\src{C_n}
	&
	\src{C} = \lstb{import}\ \src{\OB{I}};\src{\OB{X}};\ \lstb{class}\ \src{c} \{\src{K}\ \src{\OB{F_t}}\ \src{\OB{M}}\}; \src{\OB{O}}
	\\
	\src{\OB{I}} = \src{I_1},\cdots,\src{I_m}
	&
	\src{\OB{X}} = \src{X_1},\cdots,\src{X_k}
	\\
	\forall i \in 1..m, \exists j\in 1..n . 
	&
	\src{I_i} = \lstb{class-decl}\ \src{i} \{\src{\OB{M_t}}\}
	\\
	\src{C_j} = \lstb{import}\ \src{\OB{I}'};\src{\OB{X}'};\ \lstb{class}\ \src{i} \{\src{K'}\ \src{\OB{F_t}'}\ \src{\OB{M}'}\}; \src{\OB{O}'}
	&
	\src{\OB{M}'}\vdash\src{\OB{M_t}}
	\\
	\forall g \in 1..k, \exists h \in 1 .. n.
	&
	\src{X_g} = \lstb{obj-decl}\ \src{o}:\src{c};
	\\
	\src{C_h} = \lstb{import}\ \src{\OB{I}''};\src{\OB{X}''};\ \lstb{class}\ \src{c} \{\src{K''}\ \src{\OB{F_t}''}\ \src{\OB{M}''}\}; \src{\OB{O}''}
	&
	\src{\OB{O}''} = \src{{O}_1''},\cdots,\src{{O}_l''}
	\\
	\exists h \in 1 .. l .
	&
	\src{O_h''}= \lstb{object}\ \src{o}:\src{c} \{\src{\OB{F}}\}	
}{
	\src{\mc{C}}\Vdash\src{C}
}{}\quad
\end{center}
A component satisfies a class if all import declarations in the class are implemented in the component.
So, for all interfaces imported in the class, a class exists in the component with the same name and at least the same required methods.
For all externs imported, an object exists with the same name and the same type.

\comm{
\subsection{Technicalities needed for proofs}\label{sec:technicalites-jem}
The caller-callee authentication mechanism of \aim introduces explicit principal information at the target level.
This is a source of violation of full abstraction and we need to introduce the same notions at the source language for the proofs to hold.
Since this technicality is only needed for full abstraction to hold (\Cref{tr:diff-rid,tr:diff-r3}, and the information of the module being the caller does not leak sensible security information, this addition to \jem is only presented here.

\jem is extended with an expression that lets code know the class of the caller $\src{E}::= \cdots \mid \lstb{callerClass()}$; values are extended with types $\src{v}::=\cdots\mid \src{c}$.

The static semantics of \jem is updates as follows:
\begin{center}
\typerule{Callerclass}{
}{
		\src{\mc{P}};\src{\OB{S}};\src{c}\vdash\lstb{callerClass()}:\src{Type}
}{type-expr-cc}\quad
\typerule{type}{
\src{\mc{P}} = \src{C_1,\cdots,C_n},
	&
	\exists i \in 1..n . 
	\\
	\src{C_i}=\lstb{import}\ \src{\OB{I}};\src{\OB{X}};\ \lstb{class}\ \src{c} \{\src{K}\ \src{\OB{F_t}}\ \src{\OB{M}}\}; \src{\OB{O}}
}{
		\src{\mc{P}};\src{\OB{S}};\src{c}\vdash\src{c}:\src{Type}
}{type-expr-type}\quad
\typerule{type2}{
\src{\mc{P}} = \src{C_1,\cdots,C_n},
	&
	\exists i \in 1..n . 
	\\
	\src{C_i}=\lstb{import}\ \src{\OB{I}};\src{\OB{X}};\ \lstb{class}\ \src{c'} \{\src{K}\ \src{\OB{F_t}}\ \src{\OB{M}}\}; \src{\OB{O}}
	\\
	\src{\OB{I}} = \src{I_1},\cdots,\src{I_l}
	&
	\exists k\in1..l
	\\
	\src{I_k}=\lstb{class-decl}\ \src{c} \{\src{\OB{M_t}}\}
}{
		\src{\mc{P}};\src{\OB{S}};\src{c}\vdash\src{c}:\src{Type}
}{type-expr-type}\quad
\end{center}

The semantics of \jem is changed as follows:
\begin{center}
\typerule{\jem-eval-method-call'}{
	\src{\mc{P}} = \src{C_1,\cdots,C_n}
	&
	\exists i \in 1..n . 
	&
	\src{C_i}=\lstb{import}\ \src{\OB{I}};\src{\OB{X}};\ \lstb{class}\ \src{c} \{\src{K}\ \src{\OB{F_t}}\ \src{\OB{M}}\}; \src{\OB{O}}
	\\
	\src{\OB{O}} = \src{O_1,\cdots,O_m}
	&
	\exists j \in 1..m.
	&
	\src{O_j} =  \lstb{object}\ \src{o}:\src{c} \{\src{\OB{F}}\}
	\\
	\src{\OB{M}} = \src{M_1,\cdots, M_k}
	&
	\exists h \in 1 .. k .
	&
	\src{M_k} = \lstb{public}\ \src{m}(\src{\OB{x}}):\src{M_t}\ \{\lstb{return}\ \src{E};\}
}{
	\src{\mc{P}};\src{\OB{B}};\src{c'};\src{c''}\src{\OB{c}}\vdash\src{\mb{E}}[\src{o}.\src{m}(\src{\OB{v}})] \to \src{\mc{P}};\src{\emptyset;\OB{B}};\src{c''};\src{c}\src{c''}\src{\OB{c}}\vdash\src{\mb{E}}[\src{E}[\src{o}/\src{\lst{this}}][\src{\OB{v}}/\src{\OB{x}}]]
}{jemeval-mc-p}\quad
\typerule{\jem-eval-method-return'}{}{
	\src{\mc{P}};\src{B;\OB{B}};\src{c'};\src{c}\src{\OB{c}}\vdash\src{\mb{E}}[\lstb{return}\ \src{v}] \to \src{\mc{P}};\src{\OB{B}};\src{c};\src{\OB{c}}\vdash\src{\mb{E}}[\src{v}]
}{jemeval-rt-p}\quad
\typerule{\jem-eval-callerclass}{}{
	\src{\mc{P}};\src{B;\OB{B}};\src{c};\src{\OB{c}}\vdash\src{\mb{E}}[\lstb{callerClass()}] \to \src{\mc{P}};\src{\OB{B}};\src{c};\src{\OB{c}}\vdash\src{\mb{E}}[\src{c}]
}{jemeval-cc}\quad
\end{center}

Configurations for the dynamic semantics: $\src{k}::= (\src{\mc{P}};\src{\OB{B};\src{c};\src{\OB{c}}}\vdash\src{\mb{E}}[\src{E}])$.
The first addition is the class of the caller \src{c}, the second is the stack of caller classes \src{\OB{c}}.

Initial state for a program $\src{\mc{P}} = \src{\mc{P}};\src{\emptyset};\src{\emptyset};\src{Main}\vdash[\src{main}.\src{main}()]$.
}
\section{\aim, Formally}\label{sec:aimform}
The formalisation of \aim borrows extensively from that of the single-module version by Patrignani and Clarke~\cite{llfatr-j}.

Define $(\trg{P_1})\compat(\trg{P_2})$ as follows.
\begin{center}
\typerule{program compatibility}{
	\trg{t_1} = \trg{\OB{EM}_1};\trg{\OB{EO}_1};\trg{\OB{RM}_1};\trg{\OB{RO}_1}
	&
	\trg{t_2} = \trg{\OB{EM}_2};\trg{\OB{EO}_2};\trg{\OB{RM}_2};\trg{\OB{RO}_2}
	\\
	\trg{\OB{EM}_1}\Vdash\trg{\OB{RM}_2}
	&
	\trg{\OB{EM}_2}\Vdash\trg{\OB{RM}_1}
	&
	\trg{\OB{EO}_1}\Vdash\trg{\OB{RO}_2}
	&
	\trg{\OB{EO}_2}\Vdash\trg{\OB{RO}_1}
}{
	(\trg{m_1},\trg{\OB{s}_1},\trg{t_1})\compat(\trg{m_2},\trg{\OB{s}_2},\trg{t_2})
}{prcomp}\quad
\typerule{Exported satisfy required -meth- ind}{
	\forall i \in 1 .. n
	&
	\trg{\OB{EM}}\Vdash\trg{{RM}_i}	
}{
	\trg{\OB{EM}}\Vdash\trg{{RM}_1,\cdots,RM_n}
}{em-s-rm-i}\quad
\typerule{Exported satisfy required -meth- base}{
	\exists\trg{a}. 
	&
	\src{m}:\src{M_t}\mapsto\trg{a}\in\trg{\OB{EM}}
}{
	\trg{\OB{EM}}\Vdash\src{m}:\src{M_t}\mapsto\trg{\iota};\trg{\sigma}
}{em-s-rm-b}\quad
\typerule{Exported satisfy required -obj- ind}{
	\forall i \in 1 .. n
	&
	\trg{\OB{EO}}\Vdash\trg{{RO}_i}	
}{
	\trg{\OB{EO}}\Vdash\trg{{RO}_1,\cdots,RO_n}
}{eo-s-ro-i}\quad
\typerule{Exported satisfy required -obj- base}{
	\exists\trg{n}. 
	&
	\src{o}:\src{c}\mapsto\trg{n}\in\trg{\OB{EO}}
}{
	\trg{\OB{EO}}\Vdash\src{o}:\src{c}\mapsto\trg{\sigma}
}{eo-s-ro-b}\quad
\end{center}

Define a substitution of a symbol \trg{\sigma} for a word \trg{w} as $\trg{\eta}=[\trg{w}/\trg{\sigma}]$.

Define $\trg{P_1}+\trg{P_2}$ as follows.
\begin{center}
\typerule{program merging}{
	\trg{m_1}\compat\trg{m_2}
	&
	\trg{\OB{s}_1}\compat\trg{\OB{s}_2}
	&
	\trg{t_1}+\trg{t_2} = \trg{t};\trg{\OB{\eta}}
	&
	\trg{m} = (\trg{m_1}+\trg{m_2})\trg{\OB{\eta}}
}{
	(\trg{m_1},\trg{\OB{s}_1},\trg{t_1})\trg{+}(\trg{m_2},\trg{\OB{s}_2},\trg{t_2}) = (\trg{m},\trg{\OB{s}_1}\trg{\OB{s}_2},\trg{t})
}{prcomp}\quad
\typerule{mem - compat}{
	\dom{\trg{m_1}}\cap\dom{\trg{m_2}}=\emptyset
}{
	\trg{m_1}\compat\trg{m_2}
}{mcomp}\quad
\typerule{desc - compat - ind}{
	\forall i \in 1 .. n
	&
	\trg{\OB{s}}\compat\trg{s_i}	
}{
	\trg{\OB{s}}\compat\trg{s_1,\cdots,s_n}
}{dcom-i}\quad
\typerule{desc - compat - base}{
	\trg{\OB{s}} = \trg{s_1,\cdots,s_n}
	&
	\forall i \in 1 .. n.
	&
	\trg{s_i}\equiv(\trg{id}_i,\trg{n_c}_i,\trg{n}_i)
	&
	\trg{id_i}\neq\trg{id}
}{
	\trg{\OB{s}}\compat(\trg{id},\trg{n_c},\trg{n})
}{dcom-b}\quad
\typerule{mem sub - ind}{
	\trg{m} = \trg{a_1\mapsto w_1\trg{\OB{\eta}},\cdots,a_n\mapsto w_n\trg{\OB{\eta}}}
}{
	\trg{a_1\mapsto w_1,\cdots,a_n\mapsto w_n}\trg{\OB{\eta}} = \trg{m}
}{msi}\quad
\typerule{sub-s-ind}{
	\forall i \in 1..n
	&
	\trg{w_{i}'}=\trg{w_{i-1}'}\trg{{[\trg{w_i}/\trg{\sigma_i}]}}
}{
	\trg{w}\trg{{[\trg{w_1}/\trg{\sigma_1}]}},\cdots,\trg{{[\trg{w_n}/\trg{\sigma_n}]}} = \trg{w_n'}
}{ssw}\quad
\typerule{sub-s-n}{}{
	\trg{n}\trg{{[\trg{w}/\trg{\sigma}]}} = \trg{n}
}{ssn}\quad
\typerule{sub-s-p}{}{
	\trg{\pi}\trg{{[\trg{w}/\trg{\sigma}]}} = \trg{\pi}
}{ssno}\quad
\typerule{sub-s}{
	\trg{\sigma'}=\trg{\sigma}
}{
	\trg{\sigma'}\trg{{[\trg{w}/\trg{\sigma}]}} = \trg{w}
}{sssyt}\quad
\typerule{sub-s-no}{
	\trg{\sigma'}\neq\trg{\sigma}
}{
	\trg{\sigma'}\trg{{[\trg{w}/\trg{\sigma}]}} = \trg{\sigma'}
}{sssyf}\quad
\typerule{mem add}{
	\trg{m_1}\compat\trg{m_2}
}{
	\trg{m_1}+\trg{m_2} = \trg{m_1}\trg{m_2}
}{madd}\quad
\typerule{linking tables merging}{
	\trg{\OB{EM}_1};\trg{\OB{EO}_1};\trg{\OB{RM}_1};\trg{\OB{RO}_1} \compat \trg{\OB{EM}_2};\trg{\OB{EO}_2};\trg{\OB{RM}_2};\trg{\OB{RO}_2}
	\\
	\trg{\trg{\OB{EM}}} = \trg{\trg{\OB{EM}_1}}\trg{\trg{\OB{EM}_2}}
	&
	\trg{\trg{\OB{EO}}} = \trg{\trg{\OB{EO}_1}}\trg{\trg{\OB{EO}_2}}
	&
	\trg{t}= \trg{\trg{\OB{EM}}};\trg{\OB{EO}};
	\\
	\trg{\OB{RM}_1}+\trg{\OB{EM}_2} = \trg{\OB{RM}'}+\trg{\OB{\eta'}}
	&
	\trg{\OB{RM}_2}+\trg{\OB{EM}_1} = \trg{\OB{RM}''}+\trg{\OB{\eta''}}
	\\
	\trg{\OB{RO}_1}+\trg{\OB{EO}_2} = \trg{\OB{RO}'}+\trg{\OB{\eta'''}}
	&
	\trg{\OB{RO}_2}+\trg{\OB{EO}_1} = \trg{\OB{RO}''}+\trg{\OB{\eta''''}}
	\\
	\trg{\OB{RM}} = \trg{\OB{RM}'}\trg{\OB{RM}''}
	&
	\trg{\OB{RO}} = \trg{\OB{RO}'}\trg{\OB{RO}''}
	&
	\trg{\OB{\eta}}=\trg{\OB{\eta'}}\trg{\OB{\eta''}}\trg{\OB{\eta'''}}\trg{\OB{\eta''''}}
}{
	\trg{\OB{EM}_1};\trg{\OB{EO}_1};\trg{\OB{RM}_1};\trg{\OB{RO}_1} + \trg{\OB{EM}_2};\trg{\OB{EO}_2};\trg{\OB{RM}_2};\trg{\OB{RO}_2} = \trg{t};\trg{\OB{\eta}}
}{ltm}\quad
\typerule{rm - merged - em - ind}{
	\trg{\OB{RM}} = \trg{RM_1},\cdots,\trg{RM_n}
	&
	\forall i \in 1 .. n
	&
	\trg{RM_i}+\trg{\OB{EM}} = \trg{RM_i'}+\trg{\eta'_i}\trg{\eta''_i}
	\\
	\trg{\OB{RM}'}=\trg{RM_1'},\cdots,\trg{RM_n'}
	&
	\trg{\OB{\eta'}} = \trg{\eta'_1}\trg{\eta''_1},\cdots,\trg{\eta'_n}\trg{\eta''_n}
}{
	\trg{\OB{RM}}+\trg{\OB{EM}} = \trg{\OB{RM}'}+\trg{\OB{\eta'}}
}{rmmrm-i}\quad
\typerule{rm - merged - em - base}{
	\src{m}:\src{M_t}\mapsto\trg{(id,n)}\in\trg{\OB{EM}}
}{
	\src{m}:\src{M_t}\mapsto\trg{\iota};\trg{\sigma}+\trg{\OB{EM}} = \trg{\emptyset}+\trg{[\trg{n}/\trg{\sigma}]}\trg{[\trg{id}/\trg{\iota}]}
}{rmmem-b-ok}\quad
\typerule{rm - merged - em - base2}{
	\src{m}:\src{M_t}\notin\dom{\trg{\OB{EM}}}
}{
	\src{m}:\src{M_t}\mapsto\trg{\iota};\trg{\sigma}+\trg{\OB{EM}} = \src{m}:\src{M_t}\mapsto\trg{\iota};\trg{\sigma}+\trg{\emptyset}\trg{\emptyset}
}{rmmem-b-no}\quad
\typerule{ro - merged - eo - ind}{
	\trg{\OB{RO}} = \trg{RO_1},\cdots,\trg{RO_n}
	&
	\forall i \in 1 .. n
	&
	\trg{RO_i}+\trg{\OB{EO}} = \trg{RO_i'}+\trg{\eta'_i}
	\\
	\trg{\OB{RO}'}=\trg{RO_1'},\cdots,\trg{RO_n'}
	&
	\trg{\OB{\eta'}} = \trg{\eta'_1},\cdots,\trg{\eta'_n}
}{
	\trg{\OB{RO}}+\trg{\OB{EO}} = \trg{\OB{RO}'}+\trg{\OB{\eta'}}
}{eo-i}\quad
\typerule{ro - merged - eo - base}{
	\src{m}:\src{M_t}\mapsto\trg{n}\in\trg{\OB{EM}}
}{
	\src{m}:\src{M_t}\mapsto\trg{\sigma}+\trg{\OB{EM}} = \trg{\emptyset}+\trg{[\trg{n}/\trg{\sigma}]}
}{rmmem-b-ok-o}\quad
\typerule{ro - merged - eo - base2}{
	\src{m}:\src{M_t}\notin\dom{\trg{\OB{EM}}}
}{
	\src{m}:\src{M_t}\mapsto\trg{\sigma}+\trg{\OB{EM}} = \src{m}:\src{M_t}\mapsto\trg{\sigma}+\trg{\emptyset}
}{rmmem-b-no-o}\quad
\end{center}

These rules extend the single-module \PMA access control policy to account for multiple modules.
Assume $\trg{s}=(\trg{id},\trg{n_c},\trg{n_t})$.
\begin{center}
\small
	\typerule{Aux-unprotected}{
		\trg{a}=(\trg{0},\trg{n})
	}{
		\trg{\OB{s}}\vdash\acp{unprotected}(\trg{a})
	}{aux-unp}\quad
	\typerule{Aux-entrypoint}{
		\trg{a} = (\trg{id},\trg{n'})
		&
		\trg{n'}=\trg{n_b}+ m\cdot \trg{\mc{N}_w}
		&
		m \in \mb{N}
		&
		m<\trg{n_t}
	}{
		\trg{s}\vdash\acp{entryPoint}(\trg{a})
	}{aux-ep}\quad
	\typerule{Aux-methodEntryPoint}{
		\trg{a} = (\trg{id},\trg{n'})
		&
		\trg{n'}=\trg{n_b}+ m\cdot \trg{\mc{N}_w}
		&
		\trg{n'}=\trg{n_b}+ (m)\cdot \trg{\mc{N}_w}
		&
		m\in 1 .. \trg{n_t}-1
	}{
		\trg{s}\vdash\acp{entryPoint}(\trg{a})
	}{aux-mep}\quad
	\typerule{Aux-returnEntryPoint}{
		\trg{a} = (\trg{id},\trg{n'})
		&
		\trg{n'}=\trg{n_b}+ (\trg{n_t}-1)\cdot \trg{\mc{N}_w}
	}{
		\trg{s}\vdash\acp{returnEntryPoint}(\trg{a})
	}{aux-rbep}\quad
	\typerule{Aux-code}{
		\trg{a} = (\trg{id},\trg{n'})
		&
		\trg{n_b}\leq\trg{n'}<\trg{n_b+n_c}
	}{
		\trg{s}\vdash\acp{codeRange}(\trg{a})
	}{aux-c}\quad
	\typerule{Aux-data}{
		\trg{a} = (\trg{id},\trg{n'})
		&
		\trg{n_b+n_c}\leq\trg{n'}
	}{
		\trg{s}\vdash\acp{dataRange}(\trg{a})
	}{aux-}\quad
	\typerule{Aux-range}{
		\trg{a} = (\trg{id},\trg{n'})
		&
		\trg{n'}\in\mb{N}	
	}{
		\trg{s}\vdash\acp{range}(\trg{a})
	}{aux-r}\quad
	\typerule{Aux-internalJump-module}{
		\trg{s}\vdash\acp{codeRange}((\trg{id},\trg{n}),(\trg{id},\trg{n'}))
	}{
		\trg{{s}}\vdash\acp{internalJump}((\trg{id},\trg{n}),(\trg{id},\trg{n'}))
	}{aux-iju}\quad
	\typerule{Aux-writeAllowed-module}{
		\trg{s}\vdash\acp{codeRange}(\trg{a})
		&
		\trg{s}\vdash\acp{dataRange}(\trg{a'})
	}{
		\trg{{s}}\vdash\acp{writeAllowed}(\trg{a},\trg{a'})
	}{aux-wau}\quad
	\typerule{Aux-readAllowed-module}{
		\trg{s}\vdash\acp{codeRange}(\trg{a})
		&
		\trg{s}\vdash\acp{range}(\trg{a'})
	}{
		\trg{{s}}\vdash\acp{readAllowed}(\trg{a},\trg{a'})
	}{aux-rau}\quad
	\typerule{Aux-crossjump}{
		\exists \trg{s}\in\trg{\OB{s}}. \trg{s}\vdash\acp{range}(\trg{a})
		&
		\exists \trg{s'}\in\trg{\OB{s}}. \trg{s'}\vdash\acp{entryPoint}(\trg{a'})
		&
		\trg{s}\neq\trg{s'}
	}{
		\trg{\OB{s}}\vdash\acp{crossJump}(\trg{a},\trg{a'})
	}{aux-cj}\quad
	\typerule{Aux-internalJump}{
		\exists \trg{s}\in\trg{\OB{s}}. \trg{s}\vdash\acp{internalJump}(\trg{a},\trg{a'})
	}{
		\trg{\OB{s}}\vdash\acp{internalJump}(\trg{a},\trg{a'})
	}{aux-ij}\quad
	\typerule{Aux-internalJump2}{
		\trg{\OB{s}}\vdash\acp{unprotected}(\trg{a})
		&
		\trg{\OB{s}}\vdash\acp{unprotected}(\trg{a'})
	}{
		\trg{\OB{s}}\vdash\acp{internalJump}(\trg{a},\trg{a'})
	}{aux-ij2}\quad
	\typerule{Aux-validJump}{
		\trg{\OB{s}}\vdash\acp{internalJump}(\trg{a},\trg{a'})
	}{
		\trg{\OB{s}}\vdash\acp{validJump}(\trg{a},\trg{a'})
	}{aux-vj}\quad
	\typerule{Aux-validJump2}{
		\trg{\OB{s}}\vdash\acp{crossJump}(\trg{a},\trg{a'})
	}{
		\trg{\OB{s}}\vdash\acp{validJump}(\trg{a},\trg{a'})
	}{aux-vj2}\quad
	\typerule{Aux-validJump3}{
		\trg{\OB{s}}\vdash\acp{unprotected}(\trg{a})
		&
		\trg{\OB{s}}\vdash\acp{entryPoint}(\trg{a'})
	}{
		\trg{\OB{s}}\vdash\acp{validJump}(\trg{a},\trg{a'})
	}{aux-vj3}\quad
	\typerule{Aux-validJump4}{
		\trg{\OB{s}}\vdash\acp{protected}(\trg{a})
		&
		\trg{\OB{s}}\vdash\acp{unprotected}(\trg{a'})
	}{
		\trg{\OB{s}}\vdash\acp{validJump}(\trg{a},\trg{a'})
	}{aux-vj4}\quad
	\typerule{Aux-writeAllowed}{
		\exists \trg{s}\in\trg{\OB{s}}. \trg{s}\vdash\acp{writeAllowed}(\trg{a},\trg{a'})
	}{
		\trg{\OB{s}}\vdash\acp{writeAllowed}(\trg{a},\trg{a'})
	}{aux-wa}\quad
	\typerule{Aux-writeAllowed2}{
		\trg{\OB{s}}\vdash\acp{unprotected}(\trg{a})
		&
		\trg{\OB{s}}\vdash\acp{unprotected}(\trg{a'})
	}{
		\trg{\OB{s}}\vdash\acp{writeAllowed}(\trg{a},\trg{a'})
	}{aux-wa2}\quad
	\typerule{Aux-writeAllowed3}{
		\exists \trg{s}\in\trg{\OB{s}}. \trg{s}\vdash\acp{range}(\trg{a})
		&
		\trg{\OB{s}}\vdash\acp{unprotected}(\trg{a'})
	}{
		\trg{\OB{s}}\vdash\acp{writeAllowed}(\trg{a},\trg{a'})
	}{aux-wa3}\quad
	\typerule{Aux-readAllowed}{
		\exists \trg{s}\in\trg{\OB{s}}. \trg{s}\vdash\acp{readAllowed}(\trg{a},\trg{a'})
	}{
		\trg{\OB{s}}\vdash\acp{readAllowed}(\trg{a},\trg{a'})
	}{aux-ra}\quad
	\typerule{Aux-readAllowed2}{
		\trg{\OB{s}}\vdash\acp{unprotected}(\trg{a})
		&
		\trg{\OB{s}}\vdash\acp{unprotected}(\trg{a'})
	}{
		\trg{\OB{s}}\vdash\acp{readAllowed}(\trg{a},\trg{a'})
	}{aux-ra2}\quad
	\typerule{Aux-readAllowed3}{
		\exists \trg{s}\in\trg{\OB{s}}. \trg{s}\vdash\acp{range}(\trg{a})
		&
		\trg{\OB{s}}\vdash\acp{unprotected}(\trg{a'})
	}{
		\trg{\OB{s}}\vdash\acp{readAllowed}(\trg{a},\trg{a'})
	}{aux-ra3}\quad
	\typerule{Aux-readAllowed4}{
		\trg{\OB{s}}\vdash\acp{unprotected}(\trg{a})
		&
		\exists \trg{s}\in\trg{\OB{s}}. \trg{s}\vdash\acp{entryPoint}(\trg{a'})
	}{
		\trg{\OB{s}}\vdash\acp{readAllowed}(\trg{a},\trg{a'})
	}{aux-ra4}\quad
	\typerule{Aux-methodEntryPoint}{
		\exists \trg{s}\in\trg{\OB{s}}. \trg{s}\vdash\acp{methodEntryPoint}(\trg{a'})
	}{
		\trg{\OB{s}}\vdash\acp{methodEntryPoint}(\trg{a})
	}{aux-mepp}\quad
	\typerule{Aux-entryPoint}{
		\exists \trg{s}\in\trg{\OB{s}}. \trg{s}\vdash\acp{entryPoint}(\trg{a'})
	}{
		\trg{\OB{s}}\vdash\acp{entryPoint}(\trg{a})
	}{aux-epp}\quad
	\typerule{Aux-returnEntryPoint}{
		\exists \trg{s}\in\trg{\OB{s}}. \trg{s}\vdash\acp{returnEntryPoint}(\trg{a'})
	}{
		\trg{\OB{s}}\vdash\acp{returnEntryPoint}(\trg{a})
	}{aux-repp}\quad
	\typerule{Aux-protected}{
		\exists \trg{s}\in\trg{\OB{s}}. \trg{s}\vdash\acp{range}(\trg{a'})
	}{
		\trg{\OB{s}}\vdash\acp{protected}(\trg{a})
	}{aux-prot}\quad
	\typerule{Aux-currentModule}{
		\trg{s}\in\trg{\OB{s}}
		&
		\trg{s}=\trg{id},\trg{n_c},\trg{n_t}
		&
		\trg{s}\vdash\acp{range}(\trg{id},\trg{n})
	}{
		(\trg{id},\trg{n})\vdash\acp{currentModule}(\trg{\OB{s}},\trg{s})
	}{aux-cm}\quad
	\typerule{Aux-exitJump}{
		\exists \trg{s}\in\trg{\OB{s}}. \trg{s}\vdash\acp{codeRange}(\trg{a})
		&
		\trg{t}= \trg{\OB{EM}};\trg{\OB{EO}};\trg{\OB{RM}};\trg{\OB{RO}}
		&
		\src{m}:\src{t}(\src{\OB{t}})\to\src{t}\mapsto\trg{w},\trg{w'}\in{\trg{\OB{RM}}}
	}{
		\trg{\OB{s}},\trg{t}\vdash\acp{exitJump}(\trg{a},\trg{w,w'})
	}{aux-ej}\quad
	\typerule{Aux-stuck}{
		\trg{m}(\trg{p})= \trg{w}
		&
		\trg{w}\notin\trg{\mc{I}}
	}{
		\vdash\acp{stuck}(\trg{p'},\trg{r'},\trg{f'},\trg{m'},\trg{\OB{s}},\trg{h}))
	}{aux-stuck}\quad
	\typerule{Aux-validJump-traces}{
		\trg{\OB{s}}\vdash\acp{validJump}(\trg{p},\trg{a})	
	}{
		\trg{\OB{s}},\trg{t}\vdash\acp{validJump}(\trg{p},\trg{a})
	}{aux-vjt}\quad
	\typerule{Aux-validJump-traces2}{
		\trg{\OB{s}},\trg{t}\vdash\acp{exitJump}(\trg{p},\trg{a})
	}{
		\trg{\OB{s}},\trg{t}\vdash\acp{validJump}(\trg{p},\trg{a})
	}{aux-vjt2}\quad
	\typerule{Aux-forwardReturn-EP}{
		\trg{p}\equiv(\trg{1},3*\trg{\mc{N}_w})
	}{
		\trg{\OB{s}},\trg{t}\vdash\acp{forwardReturnEP}(\trg{p})
	}{aux-vjt2}\quad
\end{center}
An exit jump is only used for the trace semantics, so where states span just some modules and the descriptors describe just their layout.
An exit jump is therefore a jump to an address that is not in the domain covered by the modules.
This could be completely another module or the unprotected code.

The rules presented in the paper extend the operational semantics of \aim.
Define a module state as follows: $\trg{\Upsilon}::=(\trg{p},\trg{r},\trg{f},\trg{m},\trg{{s}},\trg{h})$ and a program state as follows: $\trg{\Omega}::=(\trg{p},\trg{r},\trg{f},\trg{m},\trg{\OB{s}},\trg{h})$.
Let $\trg{\Pi}$ denote the set of all symbolic nonces $\trg{\pi}$.
Let $\equiv$ denote syntactic equivalence; equating a number and a symbol therefore results in false.
\begin{center}
\small
	\typerule{Eval-module-common}{
		(\trg{p},\trg{r},\trg{f},\trg{m},\trg{s},\trg{h})\toid(\trg{p'},\trg{r'},\trg{f'},\trg{m'},\trg{s},\trg{h'})
		\\
		\trg{s}=(\trg{id},\trg{n_c},\trg{n_t})
		&
		\trg{p}=(\trg{id},\trg{n})
		&
		\trg{s}\vdash\acp{internalJump}(\trg{p},\trg{p'})
	}{
		(\trg{p},\trg{r},\trg{f},\trg{m},\trg{s},\trg{h})\tolid(\trg{p'},\trg{r'},\trg{f'},\trg{m'},\trg{s},\trg{h'})
	}{evalutops}\quad
	\typerule{Eval-movi}{
		\trg{p}=(\trg{id},\trg{n})
		&
		\trg{m}(\trg{p})=  (\asm{movi\ r_d\ i})
		&
		\trg{r'}=\trg{r}[\asm{r_d}\mapsto\trg{i}]
	}{
		(\trg{p},\trg{r},\trg{f},\trg{m},\trg{s},\trg{h})\toid((\trg{id},\trg{n+1}),\trg{r'},\trg{f},\trg{m},\trg{s},\trg{h})
	}{evalmovi}\quad
	\typerule{Eval-add}{
		\trg{p}=(\trg{id},\trg{n})
		&
		\trg{m}(\trg{p})=  (\asm{add\ r_d\ r_s})
		\\
		v_d=\trg{r}(\asm{r_d}), \text{ if } \trg{r}(\asm{r_d})\in\trg{\Pi} \text{ then } v_d=0
		\\
		v_s=\trg{r}(\asm{r_s}), \text{ if } \trg{r}(\asm{r_s})\in\trg{\Pi} \text{ then } v_s=0
		\\
		v=v_d+v_s
		&
	 	\trg{r'}=\trg{r}[\asm{r_d}\mapsto v]
	 	&
		\trg{f'}=\trg{f}[\trg{\ms{ZF}}\mapsto(v==0)]
	}{
		(\trg{p},\trg{r},\trg{f},\trg{m},\trg{s},\trg{h})\toid((\trg{id},\trg{n+1}),\trg{r'},\trg{f'},\trg{m},\trg{s},\trg{h})
	}{evaladd}\quad
	\typerule{Eval-sub}{
		\trg{p}=(\trg{id},\trg{n})
		&
		\trg{m}(\trg{p})=  (\asm{sub\ r_d\ r_s})
		&
		v_d=\trg{r}(\asm{r_d}), \text{ if } \trg{r}(\asm{r_d})\in\trg{\Pi} \text{ then } v_d=0
		\\
		v_s=\trg{r}(\asm{r_s}), \text{ if } \trg{r}(\asm{r_s})\in\trg{\Pi} \text{ then } v_s=0
		&
		v=v_d-v_s
		&
	 	\trg{r'}=\trg{r}[\asm{r_d}\mapsto \fun{abs}{v}]
	 	\\
		\trg{f'}=\trg{f}[\trg{\ms{ZF}}\mapsto(v==0);\trg{\ms{SF}}\mapsto(v<0)]
	}{
		(\trg{p},\trg{r},\trg{f},\trg{m},\trg{s},\trg{h})\toid((\trg{id},\trg{n+1}),\trg{r'},\trg{f'},\trg{m},\trg{s},\trg{h})
	}{evalsub}\quad
	\typerule{Eval-cmp}{
		\trg{p}=(\trg{id},\trg{n})
		&
		\trg{m}(\trg{p})=  (\asm{cmp\ r_d\ r_s})
	 	&
		\trg{f'}=\trg{f}[\trg{\ms{ZF}}\mapsto(\trg{r}(\asm{r_d}) \equiv \trg{r}(\asm{r_s}))]
	}{
		(\trg{p},\trg{r},\trg{f},\trg{m},\trg{s},\trg{h})\toid((\trg{id},\trg{n+1}),\trg{r},\trg{f'},\trg{m},\trg{s},\trg{h})
	}{evalcmp}\quad
	\typerule{Eval-je-true}{
		\trg{p}=(\trg{id},\trg{n})
		&
		\trg{m}(\trg{p})= (\asm{je\ r_d\ f_i})
		&
		\trg{n'}= \trg{r}(\asm{r_d})
		&
		\trg{f}(\asm{f_i})=\trg{1}
	}{
		(\trg{p},\trg{r},\trg{f},\trg{m},\trg{s},\trg{h})\toid((\trg{id},\trg{n'}),\trg{r},\trg{f},\trg{m},\trg{s},\trg{h})
	}{evaljet}\quad
	\typerule{Eval-je-false}{
		\trg{p}=(\trg{id},\trg{n})
		&
		\trg{m}(\trg{p})= (\asm{je\ r_d\ f_i})
		&
		\trg{f}(\asm{f_i})=\trg{0}
	}{
		(\trg{p},\trg{r},\trg{f},\trg{m},\trg{s},\trg{h})\toid((\trg{id},\trg{n+1}),\trg{r},\trg{f},\trg{m},\trg{s},\trg{h})
	}{evaljef}\quad
	\typerule{Eval-new}{
		\trg{p}=(\trg{id},\trg{n})
		&
		\trg{m}(\trg{p}) = \asm{new\ r_d}
		&
		\trg{r'} = \trg{r}[\asm{r_d}\mapsto\trg{\pi}]
		&
		\trg{\pi}\notin\trg{h}
	}{
		(\trg{p},\trg{r},\trg{f},\trg{m},\trg{{s}},\trg{h})\tolid((\trg{id},\trg{n+1}),\trg{r'},\trg{f},\trg{m},\trg{{s}},\trg{h};\trg{\pi})
	}{evalnew}\quad
	\typerule{Eval-zero}{
		\trg{p}=(\trg{id},\trg{n})
		&
		\trg{m}(\trg{p})= (\asm{zero})
		&
		\forall \trg{i} \in \mb{N}
		&
		\trg{r'}=\trg{r}[\asm{r_i}\mapsto\trg{0}]
	}{
		(\trg{p},\trg{r},\trg{f},\trg{m},\trg{s},\trg{h})\toid((\trg{id},\trg{n+1}),\trg{r'},\trg{f},\trg{m},\trg{s},\trg{h})
	}{evalzero}\quad
	\typerule{Eval-halt}{
		\trg{p}=(\trg{id},\trg{n})
		&
		\trg{m}(\trg{p})= (\asm{halt})
	}{
		(\trg{p},\trg{r},\trg{f},\trg{m},\trg{s},\trg{h})\toid((\trg{0},\trg{-1}),\trg{r},\trg{f},\trg{m},\trg{s},\trg{h})
	}{evalhalt}\quad
	\typerule{Eval-single-to-multiple-modules}{
		(\trg{p},\trg{r},\trg{f},\trg{m},\trg{s},\trg{h}) \tolid (\trg{p'},\trg{r'},\trg{f'},\trg{m'},\trg{s},\trg{h'})
		&
		\trg{p}\vdash\acp{currentModule}(\trg{\OB{s}},\trg{s})
		&
		\trg{p}=(\trg{id},\trg{n})
	}{
		(\trg{p},\trg{r},\trg{f},\trg{m},\trg{\OB{s}},\trg{h}) \tol (\trg{p'},\trg{r'},\trg{f'},\trg{m'},\trg{\OB{s}},\trg{h'})
	}{evalutop} 
	\typerule{Eval-movl}{
		\trg{p}=(\trg{id},\trg{n})
		&
		\trg{m}(\trg{p})=  (\asm{movl\ r_d\ r_s\ r_i})
		\\
		\trg{\OB{s}}\vdash\acp{readAllowed}(\trg{n}, (\trg{r}(\asm{r_s}),\trg{r}(\asm{r_i})) )
		&
		\trg{r'}=\trg{r}[\asm{r_d}\mapsto\trg{m}(\trg{r}(\asm{r_s}),\trg{r}(\asm{r_i}))]
	}{
		(\trg{p},\trg{r},\trg{f},\trg{m},\trg{\OB{s}},\trg{h}) \tol ((\trg{id},\trg{n+1}),\trg{r'},\trg{f},\trg{m},\trg{\OB{s}},\trg{h})
	}{evalmovlc}\quad
	\typerule{Eval-movs}{
		\trg{p}=(\trg{id},\trg{n})
		&
		\trg{m}(\trg{p})=  (\asm{movs\ r_d\ r_s\ r_i})
		\\
		\trg{\OB{s}}\vdash\acp{writeAllowed}(\trg{n}, (\trg{r}(\asm{r_d}),\trg{r}(\asm{r_i})))
		&
		\trg{m'}=\trg{m}[(\trg{r}(\asm{r_d}),\trg{r}(\asm{r_i}))\mapsto\trg{r}(\asm{r_s})]
	}{
		(\trg{p},\trg{r},\trg{f},\trg{m},\trg{\OB{s}},\trg{h}) \tol ((\trg{id},\trg{n+1}),\trg{r},\trg{f},\trg{m'},\trg{\OB{s}},\trg{h})
	}{evalmovsc}\quad
	\typerule{Eval-jmp}{
		\trg{p}=(\trg{id},\trg{n})
		&
		\trg{m}(\trg{p})= (\asm{jmp\ r_d\ r_i})
		&
		\trg{n'}= \trg{r}(\asm{r_d})
		&
		\trg{id'}= \trg{r}(\asm{r_i})
		&
		\trg{r'}=\trg{r}[\asm{r_0} \mapsto \trg{id}]
		\\
		\trg{\OB{s}}\vdash\acp{validJump}(\trg{p},(\trg{id'},\trg{n'}))
	}{
		(\trg{p},\trg{r},\trg{f},\trg{m},\trg{\OB{s}},\trg{h})\tol((\trg{id'},\trg{n'}),\trg{r'},\trg{f},\trg{m},\trg{\OB{s}},\trg{h})
	}{evaljmp}\quad
\end{center}


\subsubsection{Trace Equivalence for Securely-Compiled Components in \aim}\label{sec:aimtrsem}
This section defines a trace equivalence to reason about securely-compiled components in \aim; it is inspired by the trace semantics for the single-module version of \aim~\cite{llfatr-j}.

Trace equivalence is a simpler tool to reason about equivalence of components than contextual equivalence~\cite{llfatr-j,scoo-j,abadiLayout,Jagadeesan,javaJr}.
Trace equivalence relates two components that exhibit the same trace semantics.
Trace semantics describes the behaviour of a component as a set of traces.
Traces are sequences of actions \trg{\OB{\alpha}} with the following syntax:
\begin{align*}
&\mi{labels}& \trg{\lambda} ::=&\ \trg{\alpha}\mid\ \trg{\tau}
	&
&\mi{actions}& \trg{\alpha}\ ::=&\ \trg{\gamma!} \mid\ \trg{\gamma?} \mid\  \trg{\surd}
\end{align*}\vspace{-2.3em}
\begin{align*}
&\mi{observables}&	\trg{\gamma} ::=&\ \clgen{\ \trg{a}\ \trg{\OB{w}}} \mid\ \rtgen{\ \trg{a}\ \trg{w},\trg{id}}
\end{align*}
Labels $\trg{\lambda}$ generated by the trace semantics can be actions $\trg{\alpha}$ or the silent action $\trg{\tau}$.
The silent action $\trg{\tau}$ is generated by unobservable transitions, thus these labels are not accumulated in traces.
Actions $\trg{\alpha}$ can be an observable, decorated action $\trg{\gamma}$ or a tick $\trg{\surd}$, which indicates termination of the computation.
Observable actions $\trg{\gamma}$ are either a function call to an address $\trg{a}$ with parameters \trg{\OB{w}}, or a return of value $\trg{w}$ to address $\trg{a}$ from the module with id \trg{id}.
Decorations $\trg{?}$ or $\trg{!}$ indicate whether unprotected code performs the action ($\trg{?}$) or receives it ($\trg{!}$).

Define a (compiled) component state with $\trg{\Theta}$, it indicates either that the execution is within a component or that the execution is in an unknown location.
Formally: $\trg{\Theta}::= (\trg{p},\trg{r},\trg{f},\trg{m},\trg{\OB{s}},\trg{h},\trg{t})\mid\blk$.
$\trg{\Theta_0}(\trg{P},\trg{h})$ denotes the initial state of a component $\trg{P}=(\trg{m},\trg{\OB{s}},\trg{t})$ with a nonce oracle $\trg{h}$.
Formally: $\trg{\Theta_0}(\trg{P},\trg{h})=\blk$.
Relation $\xtol{\trg{{\lambda}}}\ \subseteq\trg{\Theta}\times\trg{{\lambda}}\times\trg{\Theta}$ captures how single labels are generated.
The reflexive-transitive closure of $\xtol{\trg{{\lambda}}}$ accumulates labels in traces and filters out silent traces; it is captured by relation $\Xtol{\trg{\OB{\alpha}}}\ \subseteq\trg{\Theta}\times\trg{\OB{\alpha}}\times\trg{\Theta}$. 
The auxiliary functions adopted for the trace semantics are used to tell if an address is an entry point (\acp{methodEntryPoint}), or if it is the entry point related to the \fun{forwardReturn}{\cdot} procedure in \sys (\acp{forwardReturnEP}).
Function \acp{exitJump} detects jumps outside the memory space of a component; function \acp{stuck} tells if a state is stuck.
\begin{figure}
\centering
\small
	\typerule{Trace-call}{
		\trg{\OB{w}} = \trg{w_0},\cdots,\trg{w_{6+k}}
		&
		\trg{r}=[\asm{r_0}\mapsto\trg{w_0},\cdots,\asm{r_{6+k}}\mapsto\trg{w_{6+k}}]
		\\
		\trg{\OB{s}}\vdash\acp{methodEntryPoint}(\trg{p})
		&
		\src{m}:\src{t}(\src{t_1\cdots,t_k})\to\src{t}\mapsto\trg{p} \in \trg{\OB{EM}}
		\\
		\trg{t} = \trg{\OB{EM}};\trg{\OB{EO}};\trg{\OB{RM}};\trg{\OB{RO}}
	}{
		\blk\xtol{\cl{\trg{p}\ \trg{\OB{w}}}}(\trg{p},\trg{r},\trg{f},\trg{m},\trg{\OB{s}},\trg{h},\trg{t})
	}{trace-call}\quad 
	\typerule{Trace-outcall}{
		\trg{m}(\trg{p})= (\asm{jmp\ r_d\ r_i})
		&
		\trg{p'}=(\trg{\iota},\trg{\sigma})
		&
		\trg{r}(\asm{r_d})\mapsto\trg{\sigma}
		&
		\trg{r}(\asm{r_i})\mapsto\trg{\iota}
		\\
		\trg{\OB{w}} = \trg{w_0},\cdots,\trg{w_{6+k}}
		&
		\trg{r}=[\asm{r_0}\mapsto\trg{w_0},\cdots,\asm{r_{6+k}}\mapsto\trg{w_{6+k}}]
		\\
		\trg{\OB{s}},\trg{t}\vdash\acp{exitJump}(\trg{p},\trg{p'})
		&
		\trg{r}(\asm{r_5})=\trg{3*\mc{N}_w}
		\\
		\src{m}:\src{t}(\src{t_1\cdots,t_k})\to\src{t}\mapsto\trg{\iota};\trg{\sigma} \in \trg{\OB{RM}}
		&
		\trg{t} = \trg{\OB{EM}};\trg{\OB{EO}};\trg{\OB{RM}};\trg{\OB{RO}}
	}{
		(\trg{p},\trg{r},\trg{f},\trg{m},\trg{\OB{s}},\trg{h},\trg{t})\xtol{\cb{\trg{p'}\ \trg{\OB{w}}}}\blk
	}{trace-outcall}\quad
	\typerule{Trace-returnback}{
		&
		\trg{r}=[\asm{r_0}\mapsto\trg{id'},\asm{r_6}\mapsto\trg{w}]
		&
		\trg{\OB{s}}\vdash\acp{forwardReturnEP}(\trg{p})
	}{
		\blk\xtol{\rb{\trg{p}\ \trg{w},\trg{id'}}}(\trg{p},\trg{r},\trg{f},\trg{m},\trg{\OB{s}},\trg{h},\trg{t})
	}{trace-retback}\quad
	\typerule{Eval-return}{
		\trg{m}(\trg{p})= (\asm{jmp\ r_d\ r_i})
		&
		\trg{p}=(\trg{id},\trg{n})
		&
		\trg{p'}=(\trg{id'},\trg{n'})
		\\
		\trg{r}(\asm{r_d})\mapsto\trg{n'}
		&
		\trg{r}(\asm{r_i})\mapsto\trg{id'}
		&
		\trg{r}=[\asm{r_6}\mapsto\trg{w}]
		\\
		\trg{\OB{s}},\trg{t}\vdash\acp{exitJump}(\trg{p},\trg{p'})
		&
		\trg{r}(\asm{r_5})\mapsto\trg{0}
	}{
		(\trg{p},\trg{r},\trg{f},\trg{m},\trg{\OB{s}},\trg{h},\trg{t})\xtol{\rt{\trg{p'}\ \trg{w},\trg{id}}} (\unk,\trg{m},\trg{\OB{s}},\trg{h},\trg{t})
	}{trace-return}

	\typerule{Trace-tau}{
		\trg{\Theta}
		\tol
		\trg{\Theta'}
	}{
		\trg{\Theta}
		\xtol{\trg{\tau}}
		\trg{\Theta'}
	}{trace-tau}
	\typerule{Trace-refl}{
	}{
		\trg{\Theta}\Xtol{\epsilon}\trg{\Theta}
	}{trrefl}
	\typerule{Trace-tau}{
		\trg{\Theta}\xtol{\trg{\tau}}\trg{\Theta'}
	}{
		\trg{\Theta}\Xtol{\epsilon}\trg{\Theta'}
	}{trtau}
	\typerule{Trace-trans}{
		\trg{\Theta}\xtol{\trg{{\alpha}}}\trg{\Theta''} 
		\\
		\trg{\Theta''}\Xtol{\trg{\OB{\alpha}'}}\trg{\Theta'}
	}{
		\trg{\Theta}\Xtol{\trg{\OB{\alpha}}\cdot\trg{\OB{\alpha}'}}\trg{\Theta'}
	}{trtrans}
	\typerule{Trace-termination}{
		\trg{\Theta}\tol\trg{\Theta'}
		\\
		\vdash\acp{stuck}(\trg{\Theta'})
	}{
		\trg{\Theta}
		\xtol{\trg{\surd}}
		\trg{\Theta'}
	}{trace-tick}
\end{figure}

Define the trace semantics of a component $\trg{P}$ as: $\taim{\trg{P}}=\{\trg{\OB{\alpha}} \mid \forall \trg{h}.\exists\trg{\Theta}.$ $\trg{\Theta_0}(\trg{P},\trg{h})\Xtol{\trg{\OB{\alpha}}}\trg{\Theta}\}$.
Two modules $\trg{P_1}$ and $\trg{P_2}$ are trace equivalent, denoted with $\trg{P_1}\teqaim\trg{P_2}$ if their trace semantics coincides.
\begin{definition}[Trace equivalence]\label{def:treqaim}
$\trg{P_1}\teqaim\trg{P_2}\triangleq$ $\taim{\trg{P_1}}=\taim{\trg{P_2}}$.
\end{definition}
For securely-compiled components, trace semantics coincides with contextual equivalence (\Cref{thm:fatracesaim}).

\begin{proofsketch}
This proof is analogous to the one for the single-module version of \aim~\cite{llfatr-j}, we just give an informal argument why it holds, though the proof of~\cite{llfatr-j} can be easily adapted to scale for \aim.
Informally, we need to prove that labels capture all the information that is communicated between a securely-compiled component and external code.

\begin{description}
\item[\trg{!}-decorated actions].
The \rt{\cdots} action has a standard structure for when it is valid: \rt{\trg{a}\ \trg{w},\trg{1}}, where \trg{a} has been received on \asm{r_0} and \asm{r_5} beforehand by \sys.

Also \cb{\cdots} has a standard structure: \cb{\trg{id,n}\ (\trg{1},\trg{n},\trg{0},\trg{0},\trg{0},3*\trg{\mc{N}_w},\OB{\trg{w}})}, where \trg{id,n} is in the export tables of the modules.

\Cref{ass:comp-corr} about \comp{\cdot} code ensures that all communication coming from a compiled component happen only via entry points.

Specifically, this communication is regulated by the code added via the \prot{\cdot} function.

In case of a \cb{\cdot} only the registers used to carry a parameter (including the current object) are passed because \fun{resetFlags}{\cdot}, \fun{resetRegisters}{\cdot} and \fun{resetRegistersExcept}{\cdot} ensure that other registers are always \trg{0}.

In case of a \rt{\cdot} \fun{resetFlags}{\cdot}, \fun{resetRegisters}{\cdot} and \fun{resetRegistersExcept}{\cdot} ensure that only the module \trg{id} and the returned value (in \asm{r_6}) are not always \trg{0}.

Some of the data communicated via registers could be omitted since, as stated, it is always the same so it conveys no information.
We keep this data as we chose for the same structure of labels for incoming and outgoing actions.

\item[ \trg{?}-decorated actions].
Nor the code inserted by \prot{\cdot} nor the one generated by \comp{\cdot} uses flags as set by external code, so flags are never used to convey information.

In case of a \cl{\cdot}, due to the assumption on the correctness of \comp{\cdot} and on the definition of \prot{\cdot} only the registers used to carry parameters are used for computation.

In case of a \rb{\cdot}, the only registers used for calculation by \prot{\cdot} are \asm{r_0} and \asm{r_6}, which are the only ones captured by the trace semantics.
\qedhere
\end{description}
\end{proofsketch}

\section{Algorithm Formalisation}\label{sec:algoaim}

\subsection{Trace Back-Translation Algorithm \algo{\cdot}}\label{sec:algo}
Informally, the algorithm is used to build a \jem context that can distinguish two \jem components whose compiled counterparts are trace inequivalent.

The algorithm inputs two distinct \aim traces \trg{\OB{\alpha_1}} and \trg{\OB{\alpha_2}} and the two \jem components that generate them \src{\mc{C}_1} and \src{\mc{C}_2}.
In other words $\trg{\OB{\alpha_1}}\in\taim{\compaim{\src{\mc{C}_1}}}$ and $\trg{\OB{\alpha_2}}\in\taim{\compaim{\src{\mc{C}_2}}}$.
The algorithm outputs a single \jem component \src{\mc{C}} that is the context that can differentiate between \src{\mc{C}_1} and \src{\mc{C}_2}.
Formally: $\algo{\src{\mc{C}_1},\src{\mc{C}_2},\trg{\OB{\alpha_1}},\trg{\OB{\alpha_2}}}=\src{\mb{C}}$.
The two traces are the same up to the last action, therefore $\trg{\OB{\alpha_1}}\equiv\trg{\OB{\alpha}\alpha_1!}$ and $\trg{\OB{\alpha_2}}\equiv\trg{\OB{\alpha}\alpha_2!}$ where $\trg{\alpha_1!}\neq\trg{\alpha_2!}$.
Trace \trg{\OB{\alpha}} is called the \emph{common prefix} while the two actions $\trg{\alpha_1!}$ and $\trg{\alpha_2!}$ are the \emph{different actions}, which appear at index $i$.

Intuitively, the algorithm produces code \src{\mb{C}} that replicates all ?-decorated actions in the common prefix \trg{\OB{\alpha}}.
The !-decorated actions are made by the component that fills the hole of \src{\mb{C}} and the algorithm adds code to \src{\mc{C}} to update its internal state.
Then, \src{\mb{C}} performs the differentiation by terminating in case it detects \src{\mc{C}_1} and diverging in the other.

The algorithm is divided in three sub-routines: building the skeleton (\skel{\src{\mc{C}_1},\src{\mc{C}_2}}), emulating the common prefix (\emul{\trg{\OB{\alpha}},\trg{t}}) and distinguishing the different actions (\diff{\trg{\alpha_1!},\trg{\alpha_2!},i}).

\paragraph{$\skel{\src{\mc{C}_1},\src{\mc{C}_2}}$}
implements all classes and objects that \src{\mc{C}_1} and \src{\mc{C}_2} specify in their import declarations.
Additionally, it creates helper functions and objects, e.g., it has tables where all globally-known objects are stored and it has a variable to keep track of the action being emulated.

\paragraph{$\emul{\trg{\OB{\alpha}},\trg{t}}$}
returns method bodies that fill the classes created by \skel{\cdot} based on the different encountered actions.
Intuitively, this algorithm subroutine creates a \jem context that emulates all ?-actions in the traces.

\begin{itemize}
	\item$\cl{\trg{a}\ \trg{\OB{w}}}$
	In this case the context must call method \src{m} compiled at address \trg{a}; \trg{t} tells which \aim addresses correspond to which \jem methods.
	Based on the (known) signature of \src{m}, the parameters \trg{\OB{w}} are emulated to their \jem counterpart.
	
	Primitively-typed parameters are emulated to their \jem counterpart.
	For example when a \src{\lstb{Bool}} is expected and \comp{\src{\lst{true}}} is received, the parameter is emulated by \src{\lst{true}}.
	Object-typed parameters are stored in globally-accessible tables based on the encoding of their type and of the id they have in \aim.
	These tables have getters and setters to retrieve objects based on their \aim id.
	
	\item$\rb{\trg{a}\ \trg{w}}$
	In this case the produced code must return the value emulating \trg{w}, as discussed in the call case.

	\item$\cb{\trg{a}\ \trg{\OB{w}}}$ and $\rt{\trg{a}\ \trg{w}}$
	In these cases, the internal state of \src{\mb{C}} is updated.
	For example, \src{\mb{C}} keeps track of the index of the action that is emulating and of all allocated objects.
	In these cases, the index is incremented by 1 and all newly allocated objects received via \trg{\OB{w}} or \trg{w} are added to the tables.
\end{itemize}

There are cases in which a ?-decorated action cannot be emulated, e.g., a return when no method call was made or a call to a method that expects a \src{\lstb{Unit}}-typed parameter and instead receives \comp{\src{\lst{true}}}.
These are actions that try to violate \jem abstractions; in this cases the algorithm must \fail\ and generate a differentiating component that does nothing (\Cref{thm:maintheorem}).
\begin{lemma}[Termination is emulation failure]\label{thm:maintheorem}
$\forall\src{\mc{C}},\comp{\src{\mc{C}}}=(\trg{m_c},\trg{m_d},\trg{t}). \trg{\OB{\alpha}\surd}\in\taim{\compaim{\src{\mc{C}}}} \iff \emul{\trg{\OB{\alpha}},\trg{t}}=\fail$.
\end{lemma}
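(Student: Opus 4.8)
The plan is to prove both implications by relating the \emph{only} way a securely-compiled component can halt (emit $\surd$) to the \emph{only} way $\emul{\cdot}$ can report $\fail$, showing that both are triggered by exactly the same event: the last incoming ($\trg{?}$-decorated) action of $\trg{\OB{\alpha}}$ being inexpressible as legal \jem behaviour. Throughout I may assume, as holds in every application of the lemma, that $\trg{\OB{\alpha}}$ is a producible prefix (a common prefix of two genuine traces), so that at most its last action can offend; the argument then synchronises the target execution with the emulator step by step.

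First I would establish the key structural fact that $\compaim{\src{\mc{C}}}$ reaches a stuck state only by running the abort code inserted by $\prot{\cdot}$ or by \sys. This follows from \Cref{ass:comp-corr} (the code produced by $\comp{\cdot}$ faithfully simulates \jem, which has no undefined behaviour and hence never gets stuck on its own) together with \Cref{ass:comp-res} (compiled code uses neither $\lstb{exit}$ nor unprotected memory). Since the guard code runs only at method entry points, at the return entry point, and inside the \sys procedures $\fun{forwardCall}{}$ and $\fun{forwardReturn}{}$, every abort — hence every $\surd$ — occurs immediately after an incoming action $\cl{\cdot}$ or $\rb{\cdot}$. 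Dually, $\emul{\cdot}$ only ever translates $\trg{?}$-decorated actions into \jem code, the $\trg{!}$-decorated ones merely advancing its bookkeeping, so it can report $\fail$ only on such an action. Thus on both sides the decisive event is the last incoming action of $\trg{\OB{\alpha}}$, and I would proceed by induction on $\trg{\OB{\alpha}}$, maintaining the invariant that the target structures — each module's masking table $\mc{T}$, the global store $\mc{G}$ and the call stack $\mc{S}$ — stay in correspondence with the emulator's object tables and call index as long as no abort/failure has occurred.

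The core of the argument is a one-to-one matching of abort conditions against emulation-failure conditions, carried out by case analysis on which guard fires. A failed $\fun{dynamicTypechecks}{\cdot}$ (a $\src{\lstb{Unit}}$/$\src{\lstb{Bool}}$ word outside its encoding, or $\fun{testObj}{\cdot}$ reporting a class mismatch) corresponds to $\emul{\cdot}$ being asked to produce a \jem value of a type it cannot inhabit from the received word (\Cref{prob:types}). An abort of $\fun{testObj}{\cdot}$, or a failed $\mc{T}$-lookup on an unregistered id, corresponds to $\emul{\cdot}$ finding no object for that id in its tables, i.e.\ a guessed or unshared reference (\Cref{prob:guessing,prob:exist}). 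An abort of $\fun{forwardReturn}{}$ on an empty or mismatched $\mc{S}$ corresponds to $\emul{\cdot}$ encountering a return with no matching pending call, i.e.\ ill-bracketed control flow (\Cref{prob:callstack}). In each case I would check that the guard fires precisely when $\trg{t}$ and the reconstructed state deem the action illegal, so that the component aborts on the last action of $\trg{\OB{\alpha}}$ iff $\emul{\trg{\OB{\alpha}},\trg{t}}$ fails on it; combining this with the structural fact and the inductive invariant yields both implications.

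I expect the main obstacle to be the faithfulness of the state-correspondence invariant: I must show that, along any clean prefix, the contents of $\mc{G}$, $\mc{S}$ and of every $\mc{T}$ are exactly mirrored by the emulator's tables and index, so that a guard and its emulation counterpart always agree on whether the next incoming action is legal. Getting the object-identity bookkeeping right across masking, cross-module forwarding and registration — so that ``exists and has class $\comp{\src{c}}$'' on the target agrees with ``is present in the emulator's table for $\src{c}$'' — is the delicate part; the remaining bracketing and typing cases are then routine once the correspondence is in place.
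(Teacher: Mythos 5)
Your high-level plan coincides with the paper's own proof: both directions are discharged by matching, case by case, the abort conditions of the checks inserted by \prot{\cdot} and \sys against the \fail\ rules of \emul{\cdot}, organised as an induction over the trace with a case analysis on the last incoming action. The explicit state-correspondence invariant you formulate (between \mc{G}, \mc{S}, the masking tables and the emulator's tables and step index) is left implicit in the paper's sketch; making it explicit is a refinement of the same argument, not a different route.

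There is, however, a genuine gap in how you synchronise the two sides. Your ``key structural fact'' --- that $\compaim{\src{\mc{C}}}$ reaches a stuck state \emph{only} by running abort code inserted by \prot{\cdot} or \sys, so that every \trg{\surd} is triggered by the guard examining the last incoming action --- is false. A \trg{\surd} can also be produced by the \PMA access-control hardware itself, with no inserted guard ever firing, and crucially this halt can be \emph{deferred}. Concretely (the paper's case 5r): a protected attacker module invokes \fun{forwardCall}{} with a well-typed call whose stored return address $(\trg{r}(\asm{r_0}),\trg{r}(\asm{r_5}))$ is not an entry point of the attacker's own module. Every inserted check passes, since none of them inspects executability of the stored return address; the called method runs, and may itself perform outcalls, generating further perfectly legal actions $\cb{\cdots}$ and $\rb{\cdots}$; only when the method finally returns does \fun{forwardReturn}{} execute \asm{jmp\ r_1\ r_2} to the stored address, and it is the \PMA policy --- not any guard in your enumeration --- that kills the execution and yields the \trg{\surd}. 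In such a trace the last incoming action before the \trg{\surd} is an unobjectionable return-back, while the inexpressible action is an \emph{earlier} call; note also that this prefix is producible by the compiled component, so your standing assumption that only the last action of \trg{\OB{\alpha}} can offend fails as well. The paper copes with this by decoupling the two sides: \emul{\cdot} \fail s \emph{eagerly} on the offending call itself (\Cref{tr:emulcall-f}, whose premise is exactly that $(\trg{w_0},\trg{w_5})$ is not an executable address in \trg{t}), whereas the target halts lazily at the eventual return, and the equivalence is then argued at the level of whole traces rather than action by action. To repair your proof you must (i) widen the structural fact to admit \PMA-induced stuckness alongside inserted aborts, and (ii) replace the per-action synchronisation by the weaker disjunction: either some action of \trg{\OB{\alpha}} already \fail s under \emul{\cdot}, or the whole of \trg{\OB{\alpha}} emulates cleanly and only then do abort and failure coincide on the next incoming action.
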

\Cref{thm:maintheorem} is the most important result of this paper, as proving it reveals which issues arise in a secure compiler.
This \namecref{thm:maintheorem} states that any traces that causes a termination tick cannot be replicated at the source level (i.e., the algorithm \fail s).
An action that cannot be replicated in \jem is an action that tries to violate the abstractions of \jem.
Termination ticks are only generated by the execution of \asm{halt}; only the checks inserted by \prot{\cdot} or by \sys insert that instruction.
Traces with a \trg{\surd} are traces that trigger a check; so they are violating a \jem abstraction in compiled code.
If this is true for all traces with a \trg{\surd} (the $\Rightarrow$ direction), and also all traces that incur violations generate a \trg{\surd} (the $\Leftarrow$ direction) then the algorithm is correct.
Most importantly, this proof indicates where to place the checks at entry points and what to check.
The fact that it is not possible to emulate something (e.g., a \src{\lstb{Bool}}-typed parameter that is not \comp{\src{\lst{true}}} nor \comp{\src{\lst{false}}}) indicates what checks to add (in this case, that \src{\lstb{Bool}}-typed parameters must be either \comp{\src{\lst{true}}} or \comp{\src{\lst{false}}}, i.e., function \fun{dynamicTypechecks}{\cdot}).

\paragraph{$\diff{\trg{\alpha_1!},\trg{\alpha_2!},i}$} 
returns two different code fragments that will detect whether the produced context is interacting with \src{\mc{C}_1} or \src{\mc{C}_2}.
For example, consider the two actions to be $\rt{\trg{a}\ \comp{\src{\lst{true}}}}$ and $\rt{\trg{a}\ \comp{\src{\lst{false}}}}$.
After the last emulated action, the context needs to check the returned value and terminate in case it is \src{\lst{true}} and diverge in the other.

All cases when two !-decorated actions are different are considered in \diff{\cdot}; they are omitted for space reasons.


\subsection{The Algorithm, Formally \algo{\cdot}}\label{sec:algoform}
\skel{\src{\mc{C}_1},\src{\mc{C}_2}} returns a component \src{\mc{C}}.

\emul{\trg{\OB{\alpha}},\trg{t}} returns a list of expressions bound to the method where they must be added.

\diff{\trg{\alpha_1},\trg{\alpha_2},i,M,V, \OB{\src{m}}, \OB{\src{t}}} returns a pair of expressions bound to the methods where they must be added.

Define code additions \src{A} as follows: $\src{A} ::= \src{E}@\src{m} \mid \src{E}@\src{m}@i$.

$\src{\mc{C}} + {\src{E}@\src{m}}$ returns a new component where the body of method \src{m} has been extended with \src{E}.

$\src{\mc{C}} + {\src{E}@\src{m}@i}$ is analogous but the extension \src{E} is placed in the body of method \src{m} inside an \lst{if} statement whose guard is \lst{oc.isStep( i )}.
The necessity of this will be clear after having seen the rest of the functions.

The algorithm can be defined as follows:
\begin{center}
\small
\typerule{\algo{\cdot}}{
	\compaim{\src{\mc{C}_1}} = (\trg{m_1},\trg{\OB{s}},\trg{t})
	&
	\compaim{\src{\mc{C}_2}} = (\trg{m_2},\trg{\OB{s}},\trg{t})
	\\
	\skel{\src{\mc{C}_1},\src{\mc{C}_2}} = \src{\mc{C}}
	\\
	\emul{\trg{\OB{\alpha}},\trg{t}} = \OB{\src{A}},M,V, \OB{\src{m}}, \OB{\src{t}},i 
	\\
	\diff{\trg{\alpha_1},\trg{\alpha_2},i,M,V, \OB{\src{m}}, \OB{\src{t}}} = \src{A_1},\src{A_2} 
}{
	\algo{\src{\mc{C}_1},\src{\mc{C}_2},\trg{\OB{\alpha}\alpha_1},\trg{\OB{\alpha}\alpha_2}} = \src{\mc{C}} +\OB{\src{A}} + \src{A_1} + \src{A_2}
}{algo}
\end{center}

\subsection{\skel{\src{\mc{C}_1},\src{\mc{C}_2}}}\label{sec:skel}
\begin{center}
\small
\typerule{Collect Interfaces - base}{
	\src{\OB{M}} = \lstb{public}\ \src{m_1}(\src{\OB{x_1}}):\src{{M_t}_1}\ \{\lstb{return}\ \src{E_1};\},\cdots,\lstb{public}\ \src{m_k}(\src{\OB{x_k}}):\src{{M_t}_k}\ \{\lstb{return}\ \src{E_k};\}
	\\
	\src{I} = \lstb{class-decl}\ \src{c} \{\src{{M_t}_1},\cdots,\src{{M_t}_k}\}
}{
	\fun{interfaces}{\lstb{import}\ \src{\OB{I}};\src{\OB{X}};\ \lstb{class}\ \src{c} \{\src{K}\ \src{\OB{F_t}}\ \src{\OB{M}}\}; \src{\OB{O}}} = \src{I}
}{collectinterfacesbase}\quad
\typerule{Collect Interfaces - inductive}{
	\forall i \in 1 .. k 
	&
	\fun{interfaces}{\src{C_i}}= \src{I_i}
}{
	\fun{interfaces}{\src{C_1,\cdots,C_k}} = \src{I_1,\cdots,I_k}
}{collectinterfacesind}\quad
\typerule{Collect externs - base}{
	\src{\OB{O}} = \lstb{object}\ \src{o_1}:\src{c} \{\src{\OB{F}_1}\},\cdots,\lstb{object}\ \src{o_k}:\src{c} \{\src{\OB{F}_k}\}
	\\
	\src{\OB{X}'} = \lstb{obj-decl}\ \src{o_1}:\src{c},\cdots,\lstb{obj-decl}\ \src{o_k}:\src{c};
}{
	\fun{externs}{\lstb{import}\ \src{\OB{I}};\src{\OB{X}};\ \lstb{class}\ \src{c} \{\src{K}\ \src{\OB{F_t}}\ \src{\OB{M}}\}; \src{\OB{O}}} = \src{\OB{X}'}
}{collectexternsbase}\quad
\typerule{Collect externs - inductive}{
	\forall i \in 1 .. k 
	&
	\fun{externs}{\src{C_i}} = \src{X_i}
}{
	\fun{externs}{\src{C_1,\cdots,C_k}} = \src{\OB{X_1}},\cdots,\src{\OB{X_k}}
}{collectexternsind}\quad
\typerule{stub-class}{
	\src{C} = \lstb{import}\ \src{\OB{I}};\src{\OB{X}};\ \lstb{class}\ \src{i} \{\src{K}\ \src{\emptyset}\ \src{\OB{M}}\}; \src{\OB{O}};\src{O}
	&
	\src{K} = \src{i}(\src{\emptyset})\ \{ \}
	\\
	\fun{staticFor}{\src{i}} = \src{O} 
	&
	\src{i}\mapsto\src{\OB{O}} \in \OB{\src{i}\mapsto\src{\OB{O}}}
	\\
	\src{M_t} = \src{{M_t}_1},\cdots,\src{{M_t}_k}
	&
	\forall i \in 1..k.
	&
	\fun{stub-method}{\src{{M_t}_i}} = \src{M_i} 
	\\
	\src{\OB{M}}=\src{M_1},\cdots,\src{M_k}
}{
	\fun{stub-class}{\lstb{class-decl}\ \src{i} \{\src{\OB{M_t}}\}, \src{\OB{I}};\src{\OB{X}},\OB{\src{i}\mapsto\src{\OB{O}}}} = \src{C}
}{stub-class}\quad
\typerule{stub-class inductive}{
	\forall i \in 1 .. k
	&
	\fun{stub-class}{\src{I_i}} = \src{C_i}
}{
	\fun{stub-class}{\src{I_1},\cdots,\src{I_k}, \src{\OB{I}};\src{\OB{X}},\OB{\src{i}\mapsto\src{\OB{O}}}} = \src{C_1},\cdots,\src{C_k}
}{stubclassind}\quad
\typerule{stub-obj}{
	\lstb{object}\ \src{o}:\src{t} \{\src{\emptyset}\}	= \src{O} 
}{
	\fun{stub-obj}{ \lstb{obj-decl}\ \src{o}:\src{t};	} = \src{c}\mapsto\src{O}
}{stubobj}\quad
\typerule{stub-obj ind}{
	\forall i \in 1 .. k
	&
	\fun{stub-obj}{\src{X_i}} = \src{c_i}\mapsto\src{O_i}
}{
	\fun{stub-obj}{\src{X_1},\cdots,\src{X_k}} = \src{c_1}\mapsto\src{O_1}\uplus\cdots\uplus\src{c_k}\mapsto\src{O_k}
}{stubobjind}\quad
\typerule{stub-method-unit}{
	\src{M} = \lstb{public}\ \src{m}(\src{\OB{x}}):\src{t}(\src{\OB{t}})\to\src{\lstb{Unit}}\ \{\lstb{return}\ \src{\lst{unit}};\}	
}{
	\fun{stub-method}{\src{m} : \src{t}(\src{\OB{t}})\to\src{\lstb{Unit}}} = \src{M}
}{stub-meth-unit}\quad
\typerule{stub-method-bool}{
	\src{M} = \lstb{public}\ \src{m}(\src{\OB{x}}):\src{t}(\src{\OB{t}})\to\src{\lstb{Bool}}\ \{\lstb{return}\ \src{\lst{true}};\}	
}{
	\fun{stub-method}{\src{m} : \src{t}(\src{\OB{t}})\to\src{\lstb{Bool}}} = \src{M}
}{stub-meth-bool}\quad
\typerule{stub-method-int}{
	\src{M} = \lstb{public}\ \src{m}(\src{\OB{x}}):\src{t}(\src{\OB{t}})\to\src{\lstb{Int}}\ \{\lstb{return}\ \src{0};\}	
}{
	\fun{stub-method}{\src{m} : \src{t}(\src{\OB{t}})\to\src{\lstb{Int}}} = \src{M}
}{stub-meth-int}\quad
\typerule{stub-method-obj}{
	\src{M} = \lstb{public}\ \src{m}(\src{\OB{x}}):\src{t}(\src{\OB{t}})\to\src{c}\ \{\lstb{return}\ \src{\lst{null}};\}	
}{
	\fun{stub-method}{\src{m} : \src{t}(\src{\OB{t}})\to\src{c}} = \src{M}
}{stub-meth-obj}\quad
\typerule{staticfor}{}{
	\fun{staticFor}{\src{i}} = \lstb{object}\ \src{static-for-i}:\src{i} \{\src{\emptyset}\}
}{staticfor}\quad
\typerule{skel}{
	\fun{interfaces}{\src{C_1},\cdots,\src{C_k}} = \src{\OB{I}}
	&
	\fun{externs}{\src{C_1},\cdots,\src{C_k}} = \src{\OB{E}}
	\\
	\forall i \in 1 .. k.
	&
	\src{C_i} = \lstb{import}\ \src{\OB{I}_i};\src{\OB{X}_i};\ \lstb{class}\ \src{c_i} \{\src{K_i}\ \src{\OB{F_t}_i}\ \src{\OB{M}_i}\}; \src{\OB{O}_i}
	\\
	\src{\OB{I}'} = \src{\OB{I}_1},\cdots,\src{\OB{I}_k}
	&
	\src{\OB{X}'} = \src{\OB{X}_1},\cdots,\src{\OB{X}_k}
	&
	\fun{helpers}{} = \src{C}
	\\
	\fun{stub-obj}{\src{\OB{X}'}} = \OB{\src{c}\mapsto\src{\OB{O}}} 
	&
	\fun{stub-class}{\src{\OB{I}'},\src{\OB{I}},\src{\OB{X}},\OB{\src{c}\mapsto\src{\OB{O}}}} = \src{\mc{C}}
}{
	\skel{\src{C_1},\cdots,\src{C_k},\src{C_1'},\cdots,\src{C_k'}} = \src{\mc{C}};\src{C}
}{skel}\quad
\typerule{helpers}{
	\src{C} = \lstb{import}\ \src{\emptyset};\src{\emptyset};\ \lstb{class}\ \src{Helper} \{\src{K}\ \src{\OB{F_t}}\ \src{\OB{M}}\}; \src{\OB{O}}
	\\
	\fun{helper-fieldtypes}{}=\src{\OB{F_t}}
	\\
	\fun{helper-objects}{}=\src{\OB{O}}
	&
	\fun{helper-methods}{}=\src{\OB{M}}
	\\
	\fun{helper-constructor}{}=\src{K}
}{
	\fun{helpers}{} = \src{C}
}{hel}\quad
\typerule{helpers-fieldtypes}{
	\src{\OB{F_t}} = \src{step}:\src{\lstb{Int}}
}{
	\fun{helper-fieldtypes}{}=\src{\OB{F_t}}
}{hl-ft}\quad
\typerule{helpers-objects}{
	\lstb{object}\ \lst{oc}:\src{Helper} \{\src{step = 0}\}		
}{
	\fun{helper-objects}{}=\src{O}
}{hl-o}\quad
\typerule{helpers-methods}{
	\lstb{public}\ \src{isStep}(\src{x}):\src{M_t}\ \{\lstb{return}\ \lstb{if}\ (\lst{this.step} == \src{x})\ \{\src{\lst{true}}\}\ \lstb{else}\ \{\src{\lst{false}}\};\}	
	\\
	\src{M_t} =  \lst{isStep} : \src{Helper}(\src{\lstb{Int}})\to\src{\lstb{Bool}}
	\\
	\lstb{public}\ \lst{incStep}():\src{M_t'}\ \{\lstb{return}\ \lst{this.step + 1; unit};\}	
	\\
	\src{M_t'} =  \lst{incrStep} : \src{Helper}()\to\src{\lstb{Unit}}
	\\
	\lstb{public}\ \lst{diverge}():\src{M_t''}\ \{\lstb{return}\ \lst{this.diverge()};\}	
	\\
	\src{M_t''} =  \lst{diverge} : \src{Helper}()\to\src{\lstb{Unit}}
	\\
	\lstb{public}\ \lst{main}():\src{M_t'''}\ \{\lstb{return 0};\}	
	\\
	\src{M_t'''} =  \lst{main} : \src{Helper}()\to\src{\lstb{Int}}
}{
	\fun{helper-methods}{}=\src{\OB{M}}
}{hl-m}\quad
\typerule{helpers-constructor}{
	\src{K} = \src{Helper}(\src{step:\lstb{Int}})\ \{\lstb{this}.\src{step}=\src{0}\}
}{
	\fun{helper-constructor}{}=\src{K}
}{hl-k}\quad
\end{center}
Note that the two components input by \skel{} must have the same imports and they must define the same classes, otherwise they are trivial to differentiate.
Therefore, to build the skeleton in \Cref{tr:skel} we use only one component.
Assume the names \src{Helper} and \src{oc} are fresh (a simple substitution can ensure this).

For all class \src{c} and all interface \src{i} defined in \src{\mc{C}_1} and \src{\mc{C}_2}, the \src{Helper} class also contains a list.
Call this class \lst{listof-t} for a class or interface type \src{t}.
Each element of the list has a \src{t} field that points to an object of type \src{t} and an \src{\lstb{Int}} field that contains the \aim encoding of object's id.
The list implements method \lst{getByName( n )} that inputs a name \lst{n} and returns the object with that name in the list.
Additionally, each list has an \lst{append} method.
\src{Helper} also has a method \lst{createNew-t( n )} that creates a new object of type \lst{t} and adds it to the list \lst{listof-t} with name \lst{n}; this method returns the new object.
These lists are populated with all known static objects (i.e., the objects declared in \src{\mc{C}_1} and \src{\mc{C}_2}).

\src{Helper} also has generic method \lst{addObject( o, n )} to insert object \lst{o} with name \lst{n} will call \lstb{instanceof}\lst{(o, \src{c})} for all possible class and interface types implemented by \src{\mc{C}_1} and \src{\mc{C}_2}.
Based on the type \src{c}, the method calls the \lst{append} on list \lst{listof-}\src{c}.

\src{Helper} also has generic method \lst{getByName( n )} will call \lst{getByName( n )} on all lists until it finds the object to return.

\subsection{\emul{\trg{\OB{\alpha}},\trg{t}}}\label{sec:emul}
The \emul{\cdot} function \fail s any time any of its sub-parts \fail s.
If this happens, it returns an empty method body for the \lst{main} method.

Hoe to read the \emul{\cdot} judgment: $\emul{\trg{\alpha}, M_0,V,\OB{\trg{id}}, \OB{\src{m}}, \OB{\src{t}}, i,\trg{t}} = \src{E}@\src{m}, V', \OB{\trg{id}'}, i', \OB{\src{m}'}, \OB{\src{t'}}$
\emph{under method environment $M_0$, under knowledge of teh allocated objects $V$, with the stack of caller ids \OB{\trg{id}}, with the stack of called methods \OB{\src{m}} and the stack of return types \OB{\src{t}}, at step $i$, with bindings \trg{t}, action \trg{\alpha} produces code \src{E} to be placed inside method \trg{m}, it updates the knowledge of objects with $V'$, it increments the step to $i'$, it updates the id stack to \OB{\trg{id}'}, it updates the methods stack to \OB{\trg{m}'} and the expected return types to \OB{\trg{t}'}.}

The algorithm uses $M$ to store method bindings, so $M::=\OB{\trg{a}\mapsto\src{m}:\src{M_t}}$ and $V$ to store object bindings, so $V::=\OB{\trg{w}:\src{t}}$.
\begin{center}
\small
\typerule{methodKnowledge}{
	M = \trg{a_1}\mapsto\src{m_1}:\src{M_t}_1,\cdots,\trg{a_n}\mapsto\src{m_n}:\src{M_t}_n
}{
	\fun{methodKnowledge}{\src{m_1}:\src{M_t}_1\mapsto \trg{a_1},\cdots,\src{m_n}:\src{M_t}_n\mapsto \trg{a_n}} = M
}{mk}\quad
\typerule{objectsKnowledge}{
	V = \trg{w_1}:\src{t_1},\cdots,\trg{w_n}:\src{t_n}
}{
	\fun{objectsKnowledge}{\src{o_1}:\src{t_1}\mapsto \trg{w_1},\cdots,\src{o_n}:\src{t_n}\mapsto \trg{w_n}} = V
}{ok}\quad
\end{center}

\fun{emulate}{\cdot} uses a helper function \fun{nonce-to-int} that translates nonces to integers.

Following is the definition of \fun{emulate}{\cdot}.
\begin{center}
\small
\typerule{Emulate}{
	\trg{t} = \trg{\OB{EM}};\trg{\OB{EO}};\trg{\OB{RM}};\trg{\OB{RO}}
	&
	\fun{methodKnowledge}{\trg{\OB{EM}},\trg{\OB{RM}}} = M_0 
	\\
	\fun{objectsKnowledge}{\trg{\OB{EO}},\trg{\OB{RO}}} = V_0
	&
	\src{m_0} = \lst{main}
	\\
	\emul{\trg{\alpha_1}, M_0,V_0, \emptyset, \src{m_0}, \emptyset, 0,\trg{t}} = \src{A_1}, V_1, \OB{\trg{id}_1}, i_1, \OB{\src{m}_1}, \OB{\src{t}_1}
	\\
	\forall h \in 2 .. n,
	&
	\emul{\trg{\alpha_h}, M_0,V_0+V_{h-1}, \OB{\trg{id}_{h-1}}, \OB{\src{m}_{h-1}}, \OB{\src{t}_{h-1}}, i_{h-1},\trg{t}} = \src{A_h}, V'_h, \OB{\trg{id}_h}, i_h, \OB{\src{m}_h}, \OB{\src{t}_h}
	\\
	V' = V'_1 + \cdots + V'_n
}{
	\emul{\trg{\alpha_1},\cdots\trg{\alpha_n},\trg{t}} = \src{A_1};\cdots;\src{A_n},M_0, V', \OB{\src{m}_n}, \OB{\src{t}_n},i_{n-1}
}{emulate}\quad
\typerule{Emulate- fail}{
	\text{ a \fail~ happens}
}{
	\emul{\trg{\alpha_1},\cdots\trg{\alpha_n}} = \src{\emptyset}@\lst{main}
}{em-f}\quad
\typerule{Emulate - call}{
	\trg{a'} = (\trg{1},2*\trg{\mc{N}_w})
	&
	\trg{a} = \asm{r_3},\asm{r_4}
	\\
	M(\trg{a})=\src{m}:\src{t}(\src{{t_1};\cdots;{t_m}})\to\src{t'}
	&
	\forall i \in 1..m
	\\
	\src{E} = \lst{if (oc.isStep( i )) then oc.incrStep();} + \src{E_1} +\cdots+\src{E_m}+\src{E'}
	\\
	\src{E'} = \lst{ var \src{o} = \src{E^o} ;var retvar = } \src{o.m} ( \lst{arg-1}, \cdots, \lst{arg-m})
	\\
	\emul{\trg{w_6}:\src{t},V} = \src{E^o};\emptyset
	&
	\emul{\trg{w_{6+i}}:\src{t_i},V} = \src{E_i^v}, {V'_i}
	\\
	\src{E_i} = \lst{var arg-i = }\src{E_i^v}
	&
	V' = V'_1 + \cdots + V'_n
}{
	\emul{\trg{\cl{\trg{a'}\ \trg{w_0,\cdots,w_n}}}, M_0,V,\OB{\trg{id}}, \OB{\src{m}}, \OB{\src{t}}, i,\trg{t}} = \src{E}@\src{m}, V', \trg{w_0}\OB{\trg{id}}, i+1, \src{m}\OB{\src{m}}, \src{t'}\OB{\src{t}}
}{emulcall}\quad
\typerule{Emulate - call - fail}{
	(\trg{w_0},\trg{w_5}) \text{ is not an executable address in }\trg{t}
}{
	\emul{\trg{\cl{\trg{a}\ \trg{w_1,\cdots,v_n}}}, M_0,V,\OB{\trg{id}}, \OB{\src{m}}, \OB{\src{t}}, i,\trg{t}} = \fail
}{emulcall-f}\quad
\typerule{Emulate - call -fail2}{
	\trg{a'} = (\trg{1},2*\trg{\mc{N}_w})
	&
	\trg{a} = \asm{r_3},\asm{r_4}
	&
	\trg{a} \notin M
}{
	\emul{\trg{\cl{\trg{a'}\ \trg{w_0,\cdots,w_n}}}, M_0,V,\OB{\trg{id}}, \OB{\src{m}}, \OB{\src{t}}, i,\trg{t}} = \fail
}{emulcall-f2}\quad
\typerule{Emulate - callback}{
	M(\trg{a})=\src{m}:\src{t}(\src{{t_1};\cdots;{t_m}})\to\src{t'}
	&
	\src{E} = \lst{oc.incrStep();}+\src{E_1 } + \cdots + \src{E_m}
	\\
	V' = V'_1 + \cdots + V'_m
	&
	\text{Let the name of the } i\text{-th parameter be }\lst{x}_i
	\\
	\forall i \in 1..m .\src{E_i}, V'_i = \fun{possiblyAdd}{\trg{w_{6+i}},\src{t_i}, \lst{x}_i}
}{
	\emul{\trg{\cb{\trg{a}\ \trg{w_0,\cdots,w_n}}}, M_0,V,\OB{\trg{id}}, \OB{\src{m}}, \OB{\src{t}}, i,\trg{t}} = \src{E}@\src{m}, V', \trg{w_0}\OB{\trg{id}}, i+1, \src{m}\OB{\src{m}}, \src{t'}\OB{\src{t}}
}{emulcb}\quad
\typerule{possiblyAdd}{
	\fun{isInternal}{\src{t}}
	&
	\trg{w}\notin\dom{V}
	\\
	\src{E} = \lst{var arg-i = addObject( x, \fun{nonce-to-int}{\trg{w}})}
	&
	V = \trg{w}:\src{t}
}{
	\fun{possiblyAdd}{\trg{w},\src{t}, \lst{x}} = \src{E},V
}{possadd}\quad
\typerule{Emulate - returnback}{
	\src{E} = \lst{if (oc.isStep( i ) ) then oc.incrStep(); var ret = } \src{E^v} \lst{; return ret;}
	\\
	\emul{\trg{w}:\src{t},V} = \src{E^v}, {V'}
	&
	\trg{w'} = \trg{id}
	&
	\trg{a} = (\trg{1},3*\trg{\mc{N}_w})
}{
	\emul{\trg{\rb{\trg{a}\ \trg{w},\trg{w'}}}, M_0,V, \trg{id}\OB{\trg{id}}, \src{m}\OB{\src{m}}, \src{t}\OB{\src{t}}, i,\trg{t}} = \src{E}@\src{m}, V', \OB{\trg{id}}, i+1, \OB{\src{m}}, \OB{\src{t}}
}{emulretback}\quad
\typerule{Emulate - returnback - fail}{
	\trg{w'} \neq\trg{id}
	&
	\trg{a} = (\trg{1},3*\trg{\mc{N}_w})
}{
	\emul{\trg{\rb{\trg{a}\ \trg{w},\trg{w'}}}, M_0,V, \trg{id}\OB{\trg{id}}, \src{m}\OB{\src{m}}, \src{t}\OB{\src{t}}, i,\trg{t}} = \fail
}{emulretback-f}\quad
\typerule{Emulate - returnback - fail2}{
	\trg{a} \neq(\trg{1},3*\trg{\mc{N}_w})
}{
	\emul{\trg{\rb{\trg{a}\ \trg{w},\trg{w'}}}, M_0,V, \OB{\trg{id}}, \OB{\src{m}}, \src{t}\OB{\src{t}}, i,\trg{t}} = \fail
}{emulretback-f2}\quad
\typerule{Emulate - return}{
	\fun{isInternal}{\src{t}}
	&
	\trg{w}\notin\dom{V}
	&
	V'= \trg{w}:\src{t}
	\\
	\src{E} = \lst{oc.incrStep(); var arg-r = addObject( retvar, \fun{nonce-to-int}{\trg{w}})}
}{
	\emul{\trg{\rt{\trg{a}\ \trg{w},\trg{w'}}}, M_0,V, \trg{id}\OB{\trg{id}}, \src{m}\OB{\src{m}}, \src{t}\OB{\src{t}}, i,\trg{t}} = \src{E}@\src{m}, V', i+1, \OB{\trg{id}}, \OB{\src{m}}, \OB{\src{t}}
}{emulret}\quad
\typerule{Emulate - return2}{
	\lnot\fun{isInternal}{\src{t}}
	&
	\src{E} = \lst{oc.incrStep();}
	&
	V' = \emptyset
}{
	\emul{\trg{\rt{\trg{a}\ \trg{w},\trg{w'}}}, M_0,V, \trg{id}\OB{\trg{id}}, \src{m}\OB{\src{m}}, \src{t}\OB{\src{t}}, i,\trg{t}} = \src{E}@\src{m}, V', i+1, \OB{\trg{id}}, \OB{\src{m}}, \OB{\src{t}}
}{emulret2}\quad
\typerule{Emulate - addresses}{
	M(\trg{a}) = \src{m}:\src{M_t}
}{
	\emul{\trg{a},M} = \src{m}:\src{M_t}
}{emuladdrs}\quad
\typerule{Emulate - addresses - fail}{
	\trg{a}\notin\dom{M}
}{
	\emul{\trg{a},M} = \fail
}{emuladdrs-f}\quad
\typerule{Emulate - values - unit}{
	\src{t}\equiv\src{\lstb{Unit}}
	&
	\trg{w}\equiv\comp{\src{\lst{unit}}}
	&
	\src{E} = \src{\lst{unit}}
	&
	V' = \emptyset
}{
	\emul{\trg{w}:\src{t},V} = \src{E}, {V'}
}{emulvals-unit}\quad
\typerule{Emulate - values - unit fail}{
	\src{t}\equiv\src{\lstb{Unit}}
	&
	\trg{w}\nequiv\comp{\src{\lst{unit}}}
}{
	\emul{\trg{w}:\src{t},V} = \fail
}{emulvals-f}\quad
\typerule{Emulate - values - bool true}{
	\src{t}\equiv\src{\lstb{Bool}}
	&
	\trg{w}\equiv\comp{\src{\lst{true}}}
	&
	\src{E} = \src{\lst{true}}
	&
	V'=\emptyset
}{
	\emul{\trg{w}:\src{t},V} = \src{E}, {V'}
}{emulvals-bt}\quad
\typerule{Emulate - values - bool false}{
	\src{t}\equiv\src{\lstb{Bool}}
	&
	\trg{w}\equiv\comp{\src{\lst{false}}}
	&
	\src{E} = \src{\lst{false}}
	&
	V' = \emptyset
}{
	\emul{\trg{w}:\src{t},V} = \src{E}, {V'}
}{emulvals-bf}\quad
\typerule{Emulate - values- bool fail}{
	\src{t}\equiv\src{\lstb{Bool}}
	&
	\trg{w}\nequiv\comp{\src{\lst{true}}}
	&
	\trg{w}\nequiv\comp{\src{\lst{false}}}
}{
	\emul{\trg{w}:\src{t},V} = \fail
}{emulvals-bfail}\quad
\typerule{Emulate - values - int}{
	\src{t}\equiv\src{\lstb{Int}}
	&
	\src{E} = \fun{integerFor}{\trg{w}}
	&
	V' = \emptyset
}{
	\emul{\trg{w}:\src{t},V} = \src{E}, {V'}
}{emulvals-int}\quad
\typerule{Emulate - values - null}{
	\src{t}<:\src{\lstb{Obj}}
	&
	\trg{w}\equiv\comp{\src{\lst{null}}}
	&
	\src{E} = \src{\lst{null}}
	&
	V' = \emptyset 
}{
	\emul{\trg{w}:\src{t},V} = \src{E}, {V'}
}{emulvals-null}\quad
\typerule{Emulate - values - external old}{
	\src{t}<:\src{\lstb{Obj}}
	&
	\fun{isExternal}{\src{t}}
	&
	\trg{w}\in V
	\\
	\src{E} = \lst{oc.getByName( \fun{nonce-to-int}{\trg{w_i}} )}
	&
	V' = \emptyset
}{
	\emul{\trg{w}:\src{t},V} = \src{E}, {V'}
}{emulvals-externold}\quad
\typerule{Emulate - values - external old - fail}{
	\src{t}<:\src{\lstb{Obj}}
	&
	\fun{isExternal}{\src{t}}
	&
	\trg{w}:\src{t'}\in V
	&
	\src{t}\not<:\src{t'}
}{
	\emul{\trg{w}:\src{t},V} = \fail
}{emulvals-externold-f}\quad
\typerule{Emulate - values - external new}{
	\src{t}<:\src{\lstb{Obj}}
	&
	\fun{isExternal}{\src{t}}
	&
	\trg{n}\notin\dom{V }
	\\
	\src{E}=\lst{oc.createNew-t( \trg{n} );}
	&
	V' = \trg{n}:\src{t}
}{
	\emul{\trg{n}:\src{t},V} = \src{E}, {V'}
}{emulvals-externnew}\quad
\typerule{Emulate - values - intern}{
	\src{t}<:\src{\lstb{Obj}}
	&
	\fun{isInternal}{\src{t}}
	&
	\trg{w}\in \dom{V}
	\\
	\src{E} = \lst{oc.getByName( \fun{nonce-to-int}{\trg{w}} )}
	&
	V' = \emptyset
}{
	\emul{\trg{w}:\src{t},V} = \src{E}, {V'}
}{emulvals-intern}\quad
\typerule{Emulate - values - intern - fail}{
	\src{t}<:\src{\lstb{Obj}}
	&
	\fun{isInternal}{\src{t}}
	&
	\trg{w}\notin V
}{
	\emul{\trg{w}:\src{t},V} = \fail
}{emulvals-intern-f}\quad
\typerule{Emulate - obj}{
	\src{t}\equiv\src{\lstb{Obj}}
	&
	\trg{w}\in \dom{V}
	&
	\src{E} = \lst{oc.getByName( \fun{nonce-to-int}{\trg{w}} )}
	&
	V' = \emptyset
}{
	\emul{\trg{w}:\src{t},V} = \src{E}, {V'}
}{emobj}\quad
\typerule{Emulate - obj -ex}{
	\src{t}\equiv\src{\lstb{Obj}}
	&
	\trg{n}\notin\dom{V}
	&
	\src{E}=\lst{oc.createNew-\src{t'}( \trg{n} );}
	&
	V' = \trg{n}:\src{t'}
	\\
	\text{If }\trg{n}\text{ is used again, let it be with type }\src{t'}\neq\src{\lstb{Obj}}\text{ otherwise let }\src{t'}\text{ be any external type}
}{
	\emul{\trg{n}:\src{t},V} = \src{E}, {V'}
}{emobj2}\quad
\typerule{Emulate - obj -ex fail}{
	\src{t}\equiv\src{\lstb{Obj}}
	&
	\trg{\pi}\notin \dom{V}
}{
	\emul{\trg{\pi}:\src{t},V} = \fail
}{emobj2-f}\quad
\typerule{Emulate-type - unit}{
	\trg{w}=\comp{\src{\lstb{Unit}}}
}{
	\emul{\trg{w}} =\src{\lstb{Unit}}
}{emul-ty-u}\quad
\typerule{Emulate-type - bool}{
	\trg{w}=\comp{\src{\lstb{Bool}}}
}{
	\emul{\trg{w}} =\src{\lstb{Bool}}
}{emul-ty-b}\quad
\typerule{Emulate-type - Int}{
	\trg{w}=\comp{\src{\lstb{Int}}}
}{
	\emul{\trg{w}} =\src{\lstb{Int}}
}{emul-ty-i}\quad
\typerule{Emulate-type - class}{
	\trg{w}=\comp{\src{c}}
}{
	\emul{\trg{w}} =\src{c}
}{emul-ty-u}\quad
\typerule{intfor - number}{}{
	\fun{integerFor}{\trg{n}}=\src{n}
}{intfor}\quad
\typerule{intfor - nonces}{}{
	\fun{integerFor}{\trg{\pi}}=\src{0}
}{intforp}\quad
\typerule{intfor -symbol}{}{
	\fun{integerFor}{\trg{\sigma}}=\fail
}{intfors}\quad
\typerule{isinternal}{
	\text{Type }\src{t}\text{ is a class in }\src{\mc{C}_1}\text{ or }\src{\mc{C}_2}
}{
	\fun{isInternal}{\src{t}}
}{isinter}\quad
\typerule{isExternal}{
	\text{Type }\src{t}\text{ is an interface in }\src{\mc{C}_1}\text{ or }\src{\mc{C}_2}
}{
	\fun{isExternal}{\src{t}}
}{isext}\quad
\typerule{isPrimitive}{
	\text{Type }\src{t}\text{ is }\src{\lstb{Unit}}, \src{\lstb{Bool}} \text{ or }\src{\lstb{Int}}
}{
	\fun{isPrimitive}{\src{t}}
}{isprim}\quad
\typerule{emulate-paper}{
	\fun{linkof}{\src{\mc{C}}} = \trg{t} 
}{
	\emul{\trg{\OB{\alpha}},\src{\mc{C}}} = \emul{\trg{\OB{\alpha}},\trg{t}}
}{em-p}\quad
\typerule{linkof-base}{
	\comp{\src{C}} = (\trg{m_c},\trg{m_d},\trg{t})
}{
	\fun{linkof}{\src{C}} = \trg{t}
}{lkof-base}\quad
\typerule{linkof-ind}{
	\forall i \in 1 .. n .
	&
	\fun{linkof}{\src{C_i}} = \trg{t_i}
}{
	\fun{linkof}{\src{C_1,\cdots,C_n}} = \trg{t_1}\uplus\cdots\uplus\trg{t_n}
}{lkof-ind}\quad
\end{center}

Since whole \aim programs are never executed with free symbols in their memory, \Cref{tr:intfors} is never used.

It is not a problem to accept inputs where the return address is not executable; in those cases there will be an automatic \trg{\surd}.
The emulation rules need to keep track of that, as they do with \Cref{tr:emulcall-f}.

Emulation of actions to \sys.
\begin{center}
\typerule{Emulate - call - fwcall -w1}{
	\trg{a}\equiv(\trg{1},2*\trg{\mc{N}_w})
	&
	\trg{w_0}==\trg{id}
}{
	\emul{\trg{\cl{\trg{a}\ \trg{w_0,\cdots,w_n}}}, M_0, V, \trg{id}\OB{\trg{id}}, \OB{\src{m}}, \OB{\src{t}}, i,\trg{t}} = \fail
}{emulcall-fwcall-w1}\quad
\typerule{Emulate - call - fwcall -w2}{
	\trg{a}\equiv(\trg{1},2*\trg{\mc{N}_w})
	&
	\trg{w_3} == \trg{1}
}{
	\emul{\trg{\cl{\trg{a}\ \trg{w_0,\cdots,w_n}}}, M_0, V, \OB{\trg{id}}, \OB{\src{m}}, \OB{\src{t}}, i,\trg{t}} = \fail
}{emulcall-fwcall-w2}\quad
\typerule{Emulate - call - fwret - w1}{
	\trg{a}\equiv(\trg{1},3*\trg{\mc{N}_w})
	&
	\trg{id}\neq\trg{w_0}
}{
	\emul{\trg{\cl{\trg{a}\ \trg{w_0,\cdots,w_n}}}, M_0, V, \OB{\trg{id}}, \OB{\src{m}}, \OB{\src{t}}, i,\trg{t}} = \fail
}{emulcall-fwret-w1}\quad
\typerule{Emulate - call - fwret - w2}{
	\trg{a}\equiv(\trg{1},3*\trg{\mc{N}_w})
}{
	\emul{\trg{\cl{\trg{a}\ \trg{w_0,\cdots,w_n}}}, M_0, V, \emptyset, \OB{\src{m}}, \OB{\src{t}}, i,\trg{t}} = \fail
}{emulcall-fwret-w2}\quad
\typerule{Emulate - call - testobj}{
	\trg{a}\equiv(\trg{1},\trg{0})
	&
	\src{t'}=\src{\lstb{Bool}}
	\\
	\src{E} = \lst{if (oc.isStep( i ) ) then oc.incrStep(); var retvar = instanceof(} \src{E_w} \lst{,} \src{E_t} \lst{)}
	\\
	\emul{\trg{w_7}:\src{E_t},V} = \src{E_w}, {V'}
	&
	\emul{\trg{w_8}} = \src{E_t}, \emptyset
}{
	\emul{\trg{\cl{\trg{a}\ \trg{w_0,\cdots,w_n}}}, M_0, V, \OB{\trg{id}}, \src{m}\OB{\src{m}}, \OB{\src{t}}, i,\trg{t}} = \src{E}@\src{m}, V', \trg{w_0}\OB{\trg{id}}, i+1, \src{m}\OB{\src{m}}, \src{t'}\OB{\src{t}}
}{emulcall-testobj}\
\typerule{Emulate - call - testobj - w}{
	\trg{a}\equiv(\trg{1},\trg{0})
	&
	\trg{w_7}\notin\dom{V}
}{
	\emul{\trg{\cl{\trg{a}\ \trg{w_0,\cdots,w_n}}}, M_0, V, \OB{\trg{id}}, \OB{\src{m}}, \OB{\src{t}}, i,\trg{t}} = \fail
}{emulcall-testobj-w}\quad
\typerule{Emulate - call - regobj}{
	\trg{a}\equiv(\trg{1},\trg{\mc{N}_w})
	&
	\src{t'}=\src{\lstb{Unit}}
	\\
	\src{E} = \lst{if (oc.isStep( i ) ) then oc.incrStep(); var retvar = unit; }
}{
	\emul{\trg{\cl{\trg{a}\ \trg{w_0,\cdots,w_n}}}, M_0, V, \OB{\trg{id}}, \src{m}\OB{\src{m}}, \OB{\src{t}}, i,\trg{t}} = \src{E}@\src{m}, \emptyset, \trg{w_0}\OB{\trg{id}}, i+1, \src{m}\OB{\src{m}}, \src{t'}\OB{\src{t}}
}{emulcall-regobj}\quad
\typerule{Emulate - call - regobj - w}{
	\trg{a}\equiv(\trg{1},\trg{\mc{N}_w})
	&
	\trg{w_7}\in\dom{V}
}{
	\emul{\trg{\cl{\trg{a}\ \trg{w_0,\cdots,w_n}}}, M_0, V, \OB{\trg{id}}, \OB{\src{m}}, \OB{\src{t}}, i,\trg{t}} = \fail
}{emulcall-regobj-w}\quad
\end{center}

\subsection{\diff{\trg{\alpha_1},\trg{\alpha_2},i,M,V, \OB{\src{m}}, \OB{\src{t}}}}\label{sec:diff}
All cases have their dual, where the actions are flipped.
The arguments are: the different action for \src{C_1}, the different action for \src{C_2}, the index preceding those actions, the knowledge of method bindings, the knowledge of object bindings, the stack of current methods and the stack of expected return types.
In the following, indicate an empty action with $\emptyset$.

\begin{center}
\small
\typerule{Diff-length-tick}{
}{
	\diff{\trg{\surd},\emptyset,i,M,V, \src{m}\OB{\src{m}}, \OB{\src{t}}} ={\src{\emptyset}@\src{m}}@i,{\src{\emptyset}@\src{m}}@i
}{diff-lt} 
\quad
\typerule{Diff-length-call}{
	\src{E} = \lst{ exit( 1 ); }
	&
	\src{m}:\src{M_t} = \emul{\trg{a},M}
}{
	\diff{\cb{\trg{a}\ \trg{\OB{w}}},\emptyset,i,M,V, \OB{\src{m}}, \OB{\src{t}}} ={\src{E}@\src{m}}@i,{\src{E}@\src{m}}@i
}{diff-ct}
\quad
\typerule{Diff-length-return}{
	\src{E} = \lst{ exit( 1 ); }
}{
	\diff{\rt{\trg{a}\ \trg{w,w'}},\emptyset,i,M,V, \src{m}\OB{\src{m}}, \OB{\src{t}}} ={\src{E}@\src{m}}@i,{\src{E}@\src{m}}@i
}{diff-rtt}
\quad
\typerule{Diff-rets-primitive}{
	\fun{isPrimitive}{\src{t}}
	&
	\emul{\trg{w}:\src{t},V} = \src{E^r},V'
	\\
	\src{E} = \lst{if ( retvar == \src{E^r} ) \{ exit( 1 ); \} else \{ oc.diverge(); \} }
}{
	\diff{\rt{\trg{a}\ \trg{w,w'}},\rt{\trg{a}\ \trg{w_1,w_1'}},i,M,V, \src{m}\OB{\src{m}}, \src{t}\OB{\src{t}}} ={\src{E}@\src{m}}@i,{\src{E}@\src{m}}@i
}{diff-rs}
\quad
\typerule{Diff-rets-addr}{
	\text{the code of \comp{\cdot} only uses what was passed in \asm{r_0} and \asm{r_5} as addresses to return}
	\\
	\text{so this case never arises because the addresses where to return}
	\\
	\text{ are always the same (they come from the same \trg{?}-decorated actions)}
}{
	\diff{\rt{\trg{a'}\ \trg{w,w'}},\rt{\trg{a}\ \trg{w,w'}},i,M,V, \src{m}\OB{\src{m}}, \src{t}\OB{\src{t}}} =\emptyset
}{diff-ra}
\quad
\quad
\typerule{Diff-rets-internal}{
	\fun{isInternal}{\src{t}}
	&
	\emul{\trg{w}:\src{t},V} = \src{E^r},V' 
	\\
	\src{E} = \lst{if ( retvar == \src{E^r} ) \{ exit( 1 ); \} else \{ oc.diverge(); \}}
}{
	\diff{\rt{\trg{a}\ \trg{w,w'}},\rt{\trg{a}\ \trg{w_1,w_1'}},i,M,V, \src{m}\OB{\src{m}}, \src{t}\OB{\src{t}}} ={\src{E}@\src{m}}@i,{\src{E}@\src{m}}@i
}{diff-rsi}
\quad
\typerule{Diff-rets-external}{
	\fun{isExternal}{\src{t}}
	&
	\emul{\trg{w}:\src{t},V} = \src{E^r},V' 
	\\
	\src{E} = \lst{if ( retvar == \src{E^r} ) \{ exit(1); \} else \{ oc.diverge(); \} }
}{
	\diff{\rt{\trg{a}\ \trg{w,w'}},\rt{\trg{a}\ \trg{w_1,w_1'}},i,M,V, \src{m}\OB{\src{m}}, \src{t}\OB{\src{t}}} ={\src{E}@\src{m}}@i,{\src{E}@\src{m}}@i
}{diff-rse}
\quad
\typerule{Diff-calls-methods}{
	\src{E_1} = \lst{ exit(1); }
	&
	\emul{\trg{a},M} = \src{m_1}:\src{M_t}_1 
	\\
	\src{E_2}= \lst{ oc.diverge(); }
	&
	\emul{\trg{a'},M} = \src{m_2}:\src{M_t}_2  
}{
	\diff{\cb{\trg{a}\ \trg{\OB{w}}},\cb{\trg{a'}\ \trg{\OB{w}}},i,M,V, \OB{\src{m}}, \OB{\src{t}}} ={\src{E_1}@\src{m_1}}@i,{\src{E_2}@\src{m_2}}@i
}{diff-cs}
\quad
\typerule{Diff-callee}{
	\trg{\OB{w}} = \trg{w_1},\cdots,\trg{w_n}
	&
	\trg{\OB{w}'} = \trg{w_1'},\cdots,\trg{w_n'}
	&
	\trg{w_5\neq\trg{w_5'}}
	\\
	\emul{\trg{a},M} = \src{m}:\src{t}(\src{{t}_1,\cdots,\src{t_n}})\to\src{t'}  
	&
	\emul{\trg{w_5}:\src{t},V} = \src{E^r},V' 
	\\
	\src{E} = \lst{if ( this == \src{E^r} ) \{ exit( 1 ); \} else \{ oc.diverge(); \}}
}{
	\diff{\cb{\trg{a}\ \trg{\OB{w}}},\cb{\trg{a}\ \trg{\OB{w}'}},i,M,V, \OB{\src{m}}, \OB{\src{t}}}={\src{E}@\src{m}}@i,{\src{E}@\src{m}}@i
}{diff-callee}
\quad
\typerule{Diff-calls2-parameters - primitive}{
	\trg{\OB{w}} = \trg{w_1},\cdots,\trg{w_n}
	&
	\trg{\OB{w}'} = \trg{w_1'},\cdots,\trg{w_n'}
	&
	\exists i \in 6 .. n .\trg{w_i\neq\trg{w_i'}}
	\\
	\emul{\trg{a},M} = \src{m}:\src{t}(\src{{t}_1,\cdots,\src{t_n}})\to\src{t'}  
	&
	\fun{isPrimitive}{\src{t_i}}
	&
	\emul{\trg{w_i}:\src{t_i},V} = \src{E^r},V'
	\\
	\src{E} = \lst{if ( retvar == \src{E^r} ) \{ exit( 1 ); \} else \{ oc.diverge(); \} }
}{
	\diff{\cb{\trg{a}\ \trg{\OB{w}}},\cb{\trg{a}\ \trg{\OB{w'}}},i} = \src{E}@\src{m}@i,\src{E}@\src{m}@i
}{diff-cs2-p}
\quad
\typerule{Diff-calls2-parameters - internal}{
	\trg{\OB{w}} = \trg{w_1},\cdots,\trg{w_n}
	&
	\trg{\OB{w}'} = \trg{w_1'},\cdots,\trg{w_n'}
	&
	\exists i \in 6 ..n .\trg{w_i\neq\trg{w_i'}}
	\\
	\emul{\trg{a},M} = \src{m}:\src{t}(\src{{t}_1,\cdots,\src{t_n}})\to\src{t'}  
	&
	\fun{isInternal}{\src{t_i}}
	&
	\emul{\trg{w_i}:\src{t_i},V} = \src{E^r},V'
	\\
	\text{Let the name of the } i\text{-th parameter be }\lst{x}_i
	\\
	\src{E} = \lst{if ( x$_i$ == \src{E^r} ) \{ exit(1); \} else \{ oc.diverge(); \} }
}{
	\diff{\cb{\trg{a}\ \trg{\OB{w}}},\cb{\trg{a}\ \trg{\OB{w'}}},i} = \src{E}@\src{m}@i,\src{E}@\src{m}@i
}{diff-cs2-i}
\quad
\typerule{Diff-calls2-parameters - external}{
	\trg{\OB{w}} = \trg{w_1},\cdots,\trg{w_n}
	&
	\trg{\OB{w}'} = \trg{w_1'},\cdots,\trg{w_n'}
	&
	\exists i \in 6 .. n.\trg{w_i\neq\trg{w_i'}}
	\\
	\emul{\trg{a},M} = \src{m}:\src{t}(\src{{t}_1,\cdots,\src{t_n}})\to\src{t'} 
	&
	\fun{isExternal}{\src{t_i}}
	\\
	\text{Let the name of the } i\text{-th parameter be }\lst{x}_i
	&
	\emul{\trg{w_i}:\src{t_i},V} = \src{E^r},V'
	\\
	\src{E} = \lst{if ( x$_i$ == \src{E^r} ) \{ exit(1); \} else \{ oc.diverge(); \} }
}{
	\diff{\cb{\trg{a}\ \trg{\OB{w}}},\cb{\trg{a}\ \trg{\OB{w'}}},i} = \src{E}@\src{m}@i,\src{E}@\src{m}@i
}{diff-cs2-e}
\quad
\typerule{Diff-r0--r4}{
	\trg{\OB{w}} = \trg{w_1},\cdots,\trg{w_n}
	&
	\trg{\OB{w}'} = \trg{w_1'},\cdots,\trg{w_n'}
	&
	\exists i \in 0,1,2,4.\trg{w_i\neq\trg{w_i'}}
	\\
	\text{\asm{r_0} to \asm{r_2} are always reset to \trg{0} so this case does not exist}
	\\
	\text{\asm{r_4} contains the returnback entry point address, which is always \trg{0}}
}{
	\diff{\cb{\trg{a}\ \trg{\OB{w}}},\cb{\trg{a}\ \trg{\OB{w}'}},i,M,V, \OB{\src{m}}, \OB{\src{t}}}=\emptyset
}{diff-r0}
\quad
\quad
\typerule{Diff-acts}{
	\src{E_1} = \lst{ exit( 1 ); }
	&
	\emul{\trg{a},M} = \src{m_1}:\src{M_t} 
	&
	\src{E_2} = \lst{ oc.diverge(); }
}{
	\diff{\cb{\trg{a}\ \trg{\OB{w}}},\rt{\trg{a'}\ \trg{w,w'}},i,M,V, \src{m_2}\OB{\src{m}}, \OB{\src{t}}} ={\src{E_1}@\src{m_1}}@i,{\src{E_2}@\src{m_2}}@i
}{diff-ac}
\quad
\typerule{Diff-acts2}{
	\emul{\trg{a},M} = \src{m_1}:\src{M_t} 
	&
	\src{E_1} = \lst{ oc.diverge(); }
}{
	\diff{\cb{\trg{a}\ \trg{\OB{w}}},\trg{\surd},i,M,V, \OB{\src{m}}, \OB{\src{t}}} ={\src{E_1}@\src{m_1}}@i,{\src{E_1}@\src{m_1}}@i
}{diff-ac2}
\quad
\typerule{Diff-acts3}{
	\src{E_1} = \lst{ oc.diverge(); }
}{
	\diff{\rt{\trg{a'}\ \trg{w,w'}},\trg{\surd},i,M,V, \src{m_1}\OB{\src{m}}, \OB{\src{t}}} ={\src{E_1}@\src{m_1}}@i,{\src{E_1}@\src{m_1}}@i
}{diff-ac3}
\quad
\end{center}
Since the code is placed inside the code generated by the last action $i$, the code generated by \diff{\cdot} does not need to be wrapped in an \lst{if ( oc.isStep( i ))}.

\section{Proofs of \Cref{sec:proof}}\label{sec:actualproofs}
This section contains only proofs and proof sketches that are not found in the paper.

\myparagraph{Well-behaved Contextual Equivalence}
A form of contextual equivalence considering a restricted number of contexts is \emph{well-behaved contextual equivalence}, denoted as $\trg{M_1}\ceqwt\trg{M_2}$~\cite{exprPA}.
Well-behaved contextual equivalence is defined for \aim modules with respect to \jem contexts.
Well-behaved contextual equivalence is analogous to contextual equivalence except that it considers a subset of \aim contexts, namely those that behave like \jem ones.

The notion of well-behaved contextual equivalence is used for compiler correctness specification, \Cref{ass:comp-corr} can be restated as \Cref{ass:compcorr}.
\begin{assumption}[\comp{\cdot} is correct]\label{ass:compcorr}
$\forall\src{\mc{C}_1},\src{\mc{C}_2}.\src{\mc{C}_1}\ceqjem\src{\mc{C}_2}\iff\comp{\src{\mc{C}_1}}\ceqwt\comp{\src{\mc{C}_2}}$.
\end{assumption}

Proof sketch of \Cref{thm:compaimcorr}
\begin{proofsketch}
The \prot{\cdot} phase of \compaim{\cdot} does not modify how expressions are compiled (i.e., the workings of \comp{\cdot}) except for how \lstb{instanceof} is compiled and how external functions are invoked.

Additionally, the methodEP code is prepended to all method calls and the returnEP code is prepended to all returns.

These procedures contain checks that can abort, though correct code never trigger them. 

As presented in \Cref{prob:types2}, \lstb{instanceof} is compiled to \fun{testObj}{\cdot}.

We can see that this is correct by simply inspecting its code.

The other modification done to \comp{\cdot} code is in the way external functions are called.

The chaining of \fun{preamble}{\cdot} ensures that securely-compiled modules are exited via \sys.

The \fun{forwardCall}{\cdot} function respects the registers calling convention of \Cref{ass:comp-cc}.

Moreover, it only insists that returnbacks are done to the \fun{forwardReturn}{\cdot} entry point.

For well-behaving contexts, that entry point just acts as a proxy for the real return entry point, so no semantics change is introduced there.

Moreover, \fun{forwardReturn}{\cdot} also respects \Cref{ass:comp-cc}.

Well-behaving contexts will not call any entry point in \sys except for \fun{testObject}{\cdot} and \fun{registerObject}{\cdot}.

The former is a correct implementation of \lstb{instanceof}, as already discussed.

The second one will never abort for well-behaving contexts since only new objects will be registered.

Therefore, the additional functionalities of \sys do not alter the semantics of correct programs.

Because of this, \compaim{\cdot} satisfies \Cref{ass:compcorr}.

So we have: $\forall\src{\mc{C}_1},\src{\mc{C}_2}.\src{\mc{C}_1}\ceqjem\src{\mc{C}_2}\iff\compaim{\src{\mc{C}_1}}\ceqwt\compaim{\src{\mc{C}_2}}$.

\jem-like-behaving contexts in \aim are a subset of all the \aim contexts, we have $\compaim{\src{\mc{C}_1}}\ceqjem\compaim{\src{\mc{C}
_2}}\Rightarrow\compaim{\src{\mc{C}_1}}\ceqwt\compaim{\src{\mc{C}_2}}$.

Because we changed the calling convention, we need to show that all of these well-behaved contexts, which used the \comp{\cdot} calling convention, can be made to adopt the \compaim{\cdot} one.

This can be done by adding the following instructions before a jump to a module -- intuitively at the single exit point of those contexts:
\begin{itemize}
	\item assume the address where to jump is stored in \asm{r_n}, \asm{r_i} where $n$ and $i$ are any register with index from 0 to 4 (it can't be other indices as they are needed by the \comp{\cdot} calling convention)
	\item store \asm{r_n} into \asm{r_4} and \asm{r_i} into \asm{r_3}
	\item load \trg{1} into \asm{r_i} and 2*\trg{\mc{N}_w} into \asm{r_n}
	\item \asm{jmp\ r_n\ r_i}
\end{itemize}

We can therefore chain the implications above to conclude the thesis.
\end{proofsketch}

\begin{lemma}[Method emulation is correct]\label{lem:methemulcorr}
$\emul{\trg{a},M} = \src{m}:\src{M_t} \iff \comp{\src{m}}=\trg{a}$.
\end{lemma}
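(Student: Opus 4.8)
The plan is to prove both directions by unfolding the definitions of $\emul{\trg{a},M}$ and of the method knowledge $M$, and then appealing to \Cref{ass:comp-out} to connect symbol-table bindings with the compiler's address assignment $\comp{\cdot}$. The key observation is that $M$, built by $\fun{methodKnowledge}{\trg{\OB{EM}},\trg{\OB{RM}}}$ (\Cref{tr:mk}), is literally the pointwise inverse of the exported and required method bindings of the symbol table: each binding $\src{m_i}:\src{M_t}_i\mapsto\trg{a_i}$ becomes $\trg{a_i}\mapsto\src{m_i}:\src{M_t}_i$ in $M$. Hence $M(\trg{a})=\src{m}:\src{M_t}$ holds exactly when the symbol table contains $\src{m}:\src{M_t}\mapsto\trg{a}$, which by \Cref{ass:comp-out} is precisely the statement $\comp{\src{m}}=\trg{a}$, since that binding records the address at which $\comp{\cdot}$ places method $\src{m}$.

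For the forward direction, I would suppose $\emul{\trg{a},M}=\src{m}:\src{M_t}$. The only rule producing a non-$\fail$ result is \Cref{tr:emuladdrs}, whose premise forces $M(\trg{a})=\src{m}:\src{M_t}$; by the inversion observation above, the symbol table binds $\src{m}:\src{M_t}\mapsto\trg{a}$, so $\comp{\src{m}}=\trg{a}$. For the backward direction, I would suppose $\comp{\src{m}}=\trg{a}$. By \Cref{ass:comp-out} the symbol table then contains the binding $\src{m}:\src{M_t}\mapsto\trg{a}$ for the signature $\src{M_t}$ of $\src{m}$; applying $\fun{methodKnowledge}{\cdot}$ inserts $\trg{a}\mapsto\src{m}:\src{M_t}$ into $M$, so $\trg{a}\in\dom{M}$ (ruling out \Cref{tr:emuladdrs-f}) and $M(\trg{a})=\src{m}:\src{M_t}$, whence \Cref{tr:emuladdrs} gives $\emul{\trg{a},M}=\src{m}:\src{M_t}$.

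The main obstacle is ensuring that the inversion is well-defined, i.e.\ that $M$ is genuinely a function and that the correspondence is bijective rather than merely relational. This requires two facts. First, distinct methods must receive distinct addresses: this follows from the multi-entry, single-exit structure guaranteed by \Cref{ass:comp-out}, together with the fixed entry-point layout (addresses $\trg{0},\trg{\mc{N}_w},2\trg{\mc{N}_w},\dots$), so that no two method bindings share an address $\trg{a_i}$ and the inverted list is a partial function. Second, method names (and hence bindings $\src{m}:\src{M_t}$) must be unique, which follows from the \jem typing discipline (\Cref{tr:type-meth} requires each $\src{m}$ to be unique), so the table has at most one binding per method. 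These two uniqueness properties make the exported/required-method portion of the symbol table a bijection between method identities and entry-point addresses, and it is exactly this bijection that $\fun{methodKnowledge}{\cdot}$ and $\comp{\cdot}$ witness in opposite directions.
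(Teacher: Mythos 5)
Your proof is correct and takes essentially the same route as the paper's (very terse) proof sketch, which likewise reduces the claim to \Cref{tr:emuladdrs,tr:emuladdrs-f} together with the construction of $M$ in \Cref{tr:mk}; your invocation of \Cref{ass:comp-out} simply makes explicit the link between symbol-table bindings and $\comp{\src{m}}=\trg{a}$ that the paper leaves implicit. The added uniqueness/bijectivity discussion is a sound elaboration of the same argument rather than a different approach.
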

\begin{proofsketch}
This can be seen by \Cref{tr:emuladdrs,tr:emuladdrs-f} and the constitution of $M_0$ in \Cref{tr:mk}.
\end{proofsketch}
\begin{lemma}[Value emulation is correct]\label{lem:valemulcorr}
$\emul{\trg{w}:\src{t},V} = \src{E}, V' \iff \comp{\src{E}}=\trg{w}$.
\end{lemma}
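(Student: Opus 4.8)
The plan is to prove both directions of the \emph{iff} by a single case analysis on the shape of the emulation judgment, which is fixed entirely by the source type \src{t}, the word \trg{w}, and whether \trg{w} already occurs in the environment $V$. The value rules (\Cref{tr:emulvals-unit,tr:emulvals-f,tr:emulvals-bt,tr:emulvals-bf,tr:emulvals-bfail,tr:emulvals-int,tr:emulvals-null,tr:emulvals-externold,tr:emulvals-externold-f,tr:emulvals-externnew,tr:emulvals-intern,tr:emulvals-intern-f,tr:emobj,tr:emobj2,tr:emobj2-f}) are mutually exclusive and exhaustive in these parameters, so the emulation of a value is deterministic. Accordingly I read the statement as a \emph{soundness} claim for $\Rightarrow$ (whatever \src{E} emulation returns compiles to \trg{w}) together with a \emph{non-spurious-failure} claim for $\Leftarrow$ (emulation yields such an \src{E} exactly when \trg{w} is the compilation of some well-typed source value of type \src{t}, and otherwise fires one of the \fail~rules).

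First I would dispatch the primitive and \src{\lst{null}} cases (\Cref{tr:emulvals-unit,tr:emulvals-bt,tr:emulvals-bf,tr:emulvals-null}). In each, the rule guard is literally $\trg{w}\equiv\comp{\src{v}}$ for the constant \src{v} the rule returns, so $\comp{\src{E}}=\comp{\src{v}}=\trg{w}$ holds by the definition of value compilation from \Cref{ass:comp-out}; the companion \fail~rules (\Cref{tr:emulvals-f,tr:emulvals-bfail}) fire on exactly the remaining words, which are the compilation of no inhabitant of the type. The \src{\lstb{Int}} case (\Cref{tr:emulvals-int}) additionally needs $\fun{integerFor}{\trg{n}}=\src{n}$ (\Cref{tr:intfor}) and that \comp{\cdot} sends the numeral \src{n} back to \trg{n}, whence $\comp{\fun{integerFor}{\trg{w}}}=\trg{w}$.

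The substance lies in the object cases (\Cref{tr:emulvals-externold,tr:emulvals-externnew,tr:emulvals-intern,tr:emobj,tr:emobj2}). There \src{E} is not a literal but a \lstb{getByName} retrieval on \lst{oc} keyed by $\fun{nonce-to-int}{\trg{w}}$, or a \lstb{createNew} allocation; consequently ``$\comp{\src{E}}=\trg{w}$'' has to be read through the overloading of \comp{\cdot} on object-valued expressions, namely that the object \src{E} denotes carries cross-module id \trg{w}. I would derive this from the invariant maintained by the \src{Helper} class built in \Cref{tr:skel}: every object held in a list is keyed by the integer derived from its nonce, and \lstb{createNew} registers a fresh object under that key. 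Combined with the masking mechanism of \Cref{sec:dynmem}, which hands out as the cross-module id of a context-allocated object exactly the nonce used as its key, retrieval by $\fun{nonce-to-int}{\trg{w}}$ returns an object whose masked id is \trg{w}. The environment update $V\to V'$ records \trg{w}, so a later occurrence is resolved through the ``old'' rule rather than re-allocated, keeping the keying consistent along the whole trace.

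The main obstacle I expect is exactly this object invariant: tying together the \src{Helper} tables populated by \skel{\cdot}, the \fun{nonce-to-int}{\cdot} translation, and the runtime masking table \mc{T} of the compiled component, and then checking that the ``new'' versus ``old'' split (membership of \trg{w} in $V$) matches freshness of the nonce in \mc{T}. For $\Leftarrow$ one must show the failure rules (\Cref{tr:emulvals-intern-f,tr:emulvals-externold-f,tr:emobj2-f}) trigger precisely when \trg{w} is an internal nonce never released to the context or carries an incompatible type --- i.e.\ the words that compile no well-typed source object --- which is the point where the argument meets \Cref{thm:maintheorem} on emulation failure.
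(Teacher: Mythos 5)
Your proposal is correct and follows essentially the same route as the paper: the paper's own proof is a one-line sketch that appeals to inspection of the value-emulation rules (\Cref{tr:emulvals-unit,tr:emulvals-bt,tr:emulvals-bf,tr:emulvals-int,tr:emulvals-null,tr:emulvals-externold,tr:emulvals-f,tr:emulvals-bfail,tr:emulvals-externnew,tr:emulvals-intern,tr:emulvals-intern-f}), the constitution of $V_0$ in \Cref{tr:ok}, and the update of $V$ across the \emul{\cdot} rules, which is exactly the case analysis and environment-invariant bookkeeping you carry out. You simply make explicit (notably for the object cases, via the \src{Helper}/masking-table invariant) what the paper leaves implicit.
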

\begin{proofsketch}
This can be seen by \Cref{tr:emulvals-unit,tr:emulvals-bt,tr:emulvals-bf,tr:emulvals-int,tr:emulvals-null,tr:emulvals-externold,tr:emulvals-f,tr:emulvals-bfail,tr:emulvals-externnew,tr:emulvals-intern,tr:emulvals-intern-f} and by the constitution of $V_0$ in \Cref{tr:ok} and by the update on $V$ in all \emul{\cdot} rules in \Cref{sec:emul}.
\end{proofsketch}

Proof of \Cref{thm:maintheorem}. 

As a helper, here is a table of all additionally-defined functions and whether they can abort (i.e., call \asm{halt}) or not.

\begin{tabular}{ | l | c |}
\hline
testObj
	& yes
		\\\hline
registerObj
	& yes
		\\\hline
forwardCall
	& yes, twice
		\\\hline
forwardReturn
	& yes, twice
		\\\hline
classOf
	& no
		\\\hline
isinternal
	& no
		\\\hline
updatemaskingtable
	& no (but it calls registerObj)
		\\\hline
loadobject
	& no (but it calls lookup in \mc{T})
		\\\hline
lookup in \mc{T} 
	& yes
		\\\hline
maskingTable
	& no
		\\\hline
dynamictypechecks
	& yes, twice plus it calls testObj
		\\\hline
maskObjectid
	& no
		\\\hline
signatureof
	&no
		\\\hline
loadData
	& no
		\\\hline
storedata
	& no
		\\\hline
extraCode
	& no
		\\\hline
extraData
	& no
		\\\hline
append
	& no
		\\\hline
replace 
	& no
		\\\hline
method ep
	& yes
		\\\hline
return ep
	& yes
		\\\hline
\end{tabular}
\begin{proofsketch}
This proof is split in 2 cases, one for each direction of the co-implication.

By \Cref{ass:comp-res} we know that $\surd$ is only produced by triggered checks.
\begin{description}
	\item[$\forall\src{\mc{C}},\comp{\src{\mc{C}}}=(\trg{m_c},\trg{m_d},\trg{t}). \trg{\OB{\alpha}\surd}\in\taim{\compaim{\src{\mc{C}}}} \Rightarrow \emul{\trg{\OB{\alpha}},\trg{t}}=\fail$.]

	By inspecting the pseudocode of entry points created by \prot{\cdot}, we can see what function aborted, causing a \trg{\surd} in the trace:
	\begin{description}
		\item[method entry point]:
		\begin{enumerate}
			\item[0m] it can abort if the jump does not come from \sys with return address being the forwrd return entry point
			\item[1m]\fun{dynamicTypechecks}{} can abort on ill-typed or unregistered parameters;
			\item[2m]\fun{loadObjects}{} can abort on fake or non-existing parameters (actually, it's lookup in \mc{T} that aborts);

		\end{enumerate}
		\item[forwardReturn entry point]:
		\begin{enumerate}
			\item[1r]\label{1r} it can abort if a return is made when no outcall was done (empty \mc{S});
			\item[2r]\label{2r} it can abort if the return is not made by the module that was called (mismatching \asm{r_0});
		\end{enumerate}
		\item[return entry point]:
		\begin{enumerate}
			\item[3r]\label{4r}\fun{dynamicTypechecks}{} can abort on an ill-typed return value;
			\item[4r] it can abort if the module jumping here is not \sys.
		\end{enumerate}
		\item[violation of the \PMA access control policy]:
		\begin{enumerate}
			\item[5r] a return to a previously-communicated address that is not executable
		\end{enumerate}
	\end{description}
	We can then analyse the action preceding the \trg{\surd}.

	The proof proceeds by induction on the size of \trg{\OB{\alpha}}.
	\begin{description}
		\item[$\trg{\OB{\alpha}}=\emptyset$] Trivial.
		\item[$\trg{\OB{\alpha}} = \trg{\alpha_1},\cdots,\trg{\alpha_n}$] where $\trg{n}\%2=1$ so $\trg{\alpha_n}=\trg{\gamma_n?}$.
		The proof proceeds by case analysis on \trg{\gamma_n?}.
		\begin{description}
			\item[\cl{\trg{a}\ \trg{w_0,\cdots,w_m}}]:

			Let us consider first the case when \trg{a} is a method entry point, considering the case of a successful forward by \sys.

			\trg{w_0} is the caller id placed by PMA, if it is not \sys, the execution aborts as in point 0m above, so \Cref{tr:emulcall-f2} \fail s.

			\trg{w_1}-\trg{w_5} are unused registers.

			\trg{w_6} is the current object, if that is not an encoding of an object that implements class \src{t} (Point 2m above), \Cref{tr:emulvals-intern-f} \fail s.

			\trg{w_i} with $i\in7..l$, if that is not a valid encoding of a value of type \src{t_i} (Points 1m and 2m above), then \Cref{tr:emulvals-f,tr:emulvals-bfail,tr:emulvals-externold-f,tr:emulvals-intern-f,tr:emobj2-f} \fail. 
			These rules cover all possible cases of ill-typed parameters for \src{\lstb{Unit}} (\Cref{tr:emulvals-f}), \src{\lstb{Bool}} (\Cref{tr:emulvals-bfail}), \src{\lstb{Obj}} (\Cref{tr:emobj2-f}), or class type (\Cref{tr:emulvals-externold-f,tr:emulvals-intern-f}).

			\trg{w_j} with $l<j<m$, those registers are not used.

			\medskip

			Let us consider when \trg{a} is an entry point of \sys; considerations for words that are not parameters (so \trg{w_1} to \trg{w_6} and \trg{9} and beyond) are analogous to those above

			\begin{itemize}
				\item \trg{a} is \fun{testObj}{}

				\trg{w_7} can be a non-existing object id, in which case \Cref{tr:emulcall-testobj-w} \fail s.


				There are no checks that the class encoding passed via \trg{w_8} is correct.
				When called from secure components, this (and the next) function are always called with correct class encodings.
				If the external code puts something else inside, it may try to cheat, but we'll find out because we test with correct class encodings.
				Otherwise, it could put anything inside, an object of a type that we don't know and don't use.
				In this case \sys would not be able to know this class.
				So we cannot filter based on known class encodings.

				\item \trg{a} is \fun{registerObj}{}

				\trg{w_7} can be an existing object id, in which case \Cref{tr:emulcall-regobj-w} \fail s.


				\item \trg{a} is \fun{forwardCall}{}

				\trg{w_0} can be the id of the module that last made a call (so no return was made), so \Cref{tr:emulcall-fwcall-w1} \fail s.

				\trg{w_3} can be the \sys module id, so someone is trying to forward a call to \sys via \sys, so \Cref{tr:emulcall-fwcall-w2} \fail s.

				\trg{w_5} is the address to return to, if that (plus the id \trg{w_0}) is not an executable address (Point 5r above), \Cref{tr:emulcall-f} \fail s.

				\item \trg{a} is \fun{forwardReturn}{}

				If the stack \mc{S} is empty, someone is returning without any call being made, so \Cref{tr:emulcall-fwret-w2} \fail s.

				If \trg{w_0} is not the module id of the last module that made a call, \Cref{tr:emulcall-fwret-w1} \fail s.
			\end{itemize}

			\item[\rb{\trg{a}\ \trg{w,w'}}]:

			If \trg{a} is not the address of \fun{forwardReturn}{}, \Cref{tr:emulretback-f2} \fail s.

			So \trg{a} let's consider \trg{a} to be the address of \fun{forwardReturn}{}.

			If the action's index $\trg{h}=1$, then no returns can be made (Point 1r above) and \Cref{tr:emulcall-fwret-w2} \fail s.

			Let the last called method have return type \src{t} and come from module with id \trg{id}.

			If \trg{w} is not a valid encoding of a value of type \src{t}, see previous case (Point 3r above).

			If \trg{w'} is not \trg{id}, \Cref{tr:emulretback-f} \fail s (Point 2r above).
		\end{description}
		Having covered all cases when a \trg{\surd} can be generated in traces, this case holds.
	\end{description}
	\item[$\forall\src{\mc{C}},\comp{\src{\mc{C}}}=(\trg{m_c},\trg{m_d},\trg{t}). \trg{\OB{\alpha}\surd}\in\taim{\compaim{\src{\mc{C}}}} \Leftarrow \emul{\trg{\OB{\alpha}},\trg{t}}=\fail$.]
	The proof proceeds by analysing all cases that generate a \fail (in \Cref{sec:emul}).
	\begin{description}
		\item[\Cref{tr:emulretback-f}]
		In this case the check inserted by \fun{checkShortcutting}{} ensures the \trg{\surd} is generated.
		\item[\Cref{tr:emulretback-f2}]
		In this case the check inserted by \fun{checkLocalStack}{} ensures the \trg{\surd} is generated.
		\item[\Cref{tr:emuladdrs-f}]
		The PMA architecture ensures the only addresses one can jump to is a valid entry point.
		\item[\Cref{tr:emulvals-f}]
		In this case the check inserted by \fun{dynamicTypechecks}{} ensures the \trg{\surd} is generated.
		\item[\Cref{tr:emulvals-bfail}]
		In this case the check inserted by \fun{dynamicTypechecks}{} ensures the \trg{\surd} is generated.
		\item[\Cref{tr:emulvals-externold-f}]
		In this case if the object id is a non-existant id, then \fun{loadObjects}{} will call lookup on \mc{T} that will abort, ensuring the \trg{\surd} is generated
		Otherwise, if the object id is ill-typed, \fun{dynamicTypechecks}{} calls \fun{testObject}{}, the check there ensures the \trg{\surd} is generated.
		\item[\Cref{tr:emulvals-intern-f}]
		In this case if the object id is a non-existant id, then \fun{loadObjects}{} will call lookup on \mc{T} that will abort, ensuring the \trg{\surd} is generated
		Otherwise, if the object id is ill-typed, \fun{dynamicTypechecks}{} calls \fun{testObject}{}, the check there ensures the \trg{\surd} is generated.
		\item[\Cref{tr:emobj2-f}]
		In this case if the object id is a non-existant id, then \fun{loadObjects}{} will call lookup on \mc{T} that will abort, ensuring the \trg{\surd} is generated
		Otherwise, if the object id is ill-typed, \fun{dynamicTypechecks}{} calls \fun{testObject}{}, the check there ensures the \trg{\surd} is generated.
		\item[\Cref{tr:intfors}]
		Because we are only considered memories where no symbols are present, nor passed, this is never triggered.
		\item[\Cref{tr:emulcall-f}]
		In this case, the violation of the \PMA access control policy ensures the \trg{\surd} is generated.
		\item[\Cref{tr:emulcall-f}] 
		In this case, the return jump to (\asm{r_0},\asm{r_5}) will be halted by the \PMA.
		\item[\Cref{tr:emulcall-fwcall-w1}]
		In this case the check by \fun{forwardCall}{} that the calling id is not the last id is triggered.
		\item[\Cref{tr:emulcall-fwcall-w2}]
		In this case the check by \fun{forwardCall}{} that the forwarding module id is not \trg{1} is triggered.
		\item[\Cref{tr:emulcall-fwret-w1}] 
		In this case the check by \fun{forwardReturn}{} that the returning id is not the last caller's one is triggered
		\item[\Cref{tr:emulcall-fwret-w2}] 
		In this case the check by \fun{forwardReturn}{} that the stack is empty is triggered.
		\item[\Cref{tr:emulcall-testobj-w}] 
		In this case the check by \fun{testObj}{} that the object to be registered is not fresh is triggered.
		\item[\Cref{tr:emulcall-regobj-w}] 
		In this case the check by \fun{registerObj}{} that the object to be tested exists is triggered.
	\end{description}
	Having considered all the cases when a \fail\ is generated, this case holds.
\end{description}
\end{proofsketch}

Proof of \Cref{thm:algocorr}.
\begin{proofsketch}
We need to prove that:
\begin{description}
\item[\skel{\src{C_1},\src{C_2}}] is correct and it produces well-typed code.
	
	This can be verified by inspecting the \skel{\cdot} definition in \Cref{sec:skel}.
\item[\emul{\trg{\OB{\alpha}},\trg{t}}] is correct, it produces well-typed code that correctly emulates all actions in \trg{\OB{\alpha}}.

	This proof proceeds by induction on the length of \trg{\OB{\alpha}}.
	The base case in void, so only the inductive case is presented.
	
	Assume $\trg{\OB{\alpha}} = \trg{\alpha_1},\cdots,\trg{\alpha_{k-1},\trg{\alpha_k}}$.
	Denote $\trg{\alpha_1},\cdots,\trg{\alpha_{k-1}}$ with \trg{\OB{\alpha}'}.
	The inductive hypothesis tells us that \emul{\trg{\OB{\alpha'}},\trg{t}} is correct, it produces well-typed code that correctly emulates all actions in \trg{\OB{\alpha'}}

	We need to prove that the emulation of \trg{\alpha_k} is correct.
	Formally, this is captured by: $\emul{\trg{\alpha_k}, M_k,V_k+V_{k-1}, \OB{\trg{id}_k}, \OB{\src{m}_k}, \OB{\src{t}_k, i_k}} = \src{A_k}, V'_k, \OB{\trg{id}_k}, i_k, \OB{\src{m}_k}, \OB{\src{t}_k}$.

	The proof proceeds by case analysis on \trg{\alpha_k}.
	\begin{description}
	\item[\trg{\cl{\trg{a}\ \trg{w_1,\cdots,v_n}}}] 

	\Cref{tr:emulcall} ensures that the call is emulated by a well-typed method call to the correct method.
	The correctness of the method is given by \Cref{lem:methemulcorr} while the correctness of the arguments and the current object is given by \Cref{lem:valemulcorr}.
	\item[\trg{\cb{\trg{a}\ \trg{w_1,\cdots,v_n}}}]

	\Cref{tr:emulcb} ensures that the state of \lst{oc} is updated correctly.
	\item[\trg{\rb{\trg{a}\ \trg{w},\trg{w'}}}]
	
	\Cref{tr:emulretback} ensures that this is emulated by returning the right value.
	The correctness of returned value is given by \Cref{lem:valemulcorr}.

	\item[\trg{\rt{\trg{a}\ \trg{w},\trg{w'}}}]

	\Cref{tr:emulret} ensures that the state of \lst{oc} is updated correctly.
	\item[\trg{\surd}] in this case the code is empty, as produced by \Cref{tr:em-f}.
	This is because the preceding trace must have generated a \fail, as guaranteed by \Cref{thm:maintheorem}.
	\end{description}

\item[\diff{\trg{\alpha_1},\trg{\alpha_2},i,M,V, \OB{\src{m}}, \OB{\src{t}}}] is correct, it produces well-typed code and it considers all possible cases where \trg{\alpha_1} and \trg{\alpha_2} can be different.

	This can be verified by inspecting the \diff{\cdot} definition in \Cref{sec:diff}.

	Possible cases of different actions:
	\begin{description}
	\item[Different action]:
		\begin{description}
		\item[call-return] \Cref{tr:diff-ac}
		\item[call-tick] \Cref{tr:diff-ac2}
		\item[return-tick] \Cref{tr:diff-ac3}
		\item[call-nothing] \Cref{tr:diff-ct}
		\item[return-nothing]  \Cref{tr:diff-rtt}
		\item[tick-nothing] \Cref{tr:diff-lt}
		\end{description}
	\item[Same action, different parameter]:
		\begin{description}
		\item[outcall]: \trg{\cb{a\ w_1,\cdots,w_n}}
			\begin{description}
			\item[different address \trg{a}] \Cref{tr:diff-cs}
			\item[different word \asm{w_0}] This is always \trg{1}
			\item[different word \asm{w_1} to \asm{w_4}] \Cref{tr:diff-r0}
			\item[different word \asm{w_5}] This is always 3*\trg{\mc{N}_w}
			\item[different word \asm{w_6}] \Cref{tr:diff-callee}
			\item[different word \asm{w_i}] with $i \in 7 .. n$  \Cref{tr:diff-cs2-p,tr:diff-cs2-i,tr:diff-cs2-e}
			\end{description}
		\item[return]: \trg{\rt{a\ w,w'}}
			\begin{description}
			\item[different address \trg{a}]  \Cref{tr:diff-ra}
			\item[different id \asm{id}] This is always \trg{1}
			\item[different word \asm{w}]  \Cref{tr:diff-rs,tr:diff-rsi,tr:diff-rse}
			\end{description}
		\end{description}
	\end{description}
\item[$\src{\mc{C}}\compat\src{\mc{C}_1}$ and $\src{\mc{C}}\compat\src{\mc{C}_2}$.]

	$\src{\mc{C}_1}$ and $\src{\mc{C}_2}$ have the same \lstb{import} requirements, otherwise it's trivial to differentiate them.
	\Cref{tr:emulate} and the definition of \skel{\cdot} ensure that all required interfaces are implemented in \src{\mc{C}} as well as all objects, so this points holds.
\end{description}
\end{proofsketch}

Proof of \Cref{thm:compaimsecure}.
\begin{proof}
Trace semantics deals with \emph{sets} of traces, while the algorithm inputs \emph{single} traces.
Moreover, these single traces must be the same up to a !-decorated action.
The two different single traces are obtained as follows.
Since $\compaim{\src{\mc{C}_1}}\nteqaim\compaim{\src{\mc{C}_2}}$, we have that $\taim{\compaim{\src{\mc{C}_1}}}\neq\taim{\compaim{\src{\mc{C}_2}}}$.
So there exists a trace \trg{\OB{\alpha}} that belongs to either only $\taim{\compaim{\src{\mc{C}_1}}}$ or only to $\taim{\compaim{\src{\mc{C}_2}}}$ but not to both. 
Assume wlog that $\trg{\OB{\alpha}}\in\taim{\compaim{\src{\mc{C}_1}}}$.
The trace \trg{\OB{\alpha}} can be split in two parts \trg{\OB{\alpha_s}} and \trg{\alpha_d} such that \trg{\OB{\alpha}}=\trg{\OB{\alpha_s}}\trg{\alpha_d} and such that \trg{\OB{\alpha_s}} is the longest prefix of all traces of \trg{\OB{\alpha_2}}.
Let $i$ be the index of \trg{\alpha_d} in \trg{\OB{\alpha}}.
So, there exists a trace $\trg{\OB{\alpha'}}\in\taim{\compaim{\src{\mc{C}_2}}}$ that can be split in two parts \trg{\OB{\alpha_s}} and \trg{\alpha_d'} such that \trg{\OB{\alpha'}}=\trg{\OB{\alpha_s}}\trg{\alpha_d'} and $\trg{\alpha_d}\neq\trg{\alpha_d'}$.
Trace \trg{\OB{\alpha'}} always exists, it could be an empty trace, it could be composed by an empty \trg{\OB{\alpha_s}} and, possibly, by an empty \trg{\alpha_d'}.

\Cref{thm:algocorr} tells us that we can use \algo{\cdot} to get the component that differentiates between \src{\mc{C}_1} and \src{\mc{C}_2}.
\end{proof}

Define $\src{\mc{C}}[\src{\mc{C}_1}]\divrjem\niff\src{\mc{C}}[\src{\mc{C}_2}]\divrjem$ as $\src{\mc{C}}[\src{\mc{C}_1}]\divrjem\wedge\src{\mc{C}}[\src{\mc{C}_2}]\termjem\text{ or }\src{\mc{C}}[\src{\mc{C}_2}]\divrjem\wedge\src{\mc{C}}[\src{\mc{C}_1}]\termjem$.

Proof of \Cref{thm:compaimisfa}.
\begin{proof}
The equivalence is split into two subgoals (directions).
Apply \Cref{thm:compaimcorr} to fulfill direction $\Leftarrow$.
The direction $\Rightarrow$ is restated using \Cref{thm:fatracesaim} into: $\forall\src{\mc{C}_1},\src{\mc{C}_2},\src{\mc{C}_1}\ceqjem\src{\mc{C}_2}\Rightarrow\compaim{\src{\mc{C}_1}}\teqaim\compaim{\src{\mc{C}_2}}$. 
The contrapositive of this direction is: $\forall\src{\mc{C}_1},\src{\mc{C}_2},\compaim{\src{\mc{C}_1}}\nteqaim\compaim{\src{\mc{C}_2}}\Rightarrow\src{\mc{C}_1}\nceqjem\src{\mc{C}_2}$.
Apply \Cref{thm:compaimsecure} to prove this direction.
\end{proof}

\begin{corollary}[Modular full-abstraction for compiled components]\label{thm:compaimhorizcomp-compiled}
$\forall\src{\mc{C}_1},\src{\mc{C}_2},\src{\mc{C}_3},\src{\mc{C}_4}.$ $\src{\mc{C}_1;\mc{C}_2}\ceqjem\src{\mc{C}_3;\mc{C}_4}\iff$ $\mylink{\compaim{\src{\mc{C}_1}},\compaim{\src{\mc{C}_2}}}\ceqaim\mylink{\compaim{\src{\mc{C}_3}},\compaim{\src{\mc{C}_4}}}$.
\end{corollary}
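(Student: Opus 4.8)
The plan is to derive this corollary directly from the full-abstraction theorem (\Cref{thm:compaimisfa}) by exploiting the recursive definition of the compiler. Recall from \Cref{sec:seccomp} that $\compaim{\src{C_1};\cdots;\src{C_n}} = \mylink{\compaim{\src{C_1}},\cdots,\compaim{\src{C_n}}}$; the crux is therefore a \emph{linker-coherence} lemma stating that applying \compaim{\cdot} to a concatenated component agrees with separately compiling and then linking. Concretely, I would first prove
\begin{align*}
	\mylink{\compaim{\src{\mc{C}_1}},\compaim{\src{\mc{C}_2}}} = \compaim{\src{\mc{C}_1};\src{\mc{C}_2}},
\end{align*}
where the equality is meant up to renaming of module ids (which preserves \ceqaim, since all jumps are computed relative to the descriptors in \trg{\OB{s}}). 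Once this is available, the corollary is almost immediate: instantiate \Cref{thm:compaimisfa} at the two \emph{whole} components $\src{\mc{C}_1};\src{\mc{C}_2}$ and $\src{\mc{C}_3};\src{\mc{C}_4}$ to obtain $\src{\mc{C}_1};\src{\mc{C}_2}\ceqjem\src{\mc{C}_3};\src{\mc{C}_4}\iff\compaim{\src{\mc{C}_1};\src{\mc{C}_2}}\ceqaim\compaim{\src{\mc{C}_3};\src{\mc{C}_4}}$, and then rewrite both \aim{} sides using the coherence lemma to reach the stated biconditional.

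To establish the coherence lemma I would unfold both sides to a flat \mylink{\cdot} over all per-class compiled modules and show that the two groupings coincide. This reduces to proving that the $\mathbin{\trg{\uplus}}$ operator of \Cref{sec:mylink} is associative and commutative, and that its two side effects are grouping-independent: (i) the deduplication of \sys{} (only one copy of the trusted module survives, and since \sys{} is always compiled to \trg{id}~\trg{1} with a fixed layout, re-association keeps exactly one identical copy), and (ii) the initialisation of the global store \mc{G} with the exported-object bindings of every \trg{M_i}, which must be shown to be order-independent because \mc{G} is a set of bindings keyed on distinct (unforgeable) object ids, and likewise that the call stack \mc{S} is initialised empty regardless of grouping. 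The underlying \aim{}-level joining $\trg{+}$ is already commutative/associative whenever the memories and descriptors agree (\Cref{sec:aimmodules}), so the only genuinely new content is that the \sys{}-specific bookkeeping of $\mathbin{\trg{\uplus}}$ respects these algebraic laws.

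The main obstacle I expect is precisely this coherence lemma rather than the appeal to full abstraction: one has to check that linking is well defined and invariant under re-association in the presence of the shared trusted module \sys{}, i.e.\ that merging $\compaim{\src{\mc{C}_1}}$ (which itself already contains a folded-in \sys{}) with $\compaim{\src{\mc{C}_2}}$ (also containing \sys{}) yields the same result as compiling $\src{\mc{C}_1};\src{\mc{C}_2}$ in one pass, where \sys{} is introduced once and \mc{G} aggregates all static objects. A subtle point to verify is that module-id allocation does not clash across the separately compiled pieces and that, after the canonical renaming, the resulting programs are genuinely \ceqaim-equivalent; I would discharge this by noting that \compaim{\cdot} assigns a fresh \trg{id} per class and that \mylink{\cdot} requires descriptor disjointness, so any two orderings differ only by a bijective relabelling of ids, under which contextual equivalence is stable. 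With coherence in hand, no new trace-semantic or back-translation argument is needed, so the corollary follows purely by rewriting and one instance of \Cref{thm:compaimisfa}.
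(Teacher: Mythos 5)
Your proposal is correct and follows essentially the same route as the paper: the paper likewise instantiates \Cref{thm:compaimisfa} at the concatenated components $\src{\mc{C}_1};\src{\mc{C}_2}$ and $\src{\mc{C}_3};\src{\mc{C}_4}$ and then identifies $\compaim{\src{\mc{C}_1};\src{\mc{C}_2}}$ with $\mylink{\compaim{\src{\mc{C}_1}},\compaim{\src{\mc{C}_2}}}$ via the recursive definition of \compaim{\cdot}. The only difference is one of rigour: what you isolate as an explicit linker-coherence lemma (associativity of $\mathbin{\trg{\uplus}}$, \sys{} deduplication, order-independence of the \mc{G} initialisation, id-relabelling invariance) the paper dispatches with the phrase ``by definition of \compaim{\cdot} and modularity,'' so your version is a sound elaboration of the same argument rather than a different one.
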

\begin{proof}
\Cref{thm:compaimhorizcomp} follows from \Cref{thm:compaimisfa} and modularity of \compaim{\cdot}.

Let $\src{\mc{C}_1} = \src{C_1^1},\cdots,\src{C_1^n}$, 
$\src{\mc{C}_2} = \src{C_2^1},\cdots,\src{C_2^m}$,
$\src{\mc{C}_3} = \src{C_3^1},\cdots,\src{C_3^l}$, 
$\src{\mc{C}_4} = \src{C_4^1},\cdots,\src{C_4^k}$.

By \Cref{def:ceqjem} we have that $n+m$ = $l+k$.

Let $\src{\mc{C}} = \src{C_1^1},\cdots,\src{C_1^n}$, $\src{C_2^1},\cdots,\src{C_2^m}$ and 
$\src{\mc{C}'} = \src{C_3^1},\cdots,\src{C_3^l}$, $\src{C_4^1},\cdots,\src{C_4^k}$.

\Cref{thm:compaimisfa} yields that, $\src{\mc{C}}\ceqjem\src{\mc{C}'} \iff \compaim{\src{\mc{C}}}\ceqaim\compaim{\src{\mc{C}'}}$.

By definition of \compaim{\cdot}, we have $\mylink{\compaim{\src{C_1^1}},\cdots,\compaim{\src{C_1^n}},\compaim{\src{C_2^1}},\cdots,\compaim{\src{C_2^m}}}\ceqaim$ $\mylink{\compaim{\src{C_3^1}},\cdots,\compaim{\src{C_3^l}},\compaim{\src{C_4^1}},\cdots,\compaim{\src{C_4^k}}}$, so the thesis holds.
\end{proof}

\begin{theorem}[Contextual equivalence is congruent with linking]\label{thm:link-pres-ceq}
$\forall \trg{P_1},\trg{P_2},\trg{P_3}.$ $\trg{P_2}\ceqaim\trg{P_3},$ $\mylink{\trg{P_1},\trg{P_2}}\ceqaim\mylink{\trg{P_1},\trg{P_3}}.$
\end{theorem}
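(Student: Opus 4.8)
The plan is to reduce the statement to a single direction of the contextual preorder and then prove that direction by absorbing $\trg{P_1}$ into the surrounding context. By \Cref{def:ceqaim}, the hypothesis $\trg{P_2}\ceqaim\trg{P_3}$ unfolds to the two preorders $\trg{P_2}\preqaim\trg{P_3}$ and $\trg{P_3}\preqaim\trg{P_2}$, and the goal $\mylink{\trg{P_1},\trg{P_2}}\ceqaim\mylink{\trg{P_1},\trg{P_3}}$ likewise splits into two symmetric preorder goals. Hence it suffices to prove the single implication $\trg{P_2}\preqaim\trg{P_3}\Rightarrow\mylink{\trg{P_1},\trg{P_2}}\preqaim\mylink{\trg{P_1},\trg{P_3}}$; the converse follows by exchanging the roles of $\trg{P_2}$ and $\trg{P_3}$.

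First I would fix an arbitrary \aim context $\trg{\mb{P}}$ and oracle $\trg{h_1}$ with $\trg{\mb{P}}[\mylink{\trg{P_1},\trg{P_2}}],\trg{h_1}\divraim$, as demanded by \Cref{def:preaim}, and construct a new context $\trg{\mb{Q}}\triangleq\trg{\mb{P}}[\mylink{\trg{P_1},[\cdot]}]$ that links $\trg{P_1}$ in first and keeps the hole for the $\trg{P_2}$/$\trg{P_3}$ slot. The heart of the argument is an \emph{absorption lemma}: for every $\trg{P}$ compatible with $\trg{P_1}$ and $\trg{\mb{P}}$, the whole programs $\trg{\mb{P}}[\mylink{\trg{P_1},\trg{P}}]$ and $\trg{\mb{Q}}[\trg{P}]$ are identical (same memory, module descriptors and symbol table after joining). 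This is essentially associativity and commutativity of the linking operator $\trg{\uplus}$ of \Cref{sec:mylink} together with the compatibility side-conditions, which I would discharge by inspecting its definition.

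Granting the lemma, the argument chains cleanly. Instantiating it with $\trg{P}\equiv\trg{P_2}$ rewrites the assumed divergence into $\trg{\mb{Q}}[\trg{P_2}],\trg{h_1}\divraim$. Applying the hypothesis $\trg{P_2}\preqaim\trg{P_3}$ to the context $\trg{\mb{Q}}$ and oracle $\trg{h_1}$ yields an oracle $\trg{h_2}$ with $\trg{\mb{Q}}[\trg{P_3}],\trg{h_2}\divraim$. A second instantiation of the lemma, now with $\trg{P}\equiv\trg{P_3}$, rewrites this back to $\trg{\mb{P}}[\mylink{\trg{P_1},\trg{P_3}}],\trg{h_2}\divraim$, which is exactly the witness required by \Cref{def:preaim}. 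All oracle manipulation is confined to the use of the preorder hypothesis: the two applications of the absorption lemma are genuine program equalities, so divergence under a fixed oracle transfers verbatim.

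The main obstacle I anticipate is establishing the absorption lemma, i.e.\ that regrouping the linking of three programs leaves the resulting whole program unchanged. The delicate points are the special behaviour of $\trg{\uplus}$: it must retain exactly one copy of the trusted \sys module and merge the global stores $\mc{G}$ independently of the bracketing, and the symbol-table merging of \Cref{sec:aimmodules} (requirement fulfilment and the induced substitutions $\trg{\OB{\eta}}$) must be order-insensitive. I would also need to check that the compatibility relation $\compat$ transports across the regrouping, in particular that the requirements of $\trg{P_1}$ fulfilled against $\trg{P_2}$ coincide with those fulfilled against $\trg{P_3}$; this holds because contextually equivalent components necessarily expose the same import/export interface. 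Once these bookkeeping facts are in place, the divergence-preservation argument above is immediate.
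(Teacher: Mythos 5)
Your proposal is correct, and it is in fact more rigorous than the paper's own argument. The paper's proof is a three-line sketch: it observes that $\trg{P_2}\ceqaim\trg{P_3}$ forces the two components to expose the same modules, dispatches the degenerate case where $\trg{P_1}$ overlaps a module of $\trg{P_2}$ (or $\trg{P_3}$) --- there both links produce an empty memory --- and then simply asserts that in the non-overlapping case linking ``simply merges the modules, which trivially preserves $\ceqaim$.'' Your absorption lemma is precisely the content hidden in that word ``trivially'': merging $\trg{P_1}$ into the component is the same as merging it into the surrounding context, so a context/oracle pair witnessing divergence of $\trg{\mb{P}}[\mylink{\trg{P_1},\trg{P_2}}]$ becomes, via $\trg{\mb{Q}}=\trg{\mb{P}}[\mylink{\trg{P_1},[\cdot]}]$, a witness for $\trg{P_2}$ alone, and the preorder hypothesis of \Cref{def:preaim} then returns the oracle needed for $\trg{P_3}$. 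What your route buys is an actual reduction to the definitions, with the oracle bookkeeping made explicit and the proof obligation isolated in one clearly stated lemma; what it costs is exactly the obligation you identify --- associativity/commutativity of $\trg{\uplus}$ including \sys deduplication, merging of the global store $\mc{G}$, and order-insensitivity of the symbol-table substitutions $\trg{\OB{\eta}}$ --- which the paper does not prove either, so nothing is lost relative to the original. One small point to fold in from the paper's sketch: your absorption lemma is stated only for $\trg{P}$ compatible with $\trg{P_1}$ and $\trg{\mb{P}}$, so the incompatible case (where, by the same-interface observation, both $\mylink{\trg{P_1},\trg{P_2}}$ and $\mylink{\trg{P_1},\trg{P_3}}$ collapse to the same empty program and equivalence holds vacuously) should be treated separately, as the paper does in one sentence.
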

\begin{proofsketch}
$\trg{P_2}\ceqaim\trg{P_3}$ tells us that \trg{P_2} and \trg{P_3} have the same modules.

If there is any overlapping module (except \sys) between \trg{P_1} and \trg{P_2} or \trg{P_3}, the result will be an empty memory.

If there is no overlap, then linking simply merges the modules, which trivially preserves $\ceqaim$.
\end{proofsketch}

Proof of \Cref{thm:compaimhorizcomp}.
\begin{proof}
By \Cref{thm:compaimhorizcomp-compiled}, we have that $\mylink{\compaim{\src{\mc{C}_1}},\compaim{\src{\mc{C}_2}}}\ceqaim\mylink{\compaim{\src{\mc{C}_3}},\compaim{\src{\mc{C}_4}}}$.

By \Cref{thm:link-pres-ceq} we have that $\mylink{\compaim{\src{\mc{C}_1}},\compaim{\src{\mc{C}_2}}}\ceqaim\mylink{\compaim{\src{\mc{C}_1}},\trg{P}}$.

By \Cref{thm:link-pres-ceq} we have that $\mylink{\compaim{\src{\mc{C}_3}},\compaim{\src{\mc{C}_4}}}\ceqaim\mylink{\compaim{\src{\mc{C}_3}},\trg{P'}}$.

By transitivity of contextual equivalence, the thesis holds.
\end{proof}

Proof of \Cref{thm:compaimsecure} 
\begin{proof}
$\compaim{\src{\mc{C}_1}}\nteqaim\compaim{\src{\mc{C}_2}}$ means that $\taim{\compaim{\src{\mc{C}_1}}}\neq\taim{\compaim{\src{\mc{C}_2}}}$, so there exists a trace $\trg{\OB{\alpha}\alpha_1}\in\taim{\compaim{\src{\mc{C}_1}}}$ but $\trg{\OB{\alpha}\alpha_1}\notin\taim{\compaim{\src{\mc{C}_2}}}$.

Let \trg{\OB{\alpha}} be the longest odd-length prefix of traces from \taim{\compaim{\src{\mc{C}_1}}} and \taim{\compaim{\src{\mc{C}_2}}} such that $\trg{\OB{\alpha}\alpha_1}\in\taim{\compaim{\src{\mc{C}_1}}}$ and $\trg{\OB{\alpha}\alpha_2}\in\taim{\compaim{\src{\mc{C}_2}}}$ and $\trg{\alpha_1}\neq\trg{\alpha_2}$.
The index of $\trg{\alpha_1}$ and $\trg{\alpha_2}$ is even, so they are generated by \compaim{\src{\mc{C}_1}} and \compaim{\src{\mc{C}_2}} respectively.
\Cref{thm:algocorr} tells us that \algo{\src{\mc{C}_1},\src{\mc{C}_2},\trg{\OB{\alpha}\alpha_1},\trg{\OB{\alpha}\alpha_2}} yields the context that differentiates between \src{\mc{C}_1} and \src{\mc{C}_2}.
\end{proof}

\newpage
\bibliographystyle{plain}
\bibliography{./biblio}

\begin{thebibliography}{10}

\bibitem{abadiFa}
Mart\'{\i}n Abadi.
\newblock Protection in programming-language translations.
\newblock In {\em Secure Internet programming}, pages 19--34. Springer-Verlag,
  London, UK, 1999.

\bibitem{scoojoin}
Mart\'{\i}n Abadi, C{\'e}dric Fournet, and Georges Gonthier.
\newblock Secure implementation of channel abstractions.
\newblock {\em Information and Computation}, 174(1):37--83.

\bibitem{oracle}
Mart\'{\i}n Abadi, J{\'e}r{\'e}my Planul, and Gordon Plotkin.
\newblock Layout randomization and nondeterminism.
\newblock {\em Electr. Notes Theo. Comp. Sci.}, 298:29--50, 2013.

\bibitem{abadiLayout}
Mart\'{\i}n Abadi and Gordon~D. Plotkin.
\newblock On protection by layout randomization.
\newblock {\em ACM TISSEC}, 15(2):8:1--8:29, July 2012.

\bibitem{amalverifcomp}
Amal Ahmed.
\newblock {Verified Compilers for a Multi-Language World}.
\newblock In {\em SNAPL 2015}, volume~32 of {\em (LIPIcs)}, pages 15--31, 2015.

\bibitem{AhmedFa}
Amal Ahmed and Matthias Blume.
\newblock Typed closure conversion preserves observational equivalence.
\newblock {\em SIGPLAN Not.}, 43(9):157--168, September 2008.

\bibitem{ahmedCPS}
Amal Ahmed and Matthias Blume.
\newblock An equivalence-preserving {CPS} translation via multi-language
  semantics.
\newblock {\em SIGPLAN Not.}, 46(9):431--444, 2011.

\bibitem{intelattestation}
Ittai Anati, Shay Gueron, S~Johnson, and V~Scarlata.
\newblock {Innovative Technology for {CPU} Based Attestation and Sealing}.
\newblock In {\em HASP'13}, volume~13, 2013.

\bibitem{crashsafemachine}
Arthur Azevedo~de Amorim, Nathan Collins, Andr{\'e} DeHon, Delphine Demange,
  C\u{a}t\u{a}lin Hri\c{t}cu, David Pichardie, Benjamin~C. Pierce, Randy
  Pollack, and Andrew Tolmach.
\newblock A verified information-flow architecture.
\newblock {\em SIGPLAN Not.}, 49(1):165--178.

\bibitem{balto}
Ioannis~G. Baltopoulos and Andrew~D. Gordon.
\newblock Secure compilation of a multi-tier web language.
\newblock In {\em TLDI '09}, pages 27--38. ACM, 2009.

\bibitem{barthe}
Gilles Barthe, Tamara Rezk, Alejandro Russo, and Andrei Sabelfeld.
\newblock Security of multithreaded programs by compilation.
\newblock {\em ACM TISSEC}, 2010.

\bibitem{bistcc}
Nick Benton and Chung-Kil Hur.
\newblock Biorthogonality, step-indexing and compiler correctness.
\newblock {\em SIGPLAN Not.}, 44(9):97--108, August 2009.

\bibitem{nonintfree}
William~J. Bowman and Amal Ahmed.
\newblock Noninterference for free.
\newblock In {\em ICFP '15}, New York, NY, USA, 2015. ACM.

\bibitem{bugliesi}
Michele Bugliesi and Marco Giunti.
\newblock Secure implementations of typed channel abstractions.
\newblock In {\em POPL '07}, pages 251--262, 2007.

\bibitem{corin}
Ricardo Corin, Pierre-Malo Deni{\'e}lou, C{\'e}dric Fournet, Karthikeyan
  Bhargavan, and James Leifer.
\newblock A secure compiler for session abstractions.
\newblock {\em Journal of Computer Security}, 16(5):573--636, 2008.

\bibitem{intelsgxisa}
Victor Costan and Srinivas Devadas.
\newblock {Intel SGX Explained}.

\bibitem{faSemUML}
Frank~S. de~Boer, Marcello~M. Bonsangue, Martin Steffen, and Erika
  \'{A}brah\'{a}m.
\newblock A fully abstract semantics for {UML} components.
\newblock In {\em FMCO'04}.

\bibitem{popldomi}
Dominique Devriese, Marco Patrignani, and Frank Piessens.
\newblock Fully-abstract compilation by approximate back-translation.
\newblock In {\em {POPL} 2016}, pages 164--177, 2016.

\bibitem{dolevyao}
D.~Dolev and A.~C. Yao.
\newblock On the security of public key protocols.
\newblock In {\em SFCS}, pages 350--357, Washington, DC, USA, 1981. IEEE
  Computer Society.

\bibitem{fstar2js}
Cedric Fournet, Nikhil Swamy, Juan Chen, Pierre-Evariste Dagand, Pierre-Yves
  Strub, and Benjamin Livshits.
\newblock Fully abstract compilation to {JavaScript}.
\newblock {\em SIGPLAN Not.}, 48(1):371--384, January 2013.

\bibitem{faEHM}
Daniele Gorla and Uwe Nestman.
\newblock Full abstraction for expressiveness: History, myths and facts.
\newblock {\em Math Struct Comp Science}, 2014.

\bibitem{rdrand}
Mike Hamburg, Paul Kocher, and Mark Marson.
\newblock {Analysis of Intel's Ivy Bridge} digital random number generator.
\newblock {\em Cryptography Research, Inc.}, 2012.

\bibitem{KRLMLa}
Chung-Kil Hur and Derek Dreyer.
\newblock A {Kripke} logical relation between {ML} and assembly.
\newblock {\em SIGPLAN Not.}, 46(1):133--146, January 2011.

\bibitem{Jagadeesan}
Radha Jagadeesan, Corin Pitcher, Julian Rathke, and James Riely.
\newblock Local memory via layout randomization.
\newblock In {\em CSF '11}, pages 161--174. IEEE, 2011.

\bibitem{javaJr}
Alan Jeffrey and Julian Rathke.
\newblock {Java Jr.}: fully abstract trace semantics for a core {J}ava
  language.
\newblock In {\em ESOP'05}, pages 423--438, 2005.

\bibitem{jannis}
Yanis Juglaret and Catalin Hritcu.
\newblock {Secure compilation using micro-policies (Extended Abstract)}.
\newblock In {\em FCS 2015}, 2015.

\bibitem{catalin}
Yannis Juglaret, C\u{a}t\u{a}lin Hri\c{t}cu, Arthur {Azevedo de Amorim}, and
  Benjamin~C. Pierce.
\newblock Beyond good and evil: Formalizing the security guarantees of
  compartmentalizing compilation.
\newblock In {\em 29th IEEE Symposium on Computer Security Foundations (CSF)}.
  IEEE Computer Society Press, July 2016.
\newblock To appear.

\bibitem{leroy}
Xavier Leroy.
\newblock A formally verified compiler back-end.
\newblock {\em J. Autom. Reason.}, 43(4):363--446, December 2009.

\bibitem{trustVisor}
Jonathan~M. McCune, Yanlin Li, Ning Qu, Zongwei Zhou, Anupam Datta, Virgil
  Gligor, and Adrian Perrig.
\newblock Trustvisor: Efficient {TCB} reduction and attestation.
\newblock In {\em SP '10}, pages 143--158. IEEE, 2010.

\bibitem{intel}
Frank McKeen~et. al.
\newblock Innovative instructions and software model for isolated execution.
\newblock In {\em HASP '13}, 2013.

\bibitem{sancus}
Job Noorman~et. al.
\newblock Sancus: Low-cost trustworthy extensible networked devices with a
  zero-software {T}rusted {C}omputing {B}ase.
\newblock In {\em {USENIX}'13'}.

\bibitem{gcFA}
Joachim Parrow.
\newblock General conditions for full abstraction.
\newblock {\em Math. Struc. Comp. Sci.}

\bibitem{exprPA}
Joachim Parrow.
\newblock Expressiveness of process algebras.
\newblock {\em Electronic Notes in Theoretical Computer Science}, 209(0):173 --
  186, 2008.

\bibitem{tome-secure-compilation}
{Marco} {Patrignani}.
\newblock {\em {The Tome of Secure Compilation: Fully Abstract Compilation to
  Protected Modules Architectures}}.
\newblock PhD thesis, KU Leuven, Leuven, Belgium, May 2015.

\bibitem{scoo-j}
Marco Patrignani, Pieter Agten, Raoul Strackx, Bart Jacobs, Dave Clarke, and
  Frank Piessens.
\newblock {Secure Compilation to Protected Module Architectures}.
\newblock {\em ACM Trans. Program. Lang. Syst.}, 37(2):6:1--6:50, April 2015.

\bibitem{llfatr-j}
Marco Patrignani and Dave Clarke.
\newblock Fully abstract trace semantics for protected module architectures.
\newblock {\em Elsevier COMLAN}, 42(0):22 -- 45, 2015.

\bibitem{lcfConsidered}
Gordon~D. Plotkin.
\newblock {LCF} considered as a programming language.
\newblock {\em Theoretical Computer Science}, 5:223--255, 1977.

\bibitem{nizza}
Lenin Singaravelu, Calton Pu, Hermann H\"{a}rtig, and Christian Helmuth.
\newblock Reducing {TCB} complexity for security-sensitive applications: three
  case studies.
\newblock {\em SIGOPS Oper. Syst. Rev.}, 40(4):161--174, April 2006.

\bibitem{fidesArch}
Raoul Strackx and Frank Piessens.
\newblock Fides: Selectively hardening software application components against
  kernel-level or process-level malware.
\newblock In {\em CCS 2012}.

\bibitem{Sumii:2004:BDS:964001.964015}
Eijiro Sumii and Benjamin~C. Pierce.
\newblock A bisimulation for dynamic sealing.
\newblock In {\em Principles of Programming Languages}, pages 161--172. ACM,
  2004.

\bibitem{embed}
Nikhil Swamy, Cedric Fournet, Aseem Rastogi, Karthikeyan Bhargavan, Juan Chen,
  Pierre-Yves Strub, and Gavin Bierman.
\newblock Gradual typing embedded securely in javascript.
\newblock {\em SIGPLAN Not.}, 49(1):425--437, January 2014.

\bibitem{capsicum}
Robert N.~M. Watson, Jonathan Anderson, Ben Laurie, and Kris Kennaway.
\newblock Capsicum: Practical capabilities for {UNIX}.
\newblock In {\em USENIX}, pages 29--46, 2010.

\bibitem{cheri}
Jonathan Woodruff~et. al.
\newblock The {CHERI} capability model: Revisiting {RISC} in an age of risk.
\newblock In {\em ISCA '14}, pages 457--468, 2014.

\end{thebibliography}

\end{document}